\newenvironment{prf}{\begin{proof}}{\end{proof}}
\newenvironment{axioms}{\begin{enumerate}[labelsep=8pt,leftmargin=*,itemindent=2em,labelindent=1.0\parindent]}{\end{enumerate}}
\newcommand{\df}[1]{\emph{#1}}
\newcommand{\tjcol}{blue}
\newcommand{\dmcol}{purple}
\newcommand{\nscol}{orange}
\newenvironment{tjblock}{\par\medskip\color{\tjcol}}{\medskip}
\newenvironment{dmblock}{\par\medskip\color{\dmcol}}{\medskip}
\newenvironment{needswork}{\par\medskip\color{\nscol}}{\medskip}
\def\namedlabel#1#2{\begingroup%
    #2%
    \def\@currentlabel{#2}%
    \phantomsection\label{#1}\endgroup
}
\def\labelhere#1#2{\begingroup%
    \def\@currentlabel{#2}%
    \phantomsection\label{#1}\endgroup
}
\newcommand{\Fraisse}{Fra\"{i}\-ss\'{e}}
\newcommand{\ef}{Ehren\-feucht--\Fraisse}
\newcommand\bimorph{heteromorphism}
\newcommand\Bimorph{Heteromorphism}
\newcommand{\ol}{\overline}
\newcommand{\wt}{\widetilde}
\newcommand\ee[1]{\enspace #1 \enspace}
\newcommand\mm[1]{\, #1 \,}
\newcommand{\tq}[1]{\mbox{#1}\quad}
\newcommand\qtq[1]{\quad\mbox{#1}\quad}
\newcommand\ete[1]{\ee{\mbox{#1}}}
\newcommand{\sue}{\subseteq}
\newcommand\obj{\mathrm{obj}}
\newcommand\id{\mathrm{id}}
\newcommand{\embed}{\rightarrowtail}
    \newcommand{\emb}{\embed}
\newcommand{\quot}{\twoheadrightarrow}
\renewcommand\vec[1]{\vv{#1}}
\newcommand\veci[1]{\vec{#1_i}\mkern1mu}
\newcommand\opveci[1]{\op(\vec{#1_i})}
\newcommand\opvec[1]{\op(\vec{#1})}
\newcommand\emcm{\mathbin{\rotatebox[origin=c]{270}{$\embed$}}}
\newcommand\Lo{\mathcal L}
\newcommand\parrow[1]{\Rrightarrow_{\exists^+ #1}}
\newcommand\earrow[1]{\Rrightarrow_{\exists #1}}
\newcommand\larrow[1]{\Rrightarrow_{#1}}
\newcommand\pequiv[1]{\equiv_{\exists^+ #1}}
\newcommand\cequiv[1]{\equiv_{\# #1}}
\newcommand\sequiv[1]{\equiv_{#1}}
\newcommand\lequiv[1]{\equiv_{#1}}
\newcommand{\fgST}[1]{#1\vert_{\tau}}
\newcommand{\merge}[1]{\stackrel{#1}{\vee}}
\newcommand{\lift}[1]{\widehat{#1}}          %
\newcommand\op{H} %
\newcommand\lop{\lift{\op}}      %
\newcommand{\univ}{\mathrm{u}}
\newcommand\sg{\sigma}
\newcommand\tr{\mathfrak t}
\newcommand\I{I}
\newcommand\trI{\tr^\I}
\newcommand\Con{\texttt{com}}
\newcommand\trCon{\tr^\Con}
\newcommand\sgCon{\sg^\Con}
\newcommand\counit{\varepsilon}
\newcommand{\comonad}[1]{\mathbb{#1}} %
\newcommand\C{\mathbb C} %
\newcommand\D{\mathbb D}
\newcommand{\Ek}{{\comonad{E}_{k}}}
\newcommand{\Pk}{{\comonad{P}_{k}}}
\newcommand{\Mk}{{\comonad{M}_{k}}}
\newcommand{\Cos}{\mathtt{Cos}}
\newcommand{\EM}[1]{\mathsf{EM}(#1)}
\newcommand{\EMF}[1]{F^{#1}}
\newcommand{\EMU}[1]{U^{#1}}
\newcommand{\Klei}[1]{\mathsf{Kl}(#1)}
    \def\Kl{\Klei}
\newcommand{\kirc}{\mathbin{\bullet}}
\newcommand{\klto}[1]{\to_{#1}}
\newcommand{\slice}{\mathbin{\downarrow}}
\newcommand{\cat}[1]{\mathcal{#1}}
\newcommand{\CA}{\cat{A}}
\newcommand{\CB}{\cat{B}}
\newcommand{\CC}{\cat{C}}
\newcommand{\CD}{\cat{D}}
\newcommand{\CS}{\cat{S}}
\newcommand{\CT}{\cat{T}}
\newcommand{\CU}{\cat{U}}
\newcommand{\Rel}{\mathcal R(\sg)}
\newcommand{\Rels}{\mathcal R_*(\sg)}
\newcommand{\R}{\mathcal R}
\newcommand{\Rs}{\mathcal R_*}
\newcommand{\Id}{{I\mkern-1mu d}}
\newcommand{\Paths}{\mathcal{P}}
\newcommand{\Pa}{\Paths}
\newcommand{\Qu}{\mathcal{E}}
\newcommand{\Em}{\mathcal{M}}
\newcommand{\struct}[1]{#1}
\newcommand{\As}{\struct{A}}
\newcommand{\Bs}{\struct{B}}
\newcommand{\Cs}{\struct{C}}
\newcommand{\Ps}{\struct{P}}
\newcommand{\Qs}{\struct{Q}}
\newcommand{\structn}[2]{#1_{#2}}
\newcommand{\pstructn}[3]{(\structn{#1}{#3}, #2_{#3})}
\newcommand{\Asan}[1]{\pstructn{A}{a}{#1}}
\newcommand{\Bsbn}[1]{\pstructn{B}{b}{#1}}
\newcommand{\Ac}{\alpha}
\newcommand{\Bc}{\beta}
\newcommand{\Pc}{\pi}
\newcommand{\Qc}{\rho}
\begin{document}

\title[A categorical account of composition methods in logic]{A categorical account of\texorpdfstring{\\}{} composition methods in logic}

\keywords{Category theory, Comonads, Finite model theory, Feferman-Vaught-Mostowski theorems, Composition methods, Distributive laws}

\titlecomment{{\lsuper*}This is an extended version of the conference paper~\cite{JaklMS23} }
\thanks{Tom\'a\v s Jakl was funded by the EXPRO project 20-31529X awarded by the Czech Science Foundation and was also supported by the Institute of Mathematics, Czech Academy of Sciences (RVO 67985840), Dan Marsden was supported by EPSRC grant EP/T00696X/1: Resources and co-resources: a junction between categorical semantics and descriptive complexity, Nihil Shah was funded by UKRI under UK government's Horizon Europe funding guarantee grant EP/X028259/1.}

\author[T.~Jakl]{Tom\'a\v{s} Jakl \lmcsorcid{0000-0003-1930-4904}}[a]
\author[D.~Marsden]{Dan Marsden \lmcsorcid{0000-0003-0579-0323}}[b]
\author[N.~Shah]{Nihil Shah \lmcsorcid{0000-0003-2844-0828}}[c]
\address{Faculty of Information Technology, Czech Technical University} %
\email{tomas.jakl@cvut.cz}
\address{University of Nottingham} %
\email{dan.marsden@nottingham.ac.uk}
\address{University of Cambridge} %
\email{nas54@cam.ac.uk}

\begin{abstract}
    \noindent
    We present a categorical theory of the composition methods in finite model theory -- a key technique enabling modular reasoning about complex structures by building them out of simpler components.  The crucial results required by the composition methods are Feferman--Vaught--Mostowski (FVM) type theorems, which characterize how logical equivalence behaves under composition and transformation of models.

    Our results are developed by extending the recently introduced game comonad semantics for model comparison games. This level of abstraction allows us to give conditions yielding FVM type results in a uniform way. Our theorems are parametric in the classes of models, logics and operations involved. Furthermore, they naturally account for the existential and positive existential fragments, and extensions with counting quantifiers of these logics. We also reveal surprising connections between FVM type theorems, and classical concepts in the theory of monads.

    We illustrate our methods by recovering many classical theorems of practical interest, including a refinement of a previous result by Dawar, Severini, and Zapata concerning the 3-variable counting logic and cospectrality. To highlight the importance of our techniques being parametric in the logic of interest, we prove a family of FVM theorems for products of structures, uniformly in the logic in question, which cannot be done using specific game arguments.

    This is an extended version of the LiCS 2023 conference paper of the same name.
\end{abstract}

\maketitle

\section{Introduction}
\emph{Composition methods} constitute a family of techniques %
in finite model theory for understanding the logical properties of complex structures~\cite{libkin2004elements}. One works in a modular fashion, building a structure out of simpler components. The larger structure can then be understood in terms of the logical properties of its parts, and how they behave under the operations used in the construction.

The first result of this type was proved by Mostowski~\cite{mostowski1952direct}, who showed that the first-order theory of the product of two structures $A \times B$ is determined by the first-order theories of $A$ and $B$. Later, Feferman and Vaught famously proved a very general result for first-order logic, which included showing that the same holds true for infinite products and infinite disjoint unions of structures of the same type~\cite{feferman1959first}. Since then, many more Feferman--Vaught--Mostowski (FVM) theorems\footnote{The terser Feferman--Vaught type theorem is more common in the literature.} have been shown for various operations, logics and types of structures. These theorems have important applications in the theory of algorithms~\cite{makowsky2004algorithmic}, for example in the development of algorithmic meta-theorems such as Courcelle's theorem~\cite{courcelle2012graph}.

For our purposes, the typical form of an FVM theorem for a fixed $n$-ary operation $H$ and logics $L_1,\ldots,L_{n + 1}$ is as follows.
Given structures $A_1,\dots, A_n$ and $B_1, \dots, B_n$,
\begin{equation}
\label{eq:fvm-general}
\begin{aligned}
    &\tq{For all $i$,} A_i \lequiv{L_i} B_i,
    \\
    &\tq{implies}
    H(A_1,\,\ldots,\,A_n) \lequiv{L_{n + 1}} H(B_1,\,\ldots,\, B_n).
\end{aligned}
\end{equation}
Here $\equiv_L$ denotes equivalence with respect to the logic $L$.
Typically, $n$ is either one or two. For example, writing $\lequiv{FO}$ for equivalence in first-order logic, Mostowski's result is given in the form of~\eqref{eq:fvm-general} as:
\[
    A_1 \lequiv{FO} B_1 \ete{and} A_2 \lequiv{FO} B_2 \ete{implies} A_1 \times A_2 \lequiv{FO} B_1 \times B_2
\]

A key tool in finite model theory for establishing that two structures are logically equivalent is that of a model comparison game.
Examples of such games are the \ef{}~\cite{ehrenfeucht1961application,fraisse1955quelques}, pebbling~\cite{kolaitis1992infinitary} and bisimulation games~\cite{hennessy1980observing}, establishing equivalence in fragments of first order and modal logic.
FVM theorems are typically proven using these games, building a winning strategy for the composite structure out of winning strategies for the components.

Despite their usefulness, working with model comparison games often requires intricate combinatorial arguments that have to be carefully adapted with even the slightest change of problem domain. To mitigate the difficulty with using game arguments directly, finite model theorists introduced higher-level methods such as locality arguments or 0--1 laws to establish model equivalences or inexpressibility.

The recently introduced \emph{game comonad} framework~\cite{AbramskyDW17,abramsky2021relating}
provides a new such tool. This provides a unifying formalism for reasoning about model comparison games, using categorical methods.
The definition of a comonad enables a transfer of results from the formally dual theory of monads, commonly encountered in the categorical semantics of computation and universal algebra \cite{Moggi89,Moggi91,manes2012algebraic}.
Game comonads are designed to encapsulate a particular model comparison game, and are the key abstraction allowing us to give uniform results, whilst avoiding making arguments specific to a particular logic or its corresponding game.

The early work on game comonads focused on capturing many important logics~\cite{AbramskyM21,ConghaileD21,montacute2021pebble}, laying the foundations for further developments. More recent results recover classical theorems from finite model theory, as well as completely new results~\cite{Paine20,DawarJR21,AbramskyMarsden2022,Reggio21poly}. See \cite{Abramsky22survey} for a recent survey and~\cite{AbramskyR21,abramskyreggio2023arboreal} for an axiomatic formulation.

Until now, there has been no account of the composition methods with the game comonad framework.
We close this gap by introducing a novel method for proving FVM theorems within the game comonad approach, giving results uniformly in the classes of structures, operations and logics involved.
Instantiating the abstract results for our example game comonads yields concrete FVM theorems with respect to four typical fragments of first-order logic:
\begin{enumerate}
    \item \emph{the positive existential fragment}, i.e.\ the fragment of first-order logic of formulas not using universal quantification or negations, %
    \item \emph{the existential fragment} i.e.\ the fragment of first-order logic of formulas not using universal quantification, and in which negation can only be applied to atomic formulas.
    \item \emph{counting logic}, i.e.\ the extension of first-order logic with counting quantifiers,
    \item \emph{full} first-order logic.
\end{enumerate}
As is standard in finite model theory~\cite{libkin2004elements,ebbinghaus1999finite}, it is natural to grade these logics by a resource parameter, such as quantifier depth.
Correspondingly, the game comonadic approach to grading is to consider collections of comonads $\C_k$, indexed by a resource parameter $k$.
As a source of concrete examples we shall consider three particular game comonads, $\Ek$, $\Pk$ and $\Mk$.
The pair $\Ek$ and $\Pk$ capture first-order logic, with their resource parameters being quantifier depth and variable count respectively. The comonad $\Mk$ encapsulates modal logic up to modal depth $k$.

Conventionally, proving an FVM theorem involves cleverly combining several winning model-comparison game strategies to form a winning strategy on the composite structures.
In our game comonads approach, we only need to find a collection of morphisms of a specified form to obtain an FVM theorem for the positive existential logic.
This collection of morphisms is a formal witness to a combination of strategies.
Perhaps surprisingly, to witness equivalence in the counting logic, the same collection of morphisms only has to satisfy two additional axioms which coincide with the standard definition of Kleisli law~\cite{manes2007monad}.

Although of practical interest, the FVM theorems in Sections~\ref{s:fvm-positive} and~\ref{s:fvm-counting} are relatively straightforward to prove. We present them in detail for pedagogical reasons to develop the ideas gradually.
The abstract FVM theorem for the full fragment is more technically challenging. This theorem shows that two additional assumptions suffice to extend an FVM theorem from counting logic to the full logic.
We can rephrase these assumptions categorically as showing the operation lifts to a parametric relative right adjoint over an appropriate category.
Surprising connections arise with classical results in monad theory, generalizing the notion of bilinear map and its classifying tensor product from ordinary linear algebra~\cite{kock1971bilinearity,jaklmarsdenshah2022bim}.

Our three running example game comonads $\Ek,\Pk, \Mk$ are introduced in Section~\ref{s:preliminaries}, along with some necessary background on comonads.
Sections~\ref{s:fvm-positive}, \ref{s:fvm-counting} and \ref{s:fvm-standard} all follow the same pattern. We introduce the categorical structure abstracting logical equivalence with respect to the studied logical fragment, establish the corresponding categorical FVM theorem and illustrate this with concrete examples.
The examples in Section \ref{s:fvm-counting} include a new refinement of a result by Dawar, Severini, and Zapata~\cite{dawar2017pebble}, showing that equivalence in 3-variable counting logic without equality and with restricted conjunctions implies cospectrality.
In Section~\ref{s:existential} we observe that techniques from Section~\ref{s:fvm-standard} also automatically entail FVM theorems for the existential fragment.
Section~\ref{sec:product-theorems} develops FVM theorems for products of structures, uniformly in the logic of interest, greatly generalising Mostowski's original result. Unlike our methods, which are parametric in the choice of game comonad, a typical game argument cannot be used to prove such a result, as the game is tied to a specific choice of logic.

In fact, the game comonads $\Ek$ and $\Pk$ naturally deal with first-order logic \emph{without equality}. There is a standard and very flexible technique for incorporating equality, and other ``enrichments'' of logics~\cite{abramsky2021relating}. This is introduced and related to FVM theorems in Section~\ref{s:enrichments}, and illustrated with further examples, including the handling of weak-bisimulation.

\subsection*{Related Work}
Our work can be viewed as comprising model theoretic meta-theorems about the compositionality of logics, parameterised by both the logics and operations of interest.
From this perspective, the original Feferman-Vaught theorem~\cite{feferman1959first, feferman1967first} is essentially a meta-theorem parameterised by a form of generalised product, which can be specialised to yield specific compositionality results specifically for first-order logic. A similar generalised product meta-theorem is used in the setting of modal or temporal logics in~\cite{rabinovich2001compositional, rabinovich2007compositionality}. This work considers specific extensions of modal logic of varying expressive power, but does not go as far as being parameterised by the logics of interest, as in the current work. As well as positive results, this work also considers failure of compositionality, which is outside the scope of the present work.

Surveys of useful compositionality results, for first-order, monadic and guarded second order logic can be found in~\cite{makowsky2004algorithmic, blumencolcolodi2008logicaltheories}.

Preservation of elementary equivalence is studied from a rather different categorical perspective to our own in~\cite{beke2018elementary}. The compatible interaction of composition and observable behaviour is captured by bialgebraic semantics~\cite{turi1997towards}. Although bialgebraic semantics has a rather different intention to the current work, the use of a form of distributive law to witness compositionality bares some intriguing similarity to our results.

Corollary 6.4 in \cite{figueira2023modal} is the only other example of an FVM theorem in the setting of game comonads. It is in fact proved concretely, by making use of the fact that open pathwise-embeddings are precisely strong epimorphisms in the category of the so called \emph{pp-trees}~\cite[Definition~5.4]{figueira2023modal}.
Our work on FVM theorems for game comonads has been motivated by our comonadic account of the Courcelle theorem~\cite{jaklmarsdenshah2022fvm}, where FVM theorems are an essential ingredient.

\section{Preliminaries}
\label{s:preliminaries}

\subsection{Categories of Relational Structures}

We assume familiarity with the basic definitions of category theory, including categories, functors and natural transformations, as can be found in any standard introduction such as~\cite{abramskytzevelekos2010introduction} or \cite{awodey2010category}.
Background on comonads, and any less standard material is introduced as needed.

A \df{relational signature} $\sg$ is a set of \emph{relation symbols}, each with an associated strictly positive natural number \df{arity}.
A~$\sg$-structure~$A$ consists of a~\df{universe} set, also denoted $A$, and for each relation symbol $R \in \sg$ of arity~$n$, an~$n$-ary relation~$R^\As$ on~$A$.
A \df{morphism of~$\sg$-structures} $f\colon \As \rightarrow \Bs$ is a function of type~$A \rightarrow B$, preserving relations, in the sense that for~$n$-ary $R \in \sg$
\[ R^\As(a_1,\ldots,a_n) \qtq{implies} R^\Bs(f(a_1),\ldots, f(a_n)) . \]

For a fixed $\sg$, $\sg$-structures and their morphisms form our main category of interest~$\Rel$. We will also have need of the category of pointed relational structures $\Rels$. Here the objects~$(A,a)$ are a $\sg$-structure $A$ with a distinguished element $a\in A$. The morphisms are $\sg$-structure morphisms that also preserve the distinguished element.

\subsection{Comonads}
\label{s:comonads}
A \df{comonad in Kleisli form} on a category $\CC$ is a triple $(\C, \counit, (-)^*)$ where
\begin{itemize}
    \item $\C$ is an object map $\obj(\CC)\to \obj(\CC)$,
    \item the \df{counit} $\counit$ is an~$\obj({\CC})$-indexed family of morphisms $\counit_A\colon \C(A) \to A$, for $A \in \obj(\CC)$, and
    \item the operation~$(-)^*$ maps morphisms~$f: \C(\As) \rightarrow \Bs$ to their \df{coextension} $f^*:\C(\As) \rightarrow \C(\Bs)$,
\end{itemize}
subject to the following equations:
\begin{align}
    (\counit_A)^* = \id_{\C(A)}, \quad \counit_B \circ f^* = f,\quad (g \circ f^*)^* = g^* \circ f^*,
    \label{eq:comonad-axioms}
\end{align}
It is then standard~\cite{manes2012algebraic} that $\C$ can be equivalently presented in what is referred to as \emph{comonoid form} \((\C, \counit, \delta)\) where \(\C\) is an endofunctor on \(\CC\) and the rest are natural transformation: a \df{counit}~$\counit : \C \Rightarrow \Id$ (where $\Id\colon \CC \to \CC$ is the identity functor) and a \df{comultiplication} $\delta : \C \Rightarrow \C^2$. This triple is required to satisfy $\counit \circ \delta = \id = \C(\counit) \circ \delta$ and $\delta \circ \delta = \C(\delta) \circ \delta$. These equations specify that $(\C,\counit, \delta)$ form a comonoid at an appropriate level of abstraction.
\begin{lem}
    \label{l:comonad-repre}
    Every comonad in Kleisli form \((\C, \counit, (-)^*)\) induces a comonad in comonoid form \((\C, \counit, \delta)\) where \(\C\) is extended to morphisms by setting $\C(f\colon A \to B) = (f \circ \counit_\As)^*$ and \(\delta_A\) is the coextension of the identity morphism~\(\id_{\C(A)}^*\).

    Conversely, a comonad in comonoid form \((\C, \counit, \delta)\) gives a comonad in Kleisli form \((\C, \counit, (-)^*)\) by setting \(f^* = \C(f) \circ \delta_A\).
    These translations are inverse to each other.
\end{lem}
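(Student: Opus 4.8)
The plan is to prove Lemma~\ref{l:comonad-repre} as a standard back-and-forth correspondence, establishing three things: first that the Kleisli-to-comonoid translation is well-defined (producing a genuine functor and natural transformations satisfying the comonoid laws), second that the comonoid-to-Kleisli translation is well-defined, and finally that the two translations are mutually inverse. Since the excerpt already records the defining equations~\eqref{eq:comonad-axioms}, every verification will be a direct manipulation of those three identities, so I would keep the presentation terse and delegate the routine reassociations.

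First I would check that $\C$, extended by $\C(f) = (f \circ \counit_A)^*$, is functorial. Functoriality on identities uses $\C(\id_A) = (\counit_A)^* = \id_{\C(A)}$, which is exactly the first Kleisli axiom. For composition I would compute $\C(g) \circ \C(f) = (g \circ \counit_B)^* \circ (f \circ \counit_A)^*$ and, applying the third Kleisli axiom $(h \circ k^*)^* = h^* \circ k^*$ with $k = f \circ \counit_A$ and $h = g \circ \counit_B$, rewrite this as $\bigl((g \circ \counit_B) \circ (f \circ \counit_A)^*\bigr)^*$; the middle expression $\counit_B \circ (f \circ \counit_A)^*$ collapses to $f \circ \counit_A$ by the second Kleisli axiom, giving $\C(g \circ f)$. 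Next I would verify that $\counit$ is natural, i.e.\ $f \circ \counit_A = \counit_B \circ \C(f)$, which is immediate from the second axiom applied to $f \circ \counit_A$. For $\delta_A = \id_{\C(A)}^*$ I would confirm naturality and the three comonoid laws ($\counit \circ \delta = \id$, $\C(\counit)\circ\delta = \id$, coassociativity) by the same pattern of rewriting coextensions using the three Kleisli equations; each reduces to a one- or two-line calculation.

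For the reverse direction I would take a comonoid-form comonad and set $f^* = \C(f) \circ \delta_A$, then check the three Kleisli axioms: $(\counit_A)^* = \C(\counit_A) \circ \delta_A = \id$ is one of the comonoid laws; $\counit_B \circ f^* = \counit_B \circ \C(f) \circ \delta_A = f \circ \counit_A \circ \delta_A = f$ uses naturality of $\counit$ followed by the counit law; and the associativity axiom $(g \circ f^*)^* = g^* \circ f^*$ expands both sides and is settled using functoriality of $\C$, naturality of $\delta$, and coassociativity.

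The main obstacle, and the only place demanding care, is the round-trip check that the translations are inverse. Starting from comonoid data and passing to Kleisli form and back, I must confirm the recovered action on morphisms $(f \circ \counit_A)^* = \C(f) \circ \C(\counit_A) \circ \delta_A = \C(f) \circ \id_{\C(A)} = \C(f)$ (using functoriality and a comonoid law) and the recovered comultiplication $\id_{\C(A)}^* = \C(\id_{\C(A)}) \circ \delta_A = \delta_A$; conversely, starting from Kleisli data, the recovered coextension is $\C(f) \circ \delta_A = (f \circ \counit_A)^* \circ \id^*_{\C(A)}$, and I would collapse this to $f^*$ by the third Kleisli axiom together with the second. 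The bookkeeping here is entirely mechanical, but it is where one must be scrupulous that the defined $\C(f)$ and $\delta$ from one presentation genuinely coincide with the primitive data of the other; once those two equalities are in hand the correspondence is complete.
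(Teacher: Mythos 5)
Your proposal is correct. The paper itself offers no proof of Lemma~\ref{l:comonad-repre}---it is stated as standard with a citation to Manes---so your verification simply supplies the routine argument that the paper delegates to the literature, and every step you outline (functoriality, naturality of $\counit$ and $\delta$, the three comonoid laws, the three Kleisli axioms, and both round-trip identities) is the right one, carried out with the correct axioms in each case. One notational slip worth fixing: in deriving the second Kleisli axiom from the comonoid laws, naturality of $\counit$ at $f\colon \C(A)\to B$ gives $\counit_B \circ \C(f) = f \circ \counit_{\C(A)}$, so the middle term should read $f \circ \counit_{\C(A)} \circ \delta_A$ rather than $f \circ \counit_A \circ \delta_A$, and the comonoid law then invoked is $\counit_{\C(A)} \circ \delta_A = \id_{\C(A)}$.
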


We now introduce the game comonads that we shall regularly refer to in our examples. Each is parameterized by a positive integer~$k$. The first two comonads are defined over $\Rel$.

\begin{description}[leftmargin=0pt, itemindent=!]
    \item[\ef{} comonad~$\Ek$] The universe of~$\Ek(\As)$ consists of non-empty words
    \[
        [a_1, \dots, a_n]
    \]
    with \(a_1,\dots,a_n\) from \(A\) and \(n \leq k\).
    The counit $\counit_A\colon \Ek(A) \to A$ maps $[a_1,\ldots,a_n]$ to $a_n$.

    For an \(m\)-ary relational symbol $R \in \sg$ and \(s_1,\dots,s_m \in \Ek(A)\), set $R^{\Ek(\As)}(s_1,\ldots,s_m)$ iff:
    \begin{enumerate}
        \item the $s_i$ are \emph{pairwise comparable in the prefix order on words}, i.e.\ for every \(i,j\) either \(s_i\) is a prefix of \(s_j\) or vice versa, and
    \item $R^\As(\counit_A(s_1),\ldots,\counit_A(s_m))$.
    \end{enumerate}
    The coextension of~$h : \Ek(\As) \rightarrow \Bs$ is
    \[
        h^*([a_1,\,\ldots,\,a_n])
        \,=\,
        [h([a_1]),\, h([a_1, a_2]),\, \ldots,\, h([a_1,\ldots,a_n])] .
    \]

    \item[Pebbling comonad~$\Pk$] The universe of $\Pk(\As)$ consists of non-empty words
    \[
        [(p_1,a_1),\,\dots,\,(p_n,a_n)]
    \]
    of unrestricted (finite) length such that \(a_1,\dots,a_n \in A\) and \(p_1,\dots,p_n\) are \emph{pebble indexes}, that is, \(p_1,\dots,p_n \in \{0,\dots, k-1\}\).
    The counit $\counit_A$ maps $[(p_1,a_1),\dots,(p_n,a_n)]$ to $a_n$.

    For $m$-ary~$R \in \sg$,
    set $R^{\Pk(\As)}(s_1,\ldots,s_m)$ iff:
    \begin{enumerate}

        \item \label{cond:peb-compare} The $s_i$ are pairwise comparable in the prefix order on words.
        \item \label{cond:peb-active} For~$s_i, s_j$ with~$s_i$ a prefix word of $s_j$, the pebble index of the last pair of~$s_i$ does not reappear in the remaining elements of~$s_j$.
        \item \label{cond:peb-compatible} $R^{\As}(\counit_A(s_1),\ldots,\counit_A(s_m))$.
    \end{enumerate}
    For~$h\colon \Pk(\As) \rightarrow \Bs$, the coextension $h^*$ acts on words as:
    \[
        h^*([(p_1,a_1),\ldots,(p_n,a_n)])
        \,=\,
        [(p_1,h(s_1)), \ldots, (p_n, h(s_n))]
    \]
    where \(s_i = [(p_1,a_1),\dots,(p_i,a_i)]\) for all \(i \in \{1,\dots,n\}\).
\end{description}
We say that a signature $\sg$ is a~\df{modal signature} if it only has relational symbols of arity one or two.
\begin{description}[leftmargin=0pt, itemindent=!]
    \item[Modal comonad~$\Mk$] Defined over $\Rels$ for a modal signature $\sg$, the universe of~$\Mk(A,a_0)$ consists of paths in $A$ starting from $a_0$, i.e.\ sequences
    \[ a_0 \xrightarrow{R_1} a_1 \xrightarrow{R_2} a_2 \xrightarrow{R_3} \ldots \xrightarrow{R_n} a_n \]
    such that~$n \leq k$ and, for every $i$, $R_i\in \sg$ and $R_i^{\As}(a_{i-1},a_i)$. The counit and coextension act as for~$\Ek$.

    For a binary \(R\in \sigma\), the relation $R^{\Mk(A,a_0)}$ consists of the pairs \((s', s)\) where $s$ is the extension of \(s'\) by $a_{n-1} \xrightarrow{R} a_n$ in $\Mk(A,a_0)$. For a unary $P\in \sg$, $P^{\Mk(A,a_0)}(s)$ iff $P^A(a_n)$.
\end{description}

\begin{thmC}[\cite{AbramskyDW17,abramsky2021relating}]
$\Ek$, $\Mk$ and~$\Pk$ are comonads in Kleisli form.
\end{thmC}
The intuition for each of these comonads is that they encode the moves within one structure during the corresponding model comparison game. We shall introduce their mathematical properties as needed in later sections. See~\cite{abramsky2021relating} for detailed discussions of all three comonads.

\section{FVM theorems for positive existential fragments}
\label{s:fvm-positive}
For a comonad $\C$, we introduce the following two relations on structures:
\begin{itemize}
    \setlength\itemsep{0.3em}
    \item $A \parrow{\C} B$ if there exists a morphism $\C(A) \to B$.
    \item $A \pequiv{\C} B$ if $A \parrow{\C} B$ and $B \parrow{\C} A$.
\end{itemize}
It is a standard fact, formulated categorically in~\cite{abramsky2021relating}, that
\begin{align*}
    A \parrow{\Ek} B\qtq{if and only if}
    &\text{for every positive existential sentence $\varphi$ of quantifier }\\
    &\text{depth at most $k$},\  A \models \varphi \text{ implies } B \models \varphi.
\end{align*}
Therefore $A \pequiv{\Ek} B$ if and only if structures $A$ and $B$ agree on the quantifier depth $k$ fragment. Similarly, $A \parrow{\Pk} B$ and $A \pequiv{\Pk} B$ correspond to the same relationships, but for the $k$ variable fragment \cite{AbramskyDW17}.
\begin{rem}
    For conciseness, in this section, and Sections~\ref{s:fvm-counting}, \ref{s:fvm-standard} and \ref{sec:product-theorems}, references to first-order logic implicitly means the variant \emph{without equality} and with infinite conjunctions and disjunctions, as is common in finite model theory. We shall return to the variant with equality in Section~\ref{s:enrichments}.
    Note, however, that typically logical equivalence of two \emph{finite} structures in a fragment of logic with infinite conjunctions and disjunctions is the same as in the corresponding fragment of first-order logic with finite conjunctions and disjunctions.
\end{rem}

We now consider FVM theorems for $\parrow{\C}$ and $\pequiv{\C}$, parametric in $\C$. To motivate our abstract machinery, we first consider a warm-up example, that is nevertheless sufficient to highlight the key ideas.
For a signature $\sg$, and $\tau \subseteq \sg$, the \df{$\tau$-reduct} of a $\sg$-structure $\As$ is the $\tau$-structure on the same universe which interprets all the relation symbols in $\tau$ as in $\As$.
We aim to show that equivalence in positive existential first-order logic with quantifier rank ${\leq}\, k$ is preserved by taking $\tau$-reducts of $\sigma$-structures for any $\tau \subseteq \sigma$.  We can view the $\tau$-reduct operation as a functor
\[ \fgST{(-)} : \Rel \to \R(\tau)\]
which sends a \(\sg\)-structure to its \(\tau\)-reduct and a homomorphism of \(\sg\)-structures \(f\colon A \to B\) is sent to the homomorphism \(\fgST{f}\colon \fgST A \to \fgST B\) with the same underlying function as \(f\) has.

Observe that taking reducts admits an FVM theorem for the \ef{} comonad $\Ek$, viewed as a comonad on both $\Rel$ and $\R(\tau)$ categories. Given the logical reading of $\parrow{\Ek}$, it is an easy observation that $A \parrow{\Ek} B$ implies $\fgST{A} \parrow{\Ek} \fgST{B}$. Indeed, if the positive existential sentences of quantifier depth ${\leq}\,k$ in signature $\sg$ that are true in $A$ are also true in $B$, then the same must hold for the positive existential sentences in the reduced signature $\tau$.

In this case, the FVM theorem can be proved using an ad-hoc argument about the operation involved. However, to identify a general strategy, we would like to prove this fact categorically. Given a morphism $f\colon \Ek(A) \to B$ in $\Rel$ witnessing $A \parrow{\Ek} B$, we wish to construct an $\ol f\colon \Ek(\fgST{A}) \to \fgST{B}$ witnessing $\fgST{A} \parrow{\Ek} \fgST{B}$. To this end, observe that there is a morphism of $\tau$-structures
\[ \kappa_A\colon \Ek(\fgST{A}) \to \fgST{\Ek(A)}\]
which sends a word $[a_1,\dots,a_n]$ in $\Ek(\fgST{A})$ to the same word in $\fgST{\Ek(A)}$. Therefore, $\ol f$ can be computed as the composite
\begin{equation}
    \Ek(\fgST{A}) \xrightarrow{\kappa_A} \fgST{\Ek(A)} \xrightarrow{\fgST{f}} \fgST{B}.
    \label{eq:fgf-ka}
\end{equation}

It is immediate that, since the definition of $\kappa_A$ was not specific to $A$, there is such a morphism for every structure. Consequently, we obtain the following trivial pair of FVM theorems:
\begin{align*}
A \parrow{\Ek} B &\ete{implies} \fgST{A} \parrow{\Ek} \fgST{B} \\
A \pequiv{\Ek} B &\ete{implies} \fgST{A} \pequiv{\Ek} \fgST{B}
\end{align*}

As this exercise demonstrates, all that is required to prove an FVM theorem for the relations $\parrow{\Ek}$ or $\pequiv{\Ek}$, and a unary operation $H$, is to define for every structure $A$ a morphism of $\sg$-structures $\kappa_A\colon \Ek(H(A)) \rightarrow H(\Ek(A))$. It is not difficult to see that the same proof goes through when we parameterise over the comonads involved and allow operations of arbitrary arity. We obtain our first abstract FVM theorem.
\begin{restatable}{thm}{FVMpe}
    \label{t:fvm-pe}
    Let $\C_1,\dots,\C_n$ and $\D$ be comonads on categories $\CC_1,\dots,\CC_n$, and $\CD$ respectively, and
    \[ \op\colon \CC_1\times\dots\times\CC_n \to \CD \]
    a functor.
    If for every $A_1 \in \obj(\CC_1),\, \dots,\, A_n \in \obj(\CC_n)$ there exists a morphism
    \begin{equation}
        \label{eq:kappa-collection}
        \D(H(A_1,\dots,A_n)) \xrightarrow{\!\!\!\overset{\kappa_{A_1,\dots,A_n}}{\quad}\!\!\!} H(\C_1(A_1),\dots,\C_n(A_n))
    \end{equation}
    in $\CD$, then
    \[ A_1 \parrow{\C_1} B_1, \;\ldots,\; A_n \parrow{\C_n} B_n \]
    implies
    \[ H(A_1,\dots,A_n) \parrow{\D} H(B_1,\dots,B_n) \]
    The same result holds when replacing $\parrow{\C}$ with $\pequiv{\C}$.
\end{restatable}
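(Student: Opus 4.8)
The plan is to lift the construction of the warm-up example \eqref{eq:fgf-ka} to this parametric setting essentially verbatim. First I would unfold the hypotheses: by definition of $\parrow{\C_i}$, each assumption $A_i \parrow{\C_i} B_i$ supplies a morphism $f_i \colon \C_i(A_i) \to B_i$ in $\CC_i$. These morphisms are the formal witnesses of the component strategies that we wish to combine into a single witness over the composite operation.

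The key step is to assemble them using the functoriality of $H$. Since $H$ is a functor out of the product category, the tuple $(f_1,\dots,f_n)$ — which is precisely a morphism $(\C_1(A_1),\dots,\C_n(A_n)) \to (B_1,\dots,B_n)$ in $\CC_1 \times \dots \times \CC_n$ — is sent to a single morphism $H(f_1,\dots,f_n) \colon H(\C_1(A_1),\dots,\C_n(A_n)) \to H(B_1,\dots,B_n)$ in $\CD$. Precomposing with the assumed morphism $\kappa_{A_1,\dots,A_n}$ of \eqref{eq:kappa-collection} yields the composite
\[
    \D(H(A_1,\dots,A_n)) \xrightarrow{\kappa_{A_1,\dots,A_n}} H(\C_1(A_1),\dots,\C_n(A_n)) \xrightarrow{H(f_1,\dots,f_n)} H(B_1,\dots,B_n),
\]
which has exactly the shape $\D(H(A_1,\dots,A_n)) \to H(B_1,\dots,B_n)$ and hence, by definition, witnesses $H(A_1,\dots,A_n) \parrow{\D} H(B_1,\dots,B_n)$. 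This disposes of the statement for $\parrow{}$.

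For the statement about $\pequiv{}$, I would simply apply the $\parrow{}$ result in both directions. If $A_i \pequiv{\C_i} B_i$ for every $i$, then in particular $A_i \parrow{\C_i} B_i$ for every $i$, so the argument above gives $H(A_1,\dots,A_n) \parrow{\D} H(B_1,\dots,B_n)$; symmetrically, $B_i \parrow{\C_i} A_i$ for every $i$ gives $H(B_1,\dots,B_n) \parrow{\D} H(A_1,\dots,A_n)$. Combining the two directions yields $H(A_1,\dots,A_n) \pequiv{\D} H(B_1,\dots,B_n)$.

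I do not expect a genuine obstacle: the whole argument is a well-typed composition, and the only verification is that the domains and codomains match, which is immediate from the stated typing of $\kappa$ and $H$. It is worth stressing that the proof never invokes the comonad structure (counit and coextension) of $\C_1,\dots,\C_n$ or $\D$ — only their underlying object maps, which enter the definition of $\parrow{}$ and the typing of $\kappa$, together with the functoriality of $H$. Consequently the real work in any concrete application is not this proof but the prior task of exhibiting the family $\kappa_{A_1,\dots,A_n}$ for the operation $H$ at hand, exactly as the $\tau$-reduct example illustrates.
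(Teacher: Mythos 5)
Your proof is correct and follows essentially the same route as the paper's: the witness is the composite $H(f_1,\dots,f_n) \circ \kappa_{A_1,\dots,A_n}$, and the $\pequiv{}$ case is handled by applying the same construction to the witnesses in both directions. Your closing observation that the comonad structure itself is never used is also accurate and consistent with the paper's treatment.
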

\begin{prf}
    Let $f_i\colon \C_i(A_i) \to B_i$ be the morphism witnessing the relation $A_i \parrow{\C_i} B_i$, for $i\in\{1,\dots,n\}$. Since $\op$ is a functor, the morphism $\ol f$ defined as the composition
    \begin{equation*}
        \D(H(A_1,\dots,A_n)) \xrightarrow{\ \op(f_1,\dots,f_n) \ \circ\ \kappa_{A_1,\dots,A_n}\ } \op(B_1,\dots,B_n)
    \end{equation*}
    witnesses the relation $H(A_1,\dots,A_n) \parrow{\D} H(B_1,\dots,B_n)$.

    The corresponding statement for $\pequiv{\C}$ is obtained by applying the same argument also to the collection of morphisms $g_i\colon \C_i(B_i) \to A_i$ that witness $B_i \parrow{\C_i} A_i$, for $i\in \{1,\dots,n\}$.
\end{prf}

\begin{rem}[Infinitary Operations]
    \label{r:inf-ops}
For readability reasons, we state this as well as the other abstract FVM theorems, such as Theorems~\ref{t:fvm-counting} and~\ref{t:fvm-standard} below, for operations of finite arities only. Nevertheless, all these statements hold verbatim for infinitary operations as well.
\end{rem}

\subsection{Examples}
\label{sec:examples-pe}
\begin{exa}
\label{ex:coproducts-pe}
    As our first simple application of Theorem~\ref{t:fvm-pe}, we obtain an FVM theorem showing that~$\parrow{\Pk}$ is preserved by taking the disjoint union~$A_1 \uplus A_2$ of two $\sg$-structures~$A_1$, $A_2$.
    The universe of~$A_1 \uplus A_2$ can be encoded as pairs~$(i,a)$ where $i \in \{1,2\}$ and~$a \in A_i$.
    For $r$-ary relation $R \in \sg$, the interpretation $R^{A_1 \uplus A_2}$ is defined as
    \[
    R^{A_1 \uplus A_2}((i_1,a_1),\dots,(i_r,a_r))
    \]
    if and only if there exists an $i$ such that for all $1 \leq j \leq r$, $i_j = i$ and
    $R^{A_i}(a_1,\dots,a_r)$.
    We define a morphism of $\sg$-structures
    \[ \kappa_{A_1,A_2}\colon\Pk (A_1 \uplus A_2) \rightarrow \Pk A_1 \uplus \Pk A_2 \]
    which sends a~word~$s = [(p_1,(i_1,a_1)),\dots,(p_n,(i_n,a_n))]$ in~$\Pk (A_1 \uplus A_2)$ to the pair~$(i_n,\nu(s))$,
    where the word~$\nu(s)$ is the restriction of~$s$ to the pairs~$(p_j,a_j)$ such that~$i_j = i_n$.
    Since the definition of~$\kappa_{A_1,A_2}$ is not specific to the structures~$A_1,A_2$, by Theorem~\ref{t:fvm-pe}, the following holds:
    \begin{align*}
    &A_1 \parrow{\Pk} B_1 \ete{and}  A_2 \parrow{\Pk} B_2 \\
    &\tq{implies}
    A_1 \uplus A_2 \parrow{\Pk} B_1 \uplus B_2
    \end{align*}
    Consequently, the same statement holds with $\parrow{\Pk}$ replaced by $\pequiv{\Pk}$.
    The statement demonstrates that if both $A_1$ and $B_1$, and $A_2$ and $B_2$, satisfy the same sentences of the positive existential fragment of $k$-variable logic,
    then the disjoint unions $A_1 \uplus A_2$ and~$B_1 \uplus B_2$ also satisfy the same sentences of that fragment.
    Similar $\kappa$ morphisms can be defined to demonstrate that~$\parrow{\Pk}$, $\pequiv{\Pk}$ and also $\parrow{\Ek}$ and~$\pequiv{\Ek}$ are preserved by taking disjoint unions of \emph{arbitrary} sets of $\sg$-structures.
\end{exa}
Even in the case of relatively simple operations on structures, it may be the case that logical equivalence beyond this fragment is not preserved.
\begin{exa}[Counterexample]
\label{counter:ML-coproducts}
For the modal fragment, a positive modal formula is one without negation or the $\Box$ modality.
As shown in \cite{abramsky2021relating}, for pointed structures
\[ \As = (A,a)\ete{and}\Bs = (B,b)\]
in a modal signature, $\As \parrow{\Mk} \Bs$ if and only if for every positive modal formula $\varphi$ of modal depth at most $k$,
$\As \models \varphi$ implies $\Bs \models \varphi$,
and therefore $\As \pequiv{\Mk} \Bs$ if and only if the two structures agree on positive modal formulae with modal depth bounded by $k$.

Categorically, the disjoint union of~$\sg$-structures is their \emph{coproduct} in~$\Rel$.
The formal definition of coproducts is unimportant for this example, but we note that
coproduct of two \emph{pointed} $\sg$-structures~$(A_1,a_1) + (A_2,a_2)$ in $\Rels$ is a quotient of their disjoint union, where the distinguished points~$a_1$ and $a_2$ are identified.
Unlike with~$\Ek$ and~$\Pk$, the relations~$\parrow{\Mk}$ and~$\pequiv{\Mk}$ are \emph{not} preserved by this operation on pointed~$\sg$-structures, essentially because the two components can interact.
This shows that, although coproduct FVM theorems are relatively simple, it is by no means automatic that they will hold.
\end{exa}

\begin{exa}
\label{ex:coproducts-with-choice-pe}

For every binary relation $R$ in a modal signature $\sg$, there is an operation $\Asan{1} \merge{R} \Asan{2}$ which combines the two pointed structures by adding a new initial point~$\star$ with $R$-transitions to $a_1$ and $a_2$. Operations such as this are commonly found when modelling concurrent systems with process calculi~\cite{stirling2001modal, roscoe2010understanding}. Intuitively this operation avoids the interactions that caused problems in Counterexample~\ref{counter:ML-coproducts}.

There is a morphism of pointed $\sg$-structures
 \[\kappa_{\Asan{1}, \Asan{2}}\colon \comonad{M}_{k+1}(\Asan{1} \merge{R} \Asan{2})\rightarrow \Mk(\Asan{1}) \merge{R} \Mk(\Asan{2}) \]
All sequences in the domain are of the form
\[* \xrightarrow{R} (i,c_0) \xrightarrow{R_1} \ldots \xrightarrow{R_n} (i,c_n), \]
with $i \in \{1,2\}$ and each $(i,c_j) \in A_1 \uplus A_2$ in the same component.
Then, $\kappa_{\Asan{1},\Asan{2}}$ must preserve the distinguished element, and for sequences of length greater than one it sends $* \xrightarrow{R} (i,c_0) \xrightarrow{R_1} \ldots \xrightarrow{R_n} (i,c_n)$ to $(i, c_0 \xrightarrow{R_1} \ldots \xrightarrow{R_n} c_n)$.
    Applying Theorem~\ref{t:fvm-pe} yields the FVM theorem:
\begin{align*}
    &\Asan{1} \parrow{\Mk} \Bsbn{1} \ete{and} \Asan{2} \parrow{\Mk} \Bsbn{2} \\
    &\tq{implies} \Asan{1} \merge{R} \Asan{2} \parrow{\comonad{M}_{k+1}} \Bsbn{1} \merge{R} \Bsbn{2}
\end{align*}
Notice that the resource index actually \emph{increases} from $k$ to $k + 1$ because of the shape of $\kappa$.

\end{exa}

\begin{exa}
\label{ex:morphism-pe}
    A nice feature of the generality of Theorem~\ref{t:fvm-pe} is that for different comonads $\C$ and $\D$ over the same category $\CC$, a natural transformation between the functors $\C \Rightarrow \D$ or, more generally, a collection of morphisms
    \[ \{ \kappa_{A}\colon \D A \rightarrow \C A \mid A \in \obj(\CC)\} \]
    can be seen as instances of \eqref{eq:kappa-collection}, with $H$ equal to the identity functor $\Id\colon \CC \to \CC$.
    This allows us to provide semantic translations for the logics captured by game comonads, a question that had not previously been addressed.
    For example, if we consider $\comonad{P}_2$ as a comonad over $\Rels$ where $\comonad{P}_2(A,a_0)$ has distinguished point $[(0,a_0)]$, then for every $k \in \mathbb{N}$ and object $(A,a_0)$ we define a morphism of pointed $\sg$-structures
    \[ \kappa_{(A,a_0)}\colon \Mk(A,a_0) \rightarrow \comonad{P}_2(A,a_0) \]
    which sends the sequence of transitions
    \[ a_0 \xrightarrow{R_1} \ldots \xrightarrow{R_{n-1}} a_{n-1} \xrightarrow{R_n} a_n \]
    in $\Mk(A,a_0)$
    to the word in $\comonad{P}_2(A,a_0)$ where elements are labeled by the parity of their position
    \[ [(0,a_0),(1,a_1),\dots,(n \bmod 2,a_n)]. \]
    Applying Theorem~\ref{t:fvm-pe} yields the well-known fact that structures which satisfy the same sentences of the positive existential fragment of two-variable logic must also satisfy the same positive modal formulas for any modal depth $k$
    \begin{align*}
    &(A,a_0) \pequiv{\comonad{P}_2} (B,b_0) \\
    &\tq{implies}
    (\forall k \in \mathbb{N}. \ \ (A,a_0) \pequiv{\Mk} (B,b_0)).
    \end{align*}
    Similarly, there exist morphisms
    $\kappa_{A}\colon \Ek A \rightarrow \Pk A$ for every $A$ in $\Rel$ where $[a_1,\dots,a_n]$ with $n\leq k$ is sent to $[(0,a_1),\dots(n-1,a_n)]$ demonstrating that~$\parrow{\Pk}$ refines~$\parrow{\Ek}$.
\end{exa}

\section{FVM theorems for counting logic}
\label{s:fvm-counting}
We now consider another relationship induced by a game comonad. To do so, we first introduce one of the two standard categories associated with any comonad.

\subsection{Kleisli category}
Given a comonad \(\C\) on a category \(\CC\), the \df{Kleisli category}~\cite{kleisli1965every} $\Kl \C$ for $\C$ on $\CC$ is a category which
\begin{itemize}
    \item has the same objects as~$\CC$, and
    \item the morphisms of type $A \rightarrow B$ in $\Kl \C$ are the morphisms of type $\C(A)\to B$ in $\mathcal{C}$. To avoid ambiguity, we shall write $f : A \klto{\C} B$ if we intend $f$ to be understood as a Kleisli morphism.
    \item For an object \(A\), the identity morphism $A \klto{\C} A$ is the counit component $\counit_A : \C(A) \rightarrow A$.
    \item The composite of morphisms $f \colon A \klto \C B$ and $g \colon B \klto \C C$ is given by $g \circ f^*$, where we recall $f^* : \C(A) \rightarrow \C(B)$ is the coextension of~$f \colon \C(A) \to B$. As composition is different to that in the base category, we use the distinct notation $g \kirc f$ for the Kleisli composite of~\(f\) and~\(g\).
\end{itemize}
The axioms for a comonad in Kleisli form ensure that this is a well-defined category. Notice that the morphisms in the Kleisli category are exactly those that were important for positive existential fragments in Section~\ref{s:fvm-positive}.

The motivation for the Kleisli construction is that there is an obvious forgetful functor $U_\C : \Kl{\C} \rightarrow \CC$, and this functor has an identity on objects right adjoint \df{cofree functor} $F_\C : \CC \rightarrow \Kl{\C}$, and this adjunction induces the original comonad.

As with any category, we can consider the isomorphisms in $\Kl \C$. We shall write
\[ A \cequiv{\C} B\]
if structures $A$ and $B$ are isomorphic in the Kleisli category. For finite structures, the relations $A \cequiv{\Ek} B$ and $A \cequiv{\Pk} B$ correspond to equivalence in counting logic \cite{AbramskyDW17,abramsky2021relating}. %
The ideas of Theorem~\ref{t:fvm-pe} can be extended to establish an FVM theorem for counting fragments and the $\cequiv{\C}$ relation. All that is required is to impose extra conditions on the collection of morphisms~\eqref{eq:kappa-collection}. %
Furthermore, these extra conditions turn out to be those of the well-known Kleisli laws related to lifting functors to Kleisli categories~\cite{manes2007monad}.

To motivate our comonadic method, we return to the minimal reduct example from Section~\ref{s:fvm-positive}. We can show that $A \cequiv{\Ek} B$ implies $\fgST{A} \cequiv{\Ek} \fgST{B}$,
via a similar ad-hoc argument to before.
In our setting, we wish to establish that, for mutually inverse morphisms
$f\colon A \klto \Ek B$ and $g\colon B \klto \Ek A$ in $\Kl \Ek$,
the composites $\fgST{f}\circ \kappa_A$ and $\fgST{g} \circ \kappa_B$, constructed as in~\eqref{eq:fgf-ka}, are also mutually inverse in the Kleisli category.
In the following we analyse the abstract setting in which the relation $\cequiv{\C}$ is defined and derive a general result, similar to Theorem~\ref{t:fvm-pe}.

\subsection{Kleisli laws}
\label{s:kleisli-laws}
Akin to our motivating example, we assume a unary operation $H : \CC \rightarrow \CD$,
comonads $(\C, \counit^\C, (-)^*)$ on $\CC$ and $(\D, \counit^\D, (-)^*)$ on $\CD$, and also a collection of morphisms
\[ \{\kappa_A\colon \D H(A) \to H \C(A) \mid A\in \obj(\CC)\}, \]
as in Theorem~\ref{t:fvm-pe}.  We wish to find conditions on this collection of morphisms ensuring that $A \cequiv{\C} B$ implies $H(A) \cequiv{\D} H(B)$. Rephrasing, we require axioms such that if $f\colon A \klto{\C} B$ and $g\colon B \klto{\C} A$ are mutually inverse, then the composites
\[ H(f)\circ \kappa_A\colon H(A) \klto{\D} H(B) \qtq{and} H(g)\circ \kappa_B\colon H(B) \klto{\D} H(A) \]
are also mutually inverse. To do so, we observe that if both
\begin{equation}
    \label{eq:kl-law-coext-one}
    \begin{tikzcd}[column sep=3.3em]
        \D \op(A) \dar[swap]{\kappa_A} \drar{\counit_{\op(A)}} & \\
        \op \C(A) \rar[swap]{H(\counit_A)} & H(A)
    \end{tikzcd}
\end{equation}
and
\begin{equation}
    \label{eq:kl-law-coext-two}
    \begin{tikzcd}[column sep=5em]
        \D \op(A)
            \dar[swap]{\kappa_A}
            \rar{(H(f)\circ \kappa_A)^*}
        &
        \D \op(B)
            \dar{\kappa_B}
        \\
        \op \C(A)
            \rar[swap]{H(f^*)}
        &
        \op \C(B)
    \end{tikzcd}
\end{equation}
commute then, from $g \kirc f = \id$ in $\Kl \D$ (i.e.\ $g \circ f^* = \counit_A$), we obtain that
\begin{align*}
    & (H(g)\circ \kappa_B) \kirc (H(f)\circ \kappa_A) \\
    &= H(g)\circ \kappa_B \circ (H(f)\circ \kappa_A)^*
                                          & \text{definition of \(\kirc\)} \\
    &= H(g)\circ H(f^*) \circ \kappa_A    & \text{\eqref{eq:kl-law-coext-two}} \\
    &= H(g\circ f^*) \circ \kappa_A       & \text{functoriality} \\
    &= H(\counit_A) \circ \kappa_A        & \text{assumption} \\
    &= \counit_{H(A)} : \D(H(A)) \to H(A) & \text{\eqref{eq:kl-law-coext-one}}
\end{align*}
and, similarly,
$f \kirc g = \id$ implies that
\[ (H(f)\circ \kappa_A) \kirc (H(g)\circ \kappa_B) \]
equals to identity in $\Kl \D$.
In fact, equations~\eqref{eq:kl-law-coext-one} and~\eqref{eq:kl-law-coext-two} are equivalent to requiring that the morphisms $\kappa_A$ constitute a natural transformation $\kappa : \D \circ \op \Rightarrow \op \circ \C$ making the following two diagrams commute:
    \begin{equation}
    \begin{tikzcd}
    \D \circ \op \dar[swap]{\kappa} \drar{\counit_{\op}} & \\
    \op \circ \C \rar[pos=0.4]{H\counit} & H
    \end{tikzcd}
    \quad
    \begin{tikzcd}[column sep=1.7em]
    \D \circ \op \dar[swap]{\kappa} \rar{\delta_{\op}} & \D^2 \circ \op \rar{\D\kappa}  & \D\circ\op\circ\C \dar{\kappa_\C} \\
    \op \circ \C \arrow[rr, "H\delta"] & & \op \circ \C^2
    \end{tikzcd}
    \label{eq:klei-nat}
    \end{equation}
Such a $\kappa$ is referred to as a \df{Kleisli law}~\cite{manes2007monad}.
\begin{lem}
    \label{l:klei-ax-clone-form}
    A collection of morphisms $\{\kappa_A\}$ satisfies equations~\eqref{eq:kl-law-coext-one} and~\eqref{eq:kl-law-coext-two} for every $f \colon \C(A) \to B$ iff $\kappa$ is a Kleisli law.
\end{lem}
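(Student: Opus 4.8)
The plan is to reduce the stated equivalence to the two componentwise identities making up the notion of a Kleisli law, namely the counit triangle and the comultiplication square of~\eqref{eq:klei-nat}, together with naturality of~$\kappa$. The first coextension equation~\eqref{eq:kl-law-coext-one} is literally the counit triangle, read componentwise, so that half requires no work. All the content therefore lies in showing that~\eqref{eq:kl-law-coext-two}, asserted for \emph{every} $f\colon \C(A)\to B$, is equivalent to the conjunction of naturality of $\kappa\colon \D H\Rightarrow H\C$ and the comultiplication square. Throughout I would move freely between the Kleisli and comonoid presentations of $\C$ and $\D$ using Lemma~\ref{l:comonad-repre}; the three facts I rely on are $f^* = \C(f)\circ\delta^\C_A$, the identity $(u\circ\counit^\C_A)^* = \C(u)$, and $\id_{\C(A)}^* = \delta^\C_A$ (and their analogues for $\D$).

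For the direction from Kleisli law to the coextension equations, I would start from the left-hand side of~\eqref{eq:kl-law-coext-two} and rewrite $f^*$ as $\C(f)\circ\delta^\C_A$, so that $H(f^*)\circ\kappa_A = H\C(f)\circ H(\delta^\C_A)\circ\kappa_A$. Applying the comultiplication square of~\eqref{eq:klei-nat} replaces $H(\delta^\C_A)\circ\kappa_A$ by $\kappa_{\C(A)}\circ \D(\kappa_A)\circ\delta^\D_{H(A)}$, and naturality of $\kappa$ at the morphism $f\colon\C(A)\to B$ turns $H\C(f)\circ\kappa_{\C(A)}$ into $\kappa_B\circ \D H(f)$. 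Collecting the two $\D$-images by functoriality gives $\kappa_B\circ \D(H(f)\circ\kappa_A)\circ\delta^\D_{H(A)}$, and recognising $\D(g)\circ\delta^\D_{H(A)}$ as the $\D$-coextension $g^*$ with $g = H(f)\circ\kappa_A$ yields exactly the right-hand side $\kappa_B\circ(H(f)\circ\kappa_A)^*$.

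The converse direction is where the choice of test morphisms does the work. Assuming~\eqref{eq:kl-law-coext-one} and~\eqref{eq:kl-law-coext-two}, I would recover the comultiplication square by instantiating~\eqref{eq:kl-law-coext-two} at $f = \id_{\C(A)}$: then $f^* = \delta^\C_A$ and $H(f)\circ\kappa_A = \kappa_A$, whose $\D$-coextension is $\D(\kappa_A)\circ\delta^\D_{H(A)}$, so the equation becomes precisely the square. Naturality I would obtain by instantiating at $f = u\circ\counit^\C_A$ for an arbitrary $u\colon A\to A'$: here $f^* = \C(u)$, while $H(f)\circ\kappa_A = H(u)\circ H(\counit^\C_A)\circ\kappa_A$ simplifies, using~\eqref{eq:kl-law-coext-one}, to $H(u)\circ\counit^\D_{H(A)}$, whose $\D$-coextension is $\D H(u)$; thus~\eqref{eq:kl-law-coext-two} reads $H\C(u)\circ\kappa_A = \kappa_{A'}\circ \D H(u)$, the naturality square.

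I expect no single hard step, but the main care-point is bookkeeping with the two presentations of the comonads and, in the square, matching the whiskerings $\delta^\D_{H(A)}$, $\D(\kappa_A)$ and $\kappa_{\C(A)}$ correctly against their comonoid-form meaning. The one genuinely inventive move is spotting that the universally quantified equation~\eqref{eq:kl-law-coext-two} is pinned down by just two instances, $f=\id_{\C(A)}$ and $f = u\circ\counit^\C_A$, after which naturality and the square recombine via $f^* = \C(f)\circ\delta^\C_A$ to recover~\eqref{eq:kl-law-coext-two} for general $f$.
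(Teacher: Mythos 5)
Your proposal is correct and takes essentially the same route as the paper's proof: both directions hinge on the same two test morphisms, $f = \id_{\C(A)}$ to extract the comultiplication square and $f = u \circ \counit_A$ (with equation~\eqref{eq:kl-law-coext-one} used to identify $(H(u)\circ\counit_{H(A)})^{*}$ with $\D H(u)$) to extract naturality, while the reverse direction is recovered exactly as in the paper by rewriting $f^{*} = \C(f)\circ\delta_A$ and chaining the oblong law with naturality of $\kappa$ at $f$. The only difference is presentational (you treat the two implications in the opposite order and spell out the coextension identities more explicitly), not mathematical.
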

\begin{proof}
    For the left to right implication, \(\kappa\) is natural since, for any \(g\colon A\to B\), we set \(f = \C(A) \xrightarrow{\counit_A} A \xrightarrow{g} B\) and observe that \(\op(\C(g)) = H(f^*)\) and \(\D(\op(g)) = (H(g) \circ \counit_{H(A)})^* = (\op(f)\circ \kappa)^*\) by \eqref{eq:kl-law-coext-one}. Naturality then follows from \eqref{eq:kl-law-coext-two}. Commutativity of the oblong in~\eqref{eq:klei-nat} follows by setting \(f = \C(A) \xrightarrow{\id} \C(A)\). We then have \((\op(f)\circ \kappa_A)^* = \D(\kappa_A) \circ \delta_{H(A)}\) and \(\kappa_A \circ \D(\kappa_A) \circ \delta_{\op(A)} = \op(f^*) \circ \kappa_A = \op(\delta_A) \circ \kappa_A\) from which the result follows by \eqref{eq:kl-law-coext-one}.

    Conversely, for an \(f\colon \C(A)\to B\), by the oblong law in~\eqref{eq:klei-nat} and naturality of \(\kappa\), \(\op(f^*) \circ \kappa_A = \op(\C(f)) \circ \kappa_{\C(A)} \circ \D(\kappa_A) \circ \delta_{\op(A)} = \kappa_B \circ \D(\op(f)) \circ \D(\kappa_A) \circ \delta_{H(A)} = \kappa_B \circ (\op(f) \circ \kappa_A)^*\).
\end{proof}
\begin{rem}
\label{rem:comonad-morphisms}
    Kleisli laws for monads given by a collection of morphisms satisfying \eqref{eq:kl-law-coext-one} and~\eqref{eq:kl-law-coext-two} appeared in the study of relative monads~\cite{ahrens2012initiality, ahrens2015phd}, under the name \emph{colax morphisms}.
Kleisli laws correspond precisely to liftings (defined as Kleisli liftings below) of the operation $\op\colon \CC \to \CD$ to operations $\Kl \C \to \Kl \D$ commuting with the cofree functors $\CC \to \Kl \C$ and $\CD \to \Kl \D$. See for example~\cite{mulry1993lifting, jacobs1994semantics}.
    Furthermore, a Kleisli law $\kappa : \C \Rightarrow \D$ with respect to the identity functor (as the operation) is the same thing as a~\df{comonad morphism}.
\end{rem}
We will typically be interested in operations of arity greater than one, so we immediately extend our definitions.
\begin{rem}[Notation]
For a set~$I$, and family of categories $\{\CC_i\}_{i \in I}$, we shall write $\veci A$ for a~family of objects, with each $A_i \in \CC_i$, and similarly $\veci f$ for families of morphisms. These form the \df{product category} $\prod_{i \in I} \CC_i$, with composition and identities defined componentwise. Given a family of comonads $\C_i : \CC_i \rightarrow \CC_i$, they induce a comonad $\prod_i \C_i$ on the product category, again componentwise.
\end{rem}
Using this notation, we define the $n$-ary version of the axioms from Lemma~\ref{l:klei-ax-clone-form}. \df{Kleisli laws} for an $n$-ary operation $\op$ are the natural transformations $\kappa\colon \D \circ \op \Rightarrow \op \circ \prod_i \C_i$ such that the following diagrams commute:
\begin{axioms}
    \item[\textbf{\namedlabel{ax:kl-law-counit}{(K1)}}]
    \begin{tikzcd}
        \D \circ \op \dar[swap]{\kappa} \drar{\counit_{\op}} & \\
        \op \circ \prod_i \C_i \rar[swap]{H(\prod_i \counit)} & H
    \end{tikzcd}
    \\[0.5em]
    \item[\textbf{\namedlabel{ax:kl-law-comultiplication}{(K2)}}]
    \begin{tikzcd}[column sep=1.8em]
        \D \circ \op \dar[swap]{\kappa} \rar{\delta_H} & \D^2 \circ \op \rar{\D \kappa} & \D \circ \op \circ \prod_i \C_i \dar{\kappa } \\
        \op \circ \prod_i \C_i \arrow[rr, "\op(\prod_i \delta)"] & & \op \circ \prod_i \C^2_i
    \end{tikzcd}
\end{axioms}
In fact, $n$-ary Kleisli laws are a special case of ordinary (unary) Kleisli laws, as will be shown in Lemma~\ref{lem:n-ary}.

For a family of comonads $\C_i : \CC_i \rightarrow \CC_i$,
we will say that a functor $H': \prod_i \Kl{\C_i} \rightarrow \Kl{\D_i}$
is a \df{Kleisli lifting} of functor $H : \prod_i \CC_i \rightarrow \CD$
if the following diagram commutes:
\begin{equation}
\begin{tikzcd}
    \prod_i \Kl{\CC_i} \rar{H'} & \Kl{\D} \\
    \prod_i \C_i \rar{H} \uar{\prod_i F_{\C_i}} & \CD \uar{F_{\D}}
\end{tikzcd}
\label{eq:kl-lift}
\end{equation}

\begin{lem}
\label{lem:n-ary}
Given a family of comonads $\C_i : \CC_i \rightarrow \CC_i$:
\begin{enumerate}
    \item \label{it:kl-two} The Kleisli construction commutes with products, in that the following equation holds
    \[ \Kl{\prod\nolimits_i \C_i} = \prod\nolimits_i \Kl{\C_i} \]
    \item \label{it:kl-three} The forgetful functor and cofree functors of the product comonad
    \[
        U_{\prod_i \C_i}: \Kl{\prod\nolimits_i \C_i} \rightarrow \prod\nolimits_i \CC_i
        \qquad
        F_{\prod_i \C_i}: \prod\nolimits_i \CC_i \rightarrow \Kl{\prod\nolimits_i \C_i}
    \]
    are given by the product \(\prod_i U_{\C_i}\) of forgetful and product \(\prod_i F_{\C_i}\) of cofree functors, respectively.
    \item \label{it:kl-four} An n-ary Kleisli laws of type $\D \circ H \Rightarrow H \circ \prod_i \C_i$ is the same thing as a (unary) Kleisli law involving the product comonad $\prod_i \C_i$.
    \item \label{it:kl-five} n-ary Kleisli laws bijectively correspond to Kleisli liftings of $H$.
\end{enumerate}
\end{lem}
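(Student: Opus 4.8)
The plan is to treat the four items as a chain, establishing the structural facts about products first and then leveraging them to reduce the $n$-ary theory to the unary theory already developed. For item~\ref{it:kl-two}, I would argue by directly unwinding the Kleisli construction applied to $\prod_i \C_i$. Both categories have as objects the families $\veci A$, so it suffices to check the morphisms, identities, and composition agree. A morphism $\veci A \klto{\prod_i \C_i} \veci B$ is by definition a morphism $(\prod_i \C_i)(\veci A) \to \veci B$ in $\prod_i \CC_i$, and since both the product comonad and the product category act componentwise, this is exactly a family of morphisms $\C_i(A_i) \to B_i$, i.e.\ a family of Kleisli morphisms $A_i \klto{\C_i} B_i$. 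The identity in $\Kl{\prod_i \C_i}$ is the counit of $\prod_i \C_i$, which is the family of counits $(\counit^{\C_i}_{A_i})_i$, matching the componentwise identity of $\prod_i \Kl{\C_i}$. For composition, the coextension of a family is the family of coextensions (since $\delta$ for the product comonad is computed componentwise), so $g \circ f^*$ in the product equals the family of $g_i \circ f_i^*$, which is exactly componentwise Kleisli composition. This gives the claimed equality of categories on the nose.

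Item~\ref{it:kl-three} then follows almost immediately from item~\ref{it:kl-two}: under the identification $\Kl{\prod_i \C_i} = \prod_i \Kl{\C_i}$, the forgetful functor $U_{\prod_i \C_i}$ sends a family of Kleisli morphisms to the underlying family in $\prod_i \CC_i$, which is precisely what $\prod_i U_{\C_i}$ does; dually the identity-on-objects cofree functor $F_{\prod_i \C_i}$ sends $\veci f$ to the family of coextensions-after-counit, agreeing with $\prod_i F_{\C_i}$. One checks this on objects (trivial, all are identity-on-objects) and on morphisms using the componentwise description from item~\ref{it:kl-two}.

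For item~\ref{it:kl-four}, I would observe that the two axioms \ref{ax:kl-law-counit} and \ref{ax:kl-law-comultiplication} for an $n$-ary Kleisli law $\kappa\colon \D \circ H \Rightarrow H \circ \prod_i \C_i$ are, once we substitute the product comonad $\prod_i \C_i$ with its componentwise counit $\prod_i \counit$ and comultiplication $\prod_i \delta$, literally the axioms of an ordinary (unary) Kleisli law for $H$ regarded as a functor $\prod_i \CC_i \to \CD$ with the single comonad $\prod_i \C_i$ on the source. The diagrams coincide symbol-for-symbol because $H(\prod_i \counit)$ and $\op(\prod_i \delta)$ are exactly the instances of $H(\counit^{\prod_i \C_i})$ and $H(\delta^{\prod_i \C_i})$. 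Finally, item~\ref{it:kl-five} combines the previous parts with the unary correspondence recorded in Remark~\ref{rem:comonad-morphisms}: a unary Kleisli law for $H$ with respect to $\prod_i \C_i$ and $\D$ corresponds bijectively to a lifting $H'\colon \Kl{\prod_i \C_i} \to \Kl{\D}$ commuting with the cofree functors, and substituting the identifications from items~\ref{it:kl-two} and~\ref{it:kl-three} turns this lifting into exactly a functor $\prod_i \Kl{\C_i} \to \Kl{\D}$ making diagram~\eqref{eq:kl-lift} commute. I expect the only mild obstacle to be bookkeeping in item~\ref{it:kl-two}: one must verify that the coextension operation of the product comonad genuinely decomposes componentwise, which requires recalling (via Lemma~\ref{l:comonad-repre}) that the comonoid data of $\prod_i \C_i$ is the product of the comonoid data, so that $f^* = \prod_i \C_i(f) \circ \delta_{\veci A}$ splits as the family $(\C_i(f_i) \circ \delta_{A_i})_i = (f_i^*)_i$; everything else is a direct consequence once this is in hand.
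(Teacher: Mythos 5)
Your proposal is correct and takes essentially the same approach as the paper, which disposes of items (1)--(3) as routine calculations from the definitions and proves item (4) by reducing to the well-known unary correspondence between Kleisli laws and Kleisli liftings, exactly as you do; your write-up simply fills in the routine componentwise checks. One minor slip worth fixing: the forgetful functors $U_{\C_i}$ are \emph{not} identity on objects (they send $A$ to $\C_i(A)$; only the cofree functors are identity on objects), but this does not affect your argument, since the object-level agreement in item (2) is immediate from the product comonad acting componentwise on objects.
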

\begin{prf}
    Items \ref{it:kl-two} and \ref{it:kl-three} are routine calculations from the definitions.
Item \ref{it:kl-four} follows directly from the definitions,
and so we can reduce the bijective relationship between n-ary Kleisli laws and Kleisli liftings to the well-known unary case. See for example~\cite{mulry1993lifting}. This establishes item~\ref{it:kl-five}.
\end{prf}

The argument presented for a unary opereration generalises smoothly to operations of any arity, yielding an FVM theorem for counting fragments.
\begin{restatable}{thm}{FVMcounting}
    \label{t:fvm-counting}
    Let $\C_1,\dots,\C_n$ and $\D$ be comonads on categories $\CC_1,\dots,\CC_n$ and $\CD$, respectively, and
    $ \op\colon \prod_i \CC_i \to \CD $
    a functor.
    If there exists a Kleisli law of type
    \( \D\circ H \Rightarrow H \circ \prod\nolimits_i\C_i \)
    then
    \[ A_1 \cequiv{\C_1} B_1 \;\ldots,\; A_n \cequiv{\C_n} B_n \]
    implies
    \[ H(A_1,\dots,A_n) \;\cequiv{\D}\; H(B_1,\dots,B_n) . \]
\end{restatable}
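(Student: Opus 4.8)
The plan is to package the hypotheses as a single isomorphism in a product Kleisli category, and then exploit the fact that the Kleisli law is, equivalently, a \emph{functor}, which must preserve isomorphisms. First I would pass to the product comonad: set $\CC \coloneqq \prod_i \CC_i$ and $\C \coloneqq \prod_i \C_i$, and regard $\op$ as a unary functor $\CC \to \CD$. By Lemma~\ref{lem:n-ary}(\ref{it:kl-four}) the given $n$-ary Kleisli law is exactly a unary Kleisli law $\D \circ \op \Rightarrow \op \circ \C$, and by Lemma~\ref{lem:n-ary}(\ref{it:kl-two}) we may identify $\Kl{\C}$ with $\prod_i \Kl{\C_i}$. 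Since $A_i \cequiv{\C_i} B_i$ means $A_i$ and $B_i$ are isomorphic in $\Kl{\C_i}$, and identities and composition in a product category are computed componentwise, a tuple of componentwise isomorphisms is an isomorphism in the product. Hence the hypotheses assemble into a single relation $\veci A \cequiv{\C} \veci B$ in $\Kl{\C}$.

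The key step is to turn the Kleisli law into a functor between Kleisli categories. By Lemma~\ref{lem:n-ary}(\ref{it:kl-five}) the law corresponds to a Kleisli lifting $\op'\colon \Kl{\C} \to \Kl{\D}$ of $\op$, i.e.\ a functor making the square~\eqref{eq:kl-lift} commute. As every functor preserves isomorphisms, $\veci A \cequiv{\C} \veci B$ immediately yields $\op'(\veci A) \cong \op'(\veci B)$ in $\Kl{\D}$, that is, $\op'(\veci A) \cequiv{\D} \op'(\veci B)$.

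It remains to match $\op'$ with $\op$ on objects. Because the cofree functors $\prod_i F_{\C_i}$ and $F_{\D}$ are the identity on objects, commutativity of~\eqref{eq:kl-lift} forces $\op'$ and $\op$ to agree on objects, so $\op'(\veci A) = \op(A_1,\dots,A_n)$ and likewise for $\veci B$. Substituting gives $\op(A_1,\dots,A_n) \cequiv{\D} \op(B_1,\dots,B_n)$, as required. I expect no serious obstacle here: the substantive content — that axioms~\ref{ax:kl-law-counit} and~\ref{ax:kl-law-comultiplication} are precisely what send mutually inverse Kleisli morphisms to mutually inverse composites $\op(f)\circ\kappa$ and $\op(g)\circ\kappa$ — has already been discharged by the explicit calculation in Section~\ref{s:kleisli-laws} and repackaged via the lifting correspondence of Lemma~\ref{lem:n-ary}. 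The only point needing care is the object-level bookkeeping above, which hinges on the cofree functors being identity-on-objects; should one prefer to avoid the lifting language entirely, one can instead invoke Lemma~\ref{l:klei-ax-clone-form} to recover the unary equations and replay that calculation directly for the product comonad $\C$.
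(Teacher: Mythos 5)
Your proposal is correct and follows essentially the same route as the paper's own proof: both convert the Kleisli law into a Kleisli lifting via Lemma~\ref{lem:n-ary}, use that functors preserve isomorphisms, and conclude by commutativity of the square~\eqref{eq:kl-lift} together with the cofree functors being identity on objects. The only difference is presentational — you make explicit the passage through the product comonad (items \ref{it:kl-two} and \ref{it:kl-four} of Lemma~\ref{lem:n-ary}) and the object-level bookkeeping, which the paper leaves implicit.
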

\begin{prf}
    Assume \(A_1 \cequiv{\C_1} B_1\), \enspace\ldots,\enspace \(A_n \cequiv{\C_n} B_n\) or, equivalently,
\[ F_{\C_1}(A_1) \cong F_{C_1}(B_1), \enspace\ldots,\enspace F_{\C_n}(A_n) \cong F_{\C_n}(B_n). \]
    The assumed Kleisli law induces a Kleisli lifting \(H'\colon \prod_i \Kl{\C_i} \rightarrow \Kl{\D_i}\) by item \ref{it:kl-five} of Lemma~\ref{lem:n-ary}, which by functoriality implies
\[ H'(F_{\C_1}(A_1), \ldots, F_{\C_n}(A_n)) \cong H'(F_{C_1}(B_1),\ldots, F_{\C_n}(B_n)) . \]
By the commutativity of \eqref{eq:kl-lift} we have
\( F_{\D}(H(A_1, \ldots, A_n)) \cong F_{\D}(H(B_1,\ldots, B_n))\),
which encodes the required property.
\end{prf}

\subsection{Examples}

\begin{exa}
    \label{ex:coprod-etc-counting}
The collection of $\kappa$ morphisms described in each of
    Examples~\ref{ex:coproducts-pe},~\ref{ex:coproducts-with-choice-pe} and~\ref{ex:morphism-pe} of Section~\ref{sec:examples-pe}
    are natural and satisfy axioms~\ref{ax:kl-law-counit} and~\ref{ax:kl-law-comultiplication}.
Therefore, by applying Theorem~\ref{t:fvm-counting} to the Kleisli laws in
    Examples~\ref{ex:coproducts-pe} and~\ref{ex:coproducts-with-choice-pe}
        we have that $\cequiv{\Ek}$ and $\cequiv{\Pk}$ are preserved by taking coproducts of structures and the merge operation maps $\cequiv{\Mk}$ to $\cequiv{\comonad{M}_{k+1}}$.
Recalling Remark~\ref{rem:comonad-morphisms}, by applying Theorem~\ref{t:fvm-counting} to the comonad morphism in Example~\ref{ex:morphism-pe} we have that for all $k \in \mathbb{N}$, ~$\cequiv{\comonad{P}_2}$ refines $\cequiv{\Mk}$, and $\cequiv{\Pk}$ refines $\cequiv{\Ek}$.
\end{exa}

\begin{exa}[Cospectrality]
Two graphs $G$ and $H$ are \df{cospectral} if the adjacency matrices of $G$ and $H$ have the same multiset of eigenvalues.
We shall use Theorem~\ref{t:fvm-counting} to strengthen a result in \cite{dawar2017pebble} showing that equivalence in 3-variable counting logic \emph{with equality} implies cospectrality.

Since cospectrality is a notion on graphs, we move to the full subcategory of loopless undirected graphs in signature $\sg = \{E\}$ where $E$ is a binary edge relation.
The comonad $\Pk$ restricts to this category and we also consider a comonad $\Cos$ defined in~\cite[Section 3]{abramskyjaklpaine2022discrete} which characterizes cospectrality,
in that
$G \cequiv{\Cos} H$ is equivalent to $G$ and $H$ being cospectral.
The universe of $\Cos(G)$ can be encoded as pairs~$(c,v_i)$ where $c$ is a closed walk\footnote{Recall that a \df{walk} is a sequence of (possibly repeated vertices) where $v_j$ is adjacent to $v_{j+1}$.} $v_0 \xrightarrow{E} v_1 \xrightarrow{E} \ldots \xrightarrow{E} v_n \xrightarrow{E} v_0$ that passes through $v_i$.
Two pairs $(c,v_i),(c',v_j)$ are adjacent in $\Cos(G)$ if $c = c'$ and $v_i$ is adjacent to $v_j$ in the closed walk $c$.
The counit $\counit_G$ maps $(c,v_i)$ to $v_i$.
For $h\colon \Cos(G) \rightarrow H$, the coextension $h^{*}$ maps the pair $(c,v_i)$ with $c$ being the closed walk $v_0 \xrightarrow{E} v_1 \xrightarrow{E} \ldots \xrightarrow{E} v_n \xrightarrow{E} v_0$ to $(d,h(c,v_i))$ where $d$ is the closed walk $h(c,v_0) \xrightarrow{E} h(c,v_1)\xrightarrow{E} \ldots \xrightarrow{E} h(c,v_n) \xrightarrow{E} h(c,v_0)$.

We define a comonad morphism $\kappa\colon\Cos(G) \to \comonad{P}_3(G)$ where $(c,v_i)$ is mapped to the word
\[ [(2,v_0),(1,v_1),(0,v_2),(1,v_3),\dots,(i \bmod 2,v_i)]. \]
Recalling Remark~\ref{rem:comonad-morphisms}, we apply Theorem~\ref{t:fvm-counting}, deducing that $\cequiv{\comonad{P}_3}$ implies $\cequiv{\Cos}$.
Since $\cequiv{\comonad{P}_3}$ captures equivalence in 3-variable counting logic \emph{without equality} and $\cequiv{\Cos}$ captures cospectrality, we have avoided the need for equality in the logic.
This fact is also a consequence of \cite[Theorem 32]{DawarJR21} and the original theorem of Dawar, Severini, and Zapata~\cite{dawar2017pebble}.
However, the same reasoning allows us to define a comonad morphism $\Cos \Rightarrow \comonad{PR}_3$ where $\comonad{PR}_3$ is the pebble-relation comonad from~\cite{montacute2021pebble}, capturing the restricted conjunction fragment of $3$-variable counting logic.
The universe of $\comonad{PR}_3(G)$ consists of pairs $(s,i)$ with $s \in \comonad{P}_3(G)$ is a sequence of length $n$ and $i \in \{0,1,\dots,n\}$ is an index into the sequence $s$.
In this case, $(c,v_i)$ is mapped to the pair $(s,i)$ where $s = \kappa(c,v_n)$.
Intuitively, $s$ enumerates the entire closed walk $c$ and the index $i$ picks out $v_i$ in this walk.
This proves a genuine strengthening of the previous result.
\end{exa}

\begin{exa}
    The theses \cite[Section~6.3.3]{shah2024thesis} and the forthcoming \cite{karamlou2026thesis} take a look at the quantum monad \(\mathbf Q_d\) from \cite{abramsky2017quantum} on categories of structures with a distinguished `commeasurablity' relation.
    For each $d > 0$, $\mathbf Q_d(A)$ is a structure on $d$-dimensional projector-valued measurements of \(A\), which is used for modeling perfect quantum strategies in a non-local homomorphism game.
    Similarly, one can adapt the \(\Ek\) comonad to the same category.
    It is shown in the \emph{op.\ cit.} (independently) that there is a Kleisli law \(\Ek \circ \mathbf Q_d \Rightarrow \mathbf Q_d \circ \Ek\), giving us that the equivalence in the counting fragment \(A \lequiv{\#\Ek} B\) implies \(\mathbf Q_d(A) \lequiv{\#\Ek} \mathbf Q_d(B)\).
\end{exa}

\section{Coalgebras and open pathwise-embeddings}
\label{s:fvm-standard}

In Sections~\ref{s:fvm-positive} and \ref{s:fvm-counting} we described FVM theorems for the equivalence relations $\pequiv{\C}$ and $\cequiv{\C}$, typically expressing logical equivalence for the positive existential and counting logic variants. To do the same for the full logic requires us to move from the Kleisli category to the richer setting of the \df{Eilenberg--Moore category of coalgebras}.

For a comonad $\C$ on $\CC$, a pair $(A,\alpha)$ where $\alpha\colon A \to \C(A)$ is a morphism in $\CC$ is a \df{$\C$-coalgebra} or simply just \df{coalgebra} if the following two diagrams commute.
\begin{equation}
    \begin{tikzcd}
        A \dar[swap]{\alpha} \ar{rd}{\id} \\
        \C(A) \rar[pos=0.40]{\counit_A} & A
    \end{tikzcd}
    \qquad
    \qquad
    \begin{tikzcd}
        A \dar[swap]{\alpha} \rar{\alpha} & \C(A) \dar{\delta_A} \\\
        \C(A) \rar{\C(\alpha)} & \C(\C(A))
    \end{tikzcd}
    \label{eq:axioms-coalg}
\end{equation}

A \df{morphism of coalgebras} $f\colon (A,\alpha) \to (B,\beta)$ is a morphism $f\colon A\to B$ such that $\beta \circ f = \C(f) \circ \alpha$. We write $\EM \C$ for the Eilenberg--Moore category of coalgebras of $\C$.

The original motivation for the Eilenberg-Moore construction is that there is an obvious forgetful functor $\EMU \C : \EM{\C} \rightarrow \CC$, mapping \((A,\alpha)\) to \(A\), and it has a right adjoint \df{cofree coalgebra} functor $\EMF\C : \CC \rightarrow \EM{\C}$, such that the adjunction $\EMU\C \dashv \EMF\C$ induces the comonad~$\C$. We will need the cofree construction in later developments. Concretely its action on objects is $\EMF\C(\As) = (\C(\As), \delta_\As)$.

\begin{rem}
    A coalgebra $(A,\alpha)$ of any of the game comonads defined in Section~\ref{s:comonads} is endowed with a preorder~$\sqsubseteq_\alpha$, where $x \sqsubseteq_\alpha y$ whenever the word $\alpha(x)$ is a prefix of $\alpha(y)$.
    A \df{forest order} is a poset in which the set of predecessors of any element in a finite chain. It follows from the coalgebra axioms in~\eqref{eq:axioms-coalg} that $\sqsubseteq_\alpha$ is a forest order on the universe. Furthermore, one can check that \(\sqsubseteq_\alpha\) is \df{compatible} with the relational structure of \(A\). This means that \(R^A(a_1,\dots,a_n)\) implies that for every \(i,j\) either \(a_i \sqsubseteq_\alpha a_j\) or \(a_j \sqsubseteq_\alpha a_i\).

    In fact $\EM{\Ek}$ is equivalent to the category of $\sg$-structures endowed with a compatible forest order of depth ${\leq}\,k$. Similar characterisations of the categories $\EM{\Pk}$ and $\EM{\Mk}$ can be made as well, cf.~\cite[Section 9]{abramsky2021relating}.
    \label{r:coalg-order}
\end{rem}
\smallskip
We are now almost ready to define the relation $\sequiv \C$.
Given $A,B\in \CC$, define $A \sequiv{\C} B$
if there exists a span of \emph{open pathwise-embeddings}
\[ \EMF \C (A) \leftarrow Z \rightarrow \EMF \C (B) . \]
We postpone the definition of open pathwise-embeddings to Section~\ref{s:ope-preservation}.

In terms of our example comonads, $A \sequiv{\Ek} B$ and $A \sequiv{\Pk} B$ correspond to agreement on first-order sentences of $k$-bounded quantifier depth and variable count respectively \cite{abramsky2021relating}. Similarly, $(A,a) \sequiv{\Mk} (B,b)$ characterises agreement on modal formulae of modal depth at most $k$.

Returning again to the example of reducts, showing the trivial fact that $A \sequiv \Ek B$ implies $\fgST{A} \sequiv \Ek \fgST{B}$ in our setting suggests the strategy:
\begin{enumerate}
    \item \label{en:lift}
    Lift an $n$-ary operation $H$ to the level of coalgebras, in a manner that commutes with the construction $\EMF \C (-)$ of cofree coalgebras.
    \item \label{en:preservation} Check that open pathwise-embeddings are preserved by the lifted operation.
\end{enumerate}
Tackling step~\ref{en:lift} is the topic of Section~\ref{s:lifting-operations}. Step~\ref{en:preservation} is the subject of the subsequent Section~\ref{s:ope-preservation}.

\subsection{Lifting operations to coalgebras}
\label{s:lifting-operations}

Here we focus on the task of lifting a given operation to the level of coalgebras.
By an \df{Eilenberg--Moore lifting}, or just \df{lifting} in short, of a functor $\op : \prod_i \CC_i \rightarrow \CD$ we mean a functor
$\lop : \prod_i \EM{\C_i} \rightarrow \EM{\D}$ such that the following diagram commutes
up to isomorphism:
\begin{equation}
    \begin{tikzcd}
        \prod_i \EM{\C_i} \rar{\lop} & \EM{\D} \\
        \prod_i \CC_i \uar{\prod_i \EMF{\C_i}} \rar{\op} & \CD \uar{\EMF{\D}}
    \end{tikzcd}
    \label{eq:lifting}
\end{equation}
To this end, we generalise a standard technique from monad theory, used for lifting adjunctions to the Eilenberg--Moore categories \cite{johnstone1975adjoint} or to lifting monoidal structure to the algebras of commutative monad \cite{kock1971closed} (see also~\cite{jacobs1994semantics,seal2013,jaklmarsdenshah2022bim}).
For such a lifting to exist, given a Kleisli law, it is sufficient that the codomain category of coalgebras has sufficient equalisers.
In our case, it is enough to check that the comonad on the target category preserves embeddings and the rest is ensured by Linton's theorem~\cite{linton1969coequalizers}.
In the following paragraphs we review the terminology necessary for the said theorem.

We say that a pair of classes of morphisms \((\Qu, \Em)\) is a \df{weak factorisation system} in a category \(\CC\) if:
\begin{itemize}
    \item Every morphism \(f\) in \(\CC\) factorises as \(f = m \circ e\) for some \(m\in \Em\) and \(e \in \Qu\).
    \item \(\Qu = \{ e \mid \forall m\in \Em, e \pitchfork m\}\) and \(\Em = \{ m \mid \forall e \in \Qu, e \pitchfork m\}\), where \(e \pitchfork m\) expresses that \(e\) is weakly orthogonal to \(m\), that is, for any commutative diagram as shown on the left-hand side below
    \[
        \begin{tikzcd}
            \bullet \rar{e} \dar & \bullet \dar \\
            \bullet \rar{m} & \bullet
        \end{tikzcd}
        \qquad
        \qquad
        \begin{tikzcd}
            \bullet \rar{e} \dar & \bullet \dar \ar[swap]{ld}{d} \\
            \bullet \rar{m} & \bullet
        \end{tikzcd}
    \]
    there exists a diagonal morphism \(d\) as shown in the diagram on the right-hand side above, making the diagram commute.
\end{itemize}
Morphisms of \(\Qu\) are called \df{quotients} and are denoted by \(\quot\) and morphisms of \(\Em\) are \df{embeddings} which we denote by \(\emb\).
We further say that \((\Qu,\Em)\) is a \df{proper factorisation system} if it is a weak factorisation system such that $\Qu$ is a class of epimorphisms and $\Em$ is a class of monomorphisms.

\begin{exa}
    Factorisation systems are omnipresent in category theory. For example, the category of sets is equipped with the (surjectives, injectives) proper factorisation system. Similarly, \(\Rel\) and \(\Rels\) come equipped with \((\text{epis}, \text{strong monos})\) factorisation system \cite{adamek1994locally}. In this case epis are precisely the surjective homomorphisms of (pointed) $\sg$-structures and strong monos are precisely the \df{embeddings}, that is, injective homomorphisms $f\colon A\to B$ which reflect relations.
    In fact, strong monos are the same as regular, extremal, or effective monos in this \(\Rel\) and \(\Rels\).

    In the following, we always assume that (pointed) \(\sg\)-structures come equipped with (epi, strong mono) factorisation system.
\end{exa}

Finally, for a comonad \(\C\) on a category \(\CC\) with a proper factorisation system we say that \(\C\) \df{preserves embeddings} if $\C(f)$ is an embedding whenever $f$ is. In similar fashion, an $n$-ary operation $\op$ preserves embeddings if $\op(f_1,\dots,f_n)$ is an embedding whenever $f_1,\dots,f_n$ are embeddings in their respective categories.

\begin{thm}[Linton \cite{linton1969coequalizers}]
        \label{t:linton}
        Given a comonad \(\C\) on a category \(\CC\) equipped with a proper factorisation system \((\Qu, \Em)\) if
        \begin{enumerate}
            \item $\C$ preserves $\Em$-morphisms,
            \item $\CC$ has coproducts and
            \item \(\CC\) is \df{$\Em$-well-powered}, that is, every object \(A\) of \(\CC\) has, up to isomorphism, only a set of embeddings \(X \emb A\).
        \end{enumerate}
        then $\EM \C$ has equalisers.
\end{thm}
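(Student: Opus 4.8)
The statement is the dual of Linton's original coequaliser theorem, so the plan is to dualise his argument. The forgetful functor $\EMU\C \colon \EM\C \to \CC$ is a left adjoint inducing $\C$, hence comonadic, and therefore \emph{creates colimits}; in particular $\EM\C$ inherits the coproducts of $\CC$ guaranteed by hypothesis~(2). Equalisers, however, are limits, and a comonadic functor only creates those limits preserved by $\C$ and $\C^2$, which we cannot assume (and $\CC$ need not even have equalisers). The equaliser of a parallel pair $f,g\colon (A,\alpha)\to(B,\beta)$ must therefore be built by hand, as the \emph{largest subcoalgebra of $(A,\alpha)$ on which $f$ and $g$ agree}. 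I will obtain it as a join of subcoalgebras, for which I first lift the factorisation system to $\EM\C$ and then assemble the join from coproducts.

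First I would lift $(\Qu,\Em)$ to $\EM\C$, declaring a coalgebra morphism to be a quotient (respectively an embedding) when its underlying $\CC$-morphism lies in $\Qu$ (respectively $\Em$). The content is to factor a coalgebra morphism $h\colon(A,\alpha)\to(B,\beta)$: factor the underlying map as $A \xrightarrow{e} I \xrightarrow{m} B$ with $e\in\Qu$ and $m\in\Em$, and equip $I$ with a coalgebra structure. Here hypothesis~(1) is essential: since $\C$ preserves $\Em$, the map $\C m$ is an embedding, and the square with top $e$, bottom $\C m$, left leg $\C e\circ\alpha$ and right leg $\beta\circ m$ commutes (using that $h$ is a coalgebra morphism). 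The lifting property $e\pitchfork \C m$ then supplies a diagonal $\iota\colon I\to\C I$; one checks, cancelling the epimorphism $e$ on the left and the monomorphism $\C m$ (or $\C^2 m$) on the right, that $\iota$ satisfies the coalgebra axioms of~\eqref{eq:axioms-coalg} and that $e,m$ become coalgebra morphisms. A routine diagonal-fill argument shows these two classes form a factorisation system on $\EM\C$, which is proper because $\EMU\C$ is faithful and so reflects monos and epis. Finally, $\EM\C$ is $\Em$-well-powered, since a subcoalgebra of $(A,\alpha)$ has underlying an $\Em$-subobject of $A$, of which there is only a set by hypothesis~(3).

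With this in place, I would construct the equaliser of $f,g$ as follows. By $\Em$-well-poweredness the subcoalgebras $m_S\colon(S,\sigma)\rightarrowtail(A,\alpha)$ with $f\circ m_S=g\circ m_S$ form a set; let $(E,\epsilon)\rightarrowtail(A,\alpha)$ be the $\Em$-part of the lifted factorisation of the canonical map from their coproduct $\coprod_S (S,\sigma)$ into $(A,\alpha)$. Since $f$ and $g$ agree on each summand they agree after the coproduct's universal map, and since the comparison map onto $E$ is a quotient, hence epic, $f$ and $g$ agree on $E$; thus $E$ is itself a subcoalgebra equalising $f,g$, and the largest one. For the universal property, given any $t\colon(T,\tau)\to(A,\alpha)$ with $f\circ t=g\circ t$, factor $t$ as a quotient followed by an embedding; cancelling the epimorphism shows its image is a subcoalgebra equalising $f,g$, hence one of the $S$, so it factors through $E$, and therefore $t$ does too. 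Uniqueness is immediate because $E\rightarrowtail A$ is monic.

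The main obstacle is the first step: lifting the factorisation system, and specifically the diagonal-fill argument that uses hypothesis~(1) to produce the coalgebra structure $\iota$ on the image and to verify it satisfies the coalgebra laws. Once the lifted proper factorisation system and $\Em$-well-poweredness are secured, the join construction and its universal property are the dual of the standard ``largest subobject equalising two maps'' argument and amount to routine bookkeeping.
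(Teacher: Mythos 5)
Your proof is correct and follows essentially the same route as the paper: lift the proper factorisation system to $\EM{\C}$ using preservation of embeddings (the paper's Lemma~\ref{l:lift-fs}), transfer coproducts and well-poweredness along the forgetful functor, and obtain the equaliser of a parallel pair as the $\Em$-image of the canonical map from the coproduct of all equalising subobjects (the paper's Lemma~\ref{l:linton}). The only difference is packaging --- the paper isolates the image construction as an abstract lemma about any well-powered category with a proper factorisation system and coproducts, and cites Linton for the lifting lemma, whereas you work directly with subcoalgebras and sketch the diagonal-fill argument inline --- but the mathematical content is the same.
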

\begin{rem}
This is essentially the dual of~\cite[Proposition 4]{linton1969coequalizers}, up to some small technical differences. We provide a full proof in the form we require for completeness.
\end{rem}

To prove Linton's theorem, we first establish some preliminaries. We need a standard fact that comonads that preserve embeddings lift the factorisation system from the underlying category to their category of coalgebras. This is essential not only for the proof of Theorem~\ref{t:linton} but also for the definition of~\(\lequiv\C\) in Section~\ref{s:ope-preservation}.

\begin{lem}
    \label{l:lift-fs}
    Let \(\C\) be a comonad on a category \(\CC\) with a proper factorisation system \((\Qu,\Em)\). If \(\C\) preserves embeddings then \(\EM\C\) admits a proper factorisation system $(\ol{\Qu}, \ol{\Em})$ where a morphism of coalgebras $h \colon (A,\alpha) \to (B,\beta)$ is in $\ol{\Qu}$ (resp.\ $\ol{\Em}$) if the underlying morphism $h \colon A \to B$ is in~$\Qu$ (resp.~$\Em$).
\end{lem}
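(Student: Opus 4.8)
The plan is to lift the factorisation system along the forgetful functor $\EMU\C \colon \EM\C \to \CC$, declaring a coalgebra morphism to lie in $\ol\Qu$ (resp.\ $\ol\Em$) exactly when its underlying morphism lies in $\Qu$ (resp.\ $\Em$), as stated. Since $\EMU\C$ is faithful, everything reduces to transporting data across it, and the single nontrivial input is the hypothesis that $\C$ preserves embeddings.

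First I would establish the factorisation. Given a coalgebra morphism $h\colon (A,\alpha)\to(B,\beta)$, factor the underlying morphism in $\CC$ as $h = m\circ e$ with $e\colon A \quot X$ in $\Qu$ and $m\colon X \emb B$ in $\Em$. To equip $X$ with a coalgebra structure $\xi\colon X \to \C(X)$ I would solve the lifting problem posed by the square with top edge $\C(e)\circ\alpha\colon A \to \C(X)$, left edge $e$, bottom edge $\beta\circ m\colon X \to \C(B)$ and right edge $\C(m)$; this square commutes precisely because $h$ is a coalgebra morphism. Here is the crucial use of the assumption: since $\C$ preserves embeddings we have $\C(m)\in\Em$, and as $e\in\Qu$ the weak orthogonality $e\pitchfork\C(m)$ supplies a diagonal $\xi$. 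By construction $\xi\circ e = \C(e)\circ\alpha$ and $\C(m)\circ\xi=\beta\circ m$, i.e.\ $e$ and $m$ become coalgebra morphisms. It then remains to verify that $(X,\xi)$ satisfies the counit and comultiplication axioms of~\eqref{eq:axioms-coalg}. As $e$ is epi by properness, it suffices to check each identity after precomposition with $e$, whereupon both sides rewrite to the same expression in $\alpha$ using the coalgebra axioms for $(A,\alpha)$ together with naturality of $\counit$ and $\delta$. This is the main technical step, although each verification is a short diagram chase forced by epi-cancellation.

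Next I would prove weak orthogonality of $\ol\Qu$ against $\ol\Em$. Given a commuting square of coalgebra morphisms with left edge $e\in\ol\Qu$ and right edge $m\in\ol\Em$, the underlying square in $\CC$ admits a diagonal $d$ since $(\Qu,\Em)$ is a factorisation system; that $d$ is itself a coalgebra morphism again follows by precomposing the defining identity with the epimorphism $e$ and using that the two triangles of the square are coalgebra morphisms. For the saturation conditions $\ol\Qu = {}^{\pitchfork}\ol\Em$ and $\ol\Em = \ol\Qu^{\pitchfork}$, the orthogonality just shown gives one inclusion of each. For the converse I would run the standard retract argument: if $f \pitchfork \ol\Em$, factor $f = m\circ e$ as above and lift $f$ against $m$ to exhibit $f$ as a retract of $e$ in the arrow category. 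Since $\EMU\C$ preserves retracts and $\Qu$ is closed under retracts in $\CC$ (being the left class of a weak factorisation system), we get $f\in\ol\Qu$; the dual argument handles $\ol\Em$.

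Finally, properness is immediate: $\EMU\C$ is faithful, hence reflects monomorphisms and epimorphisms, so the underlying embeddings (monos) and quotients (epis) yield monos and epis in $\EM\C$. The only place the hypothesis that $\C$ preserves embeddings is genuinely needed is in solving the lifting problem that produces $\xi$; I expect the verification of the comultiplication axiom for $(X,\xi)$ to be the most delicate bookkeeping, but it is entirely determined once the naturality square for $\delta$ is in place.
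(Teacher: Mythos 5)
Your proof is correct. Note, however, that the paper does not actually spell out an argument for this lemma: its entire proof is a citation, observing that the statement is (the dual of) Lemma~2 of Linton's \emph{Coequalizers in categories of algebras}, minus the functoriality of factorisations. What you have written is, in effect, a self-contained reconstruction of that outsourced proof, and every step checks out: the coalgebra structure $\xi$ on the middle object is produced by the diagonal fill-in of $e \pitchfork \C(m)$, which is indeed the unique place where preservation of embeddings by $\C$ is used; the coalgebra axioms for $(X,\xi)$ and the fact that diagonals of lifting squares in $\EM\C$ are coalgebra morphisms both follow by cancelling the epimorphism $e$, which is where properness of $(\Qu,\Em)$ enters; the saturation conditions $\ol\Qu = {}^{\pitchfork}\ol\Em$ and $\ol\Em = \ol\Qu^{\pitchfork}$ follow by the standard retract argument, using that both classes of a weak factorisation system are closed under retracts and that the forgetful functor preserves retract diagrams; and properness of $(\ol\Qu,\ol\Em)$ follows since the faithful functor $\EMU\C$ reflects epimorphisms and monomorphisms. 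So the two routes agree mathematically; what your version buys is an explicit, checkable argument in place of a reference, while the paper's version buys brevity at the cost of leaving the reader to dualise and adapt Linton's lemma.
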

\begin{proof}
    The statement is analogous to that of Lemma 2 in \cite{linton1969coequalizers}, except that we do not require functoriality of factorisations which, however, plays no role in the proof.
\end{proof}

Next, we prove an intermediate statement from which Linton's theorem follows. In the proof of Theorem~\ref{t:linton} the category \(\CA\) will be \(\EM\C\) for some comonad \(\C\).

\begin{lem}
    \label{l:linton}
    If a category \(\CA\) has a proper factorization system \((\Qu,\Em)\), is \(\Em\)-well-powered and has coproducts then \(\CA\) has equalisers.
\end{lem}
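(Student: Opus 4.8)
The plan is to realise the equalizer of a parallel pair $f,g\colon A \rightrightarrows B$ as the \emph{join of all subobjects of $A$ on which $f$ and $g$ agree}, a construction that is available precisely because of the three hypotheses: coproducts, a proper factorisation system, and $\Em$-well-poweredness. Each hypothesis supplies exactly one ingredient of the construction, so the work is in assembling them and checking the universal property.

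First I would collect the relevant subobjects. Consider all embeddings $m\colon X \emb A$ with $f\circ m = g\circ m$. By $\Em$-well-poweredness there is, up to isomorphism, only a \emph{set} of embeddings into $A$, so I may choose a set-indexed family of representatives $\{m_i\colon X_i \emb A\}_{i\in I}$ of those equalizing $f$ and $g$. Using coproducts, form $\coprod_{i\in I} X_i$ together with the cotuple $[m_i]_{i\in I}\colon \coprod_i X_i \to A$, and factor it through the factorisation system as
\[ \coprod\nolimits_i X_i \xrightarrow{\ e\ } E \xrightarrow{\ m\ } A, \qquad e\in\Qu,\ m\in\Em. \]
I claim $m\colon E \emb A$ is the desired equalizer. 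Precomposing with the coproduct injection $\iota_i$ gives $m\circ e\circ \iota_i = m_i$, hence $f\circ m\circ e\circ\iota_i = f\circ m_i = g\circ m_i = g\circ m\circ e\circ\iota_i$ for every $i$; since the $\iota_i$ are jointly epic this yields $f\circ m\circ e = g\circ m\circ e$, and cancelling the epimorphism $e$ (quotients are epi by properness) gives $f\circ m = g\circ m$. Thus $m$ equalizes $f$ and $g$.

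It remains to verify the universal property, which I expect to be the main point requiring care. Given any $h\colon Y \to A$ with $f\circ h = g\circ h$, factor $h = m'\circ e'$ with $e'\in\Qu$ and $m'\colon X' \emb A$ in $\Em$. Cancelling the epimorphism $e'$ from $f\circ m'\circ e' = g\circ m'\circ e'$ shows $f\circ m' = g\circ m'$, so $m'$ lies in our family: it is isomorphic to some $m_i$, which by construction factors as $m_i = m\circ(e\circ\iota_i)$ through $m$. Composing the intervening isomorphism with $e\circ\iota_i$ and then with $e'$ produces a morphism $u\colon Y \to E$ with $m\circ u = h$. Uniqueness of $u$ is immediate since $m$, being in $\Em$, is a monomorphism. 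The crux is therefore the recognition that \emph{every} subobject on which $f$ and $g$ agree is dominated by $E$; this rests on combining $\Em$-well-poweredness (to form a legitimate set-indexed coproduct) with the cancellation of quotient-epimorphisms supplied by properness, and notably the diagonal fill-in property of the factorisation system is not needed beyond these two facts.
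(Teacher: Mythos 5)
Your proof is correct and takes essentially the same route as the paper's: both use $\Em$-well-poweredness to collect a set of representative equalizing embeddings, form the cotuple out of their coproduct, factor it through the proper factorisation system as a quotient followed by an embedding $m$, and verify the universal property by factoring an arbitrary equalizing morphism $h$ and cancelling the quotient-epimorphism to land on one of the chosen representatives. The only cosmetic differences are that you make explicit the joint epimorphy of the coproduct injections and the intervening isomorphism with a representative $m_i$, both of which the paper leaves implicit.
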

\begin{proof}
    Consider parallel pair:
    \[
    \begin{tikzcd}
    \As
    \rar[yshift=0.5em]{f}
    \rar[swap,yshift=-0.5em]{g}
    & \Bs
    \end{tikzcd}
    \]
    As~$\CA$ is well-powered and has a factorization system, there is a set~$I$ of isomorphism classes~\([m_i]\) of embeddings~$m_i : \struct{M}_i \rightarrow \As$ equalizing~$f$ and~$g$. As
    \[ f \circ [m_i] = [f \circ m_i] = [g \circ m_i] = g \circ [m_i] \]
    the induced morphism \(\wt m : \coprod_i M_i \to A\) equalises \(f\) and \(g\).
    As~$\CA$ has a factorization system, the morphism~$\wt m$ factors as:
    \[
    \begin{tikzcd}
    \coprod_i \struct{M}_i \rar[twoheadrightarrow]{e} & \struct{E} \rar[rightarrowtail]{m} & A
    \end{tikzcd}
    \]
    Next, assume~$h : \Cs \rightarrow \As$ equalises~$f$ and~$g$. Observe that \(m'\) in the \((\Qu,\Em)\) factorisation \(m'\circ e'\) of \(h\) also equalises \(f\) and \(g\) since \(e'\) is an epimorphism. We may assume that \(m ' = m_i\) for some~$i$.
    Let~$j$ be the corresponding coproduct injection. The following commutes:
    \[
    \begin{tikzcd}
    \coprod_i \struct{M}_i \rar[twoheadrightarrow]{e} & \struct{E} \rar[rightarrowtail]{m} & \As \rar[yshift=0.5em]{f} \rar[swap,yshift=-0.5em]{g}& \Bs \\
    \struct{M}_i \uar{j} & \Cs \lar{e'} \urar[swap]{h} & &
    \end{tikzcd}
    \]
    As the factorization system is proper, $m$ is a monomorphism, and so~$e \circ j \circ e'$ is the unique morphism \(C \to E\) factorising \(h\) via \(m\).
    Consequently, \(m\) equalises \(f\) and \(g\).
\end{proof}

We're now ready to prove Linton's theorem.
\begin{prf}[Proof of Theorem~\ref{t:linton}.]
    We check that \(\EM\C\) satisfies the conditions of Lemma~\ref{l:linton}.
    By Lemma~\ref{l:lift-fs}, since \(\C\) preserves embeddings the proper factorisation system \((\Qu,\Em)\) on \(\CC\) lifts to a proper factorisation system \((\ol\Qu, \ol\Em)\) on \(\EM\C\). By the definition of \(\ol\Em\) since \(\CC\) is well-powered so is \(\EM\C\). Finally, \(\EM\C\) has coproducts since the forgetful functor \(\EM\C \to \CC\) reflects them.
\end{prf}

All the comonads that we have introduced thus far preserve embeddings and \(\Rel\) and \(\Rels\) both have coproducts. In the case of \(\Rel\) these are disjoint unions of structures and in the case of \(\Rels\) we glue the disjoint union by the distinguished elements. Finally, we see that both categories are also well-powered because each embedding \(X \emb A\) is determined by its image in \(A\) and there are only set-many of these.

\begin{cor}
    \label{c:equalisers}
    If a comonad $\C$ on $\Rel$ or on $\Rels$ preserves embeddings then $\EM{\C}$ has equalisers.
    In particular, \(\EM\Ek\), \(\EM\Pk\) and \(\EM\Mk\) have equalisers.
\end{cor}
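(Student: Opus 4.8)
The plan is to recognise this as a direct application of Linton's theorem (Theorem~\ref{t:linton}), whose three hypotheses I would verify for the base categories $\Rel$ and $\Rels$ equipped with their (epi, strong mono) proper factorisation system. Condition~(1), that $\C$ preserves embeddings, is precisely the standing assumption of the corollary, so nothing needs to be done there. For condition~(2), I would observe that both $\Rel$ and $\Rels$ have coproducts: in $\Rel$ these are the disjoint unions of structures, while in $\Rels$ one takes the disjoint union and then glues along the distinguished elements. For condition~(3), I would argue that each category is $\Em$-well-powered, since an embedding $X \emb A$ is determined up to isomorphism by its image in $A$, and a fixed structure $A$ has only set-many subsets of its universe, hence only set-many substructures.

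Having checked all three conditions, Theorem~\ref{t:linton} immediately yields that $\EM\C$ has equalisers, establishing the general claim.

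For the ``in particular'' part, I would recall that $\Ek$, $\Pk$ and $\Mk$ all preserve embeddings, as already noted for the game comonads introduced so far, so the general statement applies to each of them and gives equalisers in $\EM\Ek$, $\EM\Pk$ and $\EM\Mk$.

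I do not anticipate any serious obstacle, as the corollary is essentially a specialisation of Linton's theorem and every hypothesis is either assumed outright or a routine property of relational structures. The only point requiring mild care is confirming that the three example comonads genuinely preserve embeddings, but this follows from their concrete definitions, where the coextension and the interpretation of each relation are built pointwise from the underlying structure, so the resulting map reflects relations exactly when the original embedding does.
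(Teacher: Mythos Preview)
Your proposal is correct and matches the paper's approach essentially verbatim: the paper also derives the corollary by checking the three hypotheses of Linton's theorem for $\Rel$ and $\Rels$ with the (epi, strong mono) factorisation system, describing coproducts and well-poweredness exactly as you do, and then notes that $\Ek$, $\Pk$, $\Mk$ preserve embeddings.
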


Let us come back to the question of lifting functors $\op\colon \prod_i\CC_i\to \CD$ to the category of coalgebras, in sense of \eqref{eq:lifting} above. It might come as a surprise that nothing more than a Kleisli law and existence of equalisers is needed for the lift to exist.

\begin{restatable}{thm}{EMLifting}
    \label{t:em-lifting}
    If $\kappa$ is a Kleisli law of type $\D \circ \op \Rightarrow \op \circ \prod_i \C_i$ and $\EM\D$ has equalisers then the lifting of $\op$ to $\lop : \prod_i \EM{\C_i} \rightarrow \EM{\D}$ exists.
\end{restatable}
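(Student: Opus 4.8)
The plan is to construct $\lop$ as an equalizer built from the Kleisli law, and to verify that the defining square~\eqref{eq:lifting} commutes by exhibiting, on cofree coalgebras, a \emph{split} (hence absolute) equalizer whose defining identities are supplied exactly by the two Kleisli-law axioms. First I would reduce to the case $n = 1$: by the Eilenberg--Moore analogue of Lemma~\ref{lem:n-ary} (which holds by the same routine argument as for Kleisli categories), $\EM{\prod_i \C_i} \cong \prod_i \EM{\C_i}$ compatibly with the cofree functors, and by item~\ref{it:kl-four} an $n$-ary Kleisli law is a unary Kleisli law for the product comonad. So it suffices to treat a single comonad $\C$ on $\CC$, a functor $\op\colon \CC \to \CD$, and a Kleisli law $\kappa\colon \D\op \Rightarrow \op\C$.

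The structural fact I would exploit is that every coalgebra is an equalizer of cofree ones: for $(A,\alpha)$, the axioms~\eqref{eq:axioms-coalg} make $\alpha\colon (A,\alpha) \to \EMF\C(A)$ the equalizer in $\EM\C$ of $\EMF\C(\alpha) = \C(\alpha)$ and $\delta_A \colon \EMF\C(A) \rightrightarrows \EMF\C(\C A)$ (indeed the underlying fork $A \xrightarrow{\alpha} \C A \rightrightarrows \C\C A$ is a split equalizer with retractions $\counit^\C_A$ and $\counit^\C_{\C A}$). Mirroring this on the $\D$ side, I define $\lop(A,\alpha)$ to be the equalizer in $\EM\D$ — which exists by hypothesis — of the two coalgebra morphisms $\EMF\D(\op\alpha)$ and $(\kappa_A)^* \colon \EMF\D(\op A) \rightrightarrows \EMF\D(\op\C A)$, where $(\kappa_A)^* = \D(\kappa_A) \circ \delta_{\op A}$ is the coextension of the Kleisli arrow $\kappa_A$. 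On morphisms, a coalgebra morphism $h\colon (A,\alpha) \to (B,\beta)$ induces a morphism of the two defining forks via $\EMF\D(\op h)$ and $\EMF\D(\op\C h)$: the $\EMF\D(\op\alpha)$-squares commute because $h$ is a coalgebra morphism, and the $(\kappa_A)^*$-squares commute by naturality of $\kappa$, so the universal property yields a unique $\lop(h)$, with functoriality following from uniqueness.

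The heart of the argument, and the step I expect to be the main obstacle, is verifying that~\eqref{eq:lifting} commutes up to isomorphism, i.e.\ that $\lop(\EMF\C X) \cong \EMF\D(\op X)$ naturally. Taking $(A,\alpha) = \EMF\C(X) = (\C X, \delta_X)$, the parallel pair becomes $u = \EMF\D(\op\delta_X)$ and $v = (\kappa_{\C X})^*$, and I claim the equalizer is $\EMF\D(\op X)$ with equalizing map $w = (\kappa_X)^*$. I would prove this by showing the underlying fork in $\CD$,
\[ \D\op X \xrightarrow{\ w_0\ } \D\op\C X \rightrightarrows \D\op\C\C X, \]
is a split equalizer with retractions $s = \D(\op\,\counit^\C_X)$ and $t = \D(\op\C\,\counit^\C_X)$. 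Each split-equalizer identity then reduces to a single clean fact: that $w_0$ equalises the pair is exactly axiom~\ref{ax:kl-law-comultiplication}; the retraction identity $s \circ w_0 = \id$ is axiom~\ref{ax:kl-law-counit} composed with the comonad law $\D(\counit^\D)\circ\delta = \id$; the identity $t \circ u = \id$ is the comonad law $\C(\counit^\C)\circ\delta = \id$; and the remaining identity $t \circ v = w_0 \circ s$ follows from naturality of $\kappa$ and $\delta$. Since a split equalizer is absolute, this fork is an equalizer in $\CD$; as the fork itself lives in $\EM\D$ and $\D(w_0)$ is a split mono, a short diagram chase — pushing the coalgebra-morphism condition through $\D(w_0)$ and cancelling — shows the induced factorisation is a coalgebra morphism, so $\EMF\D(\op X)$ with $w$ is the equalizer in $\EM\D$. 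Naturality of $w = (\kappa_X)^*$ in $X$ upgrades this to a natural isomorphism, establishing~\eqref{eq:lifting}. The delicate bookkeeping is in matching the two Kleisli-law axioms to the correct split-equalizer identities (in particular, it is $v$, not $u$, that must satisfy $t \circ v = w_0 \circ s$); once that is set up correctly, every identity collapses to \ref{ax:kl-law-counit}, \ref{ax:kl-law-comultiplication}, or a comonad law.
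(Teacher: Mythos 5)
Your proposal is correct and follows essentially the same route as the paper's proof: $\lop$ is defined on objects as the equaliser of the same parallel pair ($\EMF{\D}(\op(\vec{\alpha_i}))$ against the coextension of $\kappa$), functoriality comes from the universal property of equalisers, and the lifting square \eqref{eq:lifting} is verified by showing that $\EMF{\D}(\op(\veci A))$, mapped in via the coextension $(\kappa)^*$, is an equaliser of the specialised pair at cofree coalgebras. Your repackaging of that last verification as a split (hence absolute) equaliser in the base category with retractions $\D(\op(\counit))$ and $\D(\op(\C(\counit)))$, followed by a transfer to $\EM{\D}$, and your preliminary reduction to $n=1$ via the product comonad, are presentational variants rather than a different argument: the paper checks the universal property directly inside $\EM{\D}$ using exactly the same retraction $\D\op(\vec{\counit_{A_i}})$ and the same appeal to \ref{ax:kl-law-counit}, \ref{ax:kl-law-comultiplication} and naturality, and its concluding remark even records that the pair is coreflexive with this common retraction.
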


\begin{rem}[Notation]
    In the following we often just write $\alpha$ instead of $(A,\alpha)$ in expressions that involve a coalgebra $(A,\alpha) \in \EM \C$. In particular, the morphism $\iota_{\vec{\alpha_i}}$ below is parametrised by the coalgebras $\{\, (A_i,\alpha_i)\,\}_i$ and the same is true for the resulting coalgebra~$\lop(\vec{\alpha_i})$.

    Furthermore, we often omit the subscripts for the components of natural transformations, if they can be easily inferred from the context. For example, we shall write \(\kappa\) instead of \(\kappa_{A_1,\dots,A_n}\) and $\lambda_H$ instead of the more verbose $\lambda_{H(A_1,\dots,A_n)}$.
\end{rem}

\begin{prf}
    Let \((A_i,\alpha_i) \in \EM{\C_i}\) be a collection of coalgebras, for \(i \in I\).
    By our assumptions, we can form the following equaliser in~\(\EM\D\).
    \begin{equation}
        \begin{tikzcd}[column sep=2.5em]
            \lop(\vec{\alpha_i})
                \rar{\iota_{\vec{\alpha_i}}}
            & \EMF{\D}(\op(\veci A))
                \ar[yshift=0.5em]{rr}{\EMF{\D}(\kappa)\circ \delta}
                \ar[swap,yshift=-0.5em]{rr}{\EMF{\D}(\op(\vec{\alpha_i}))}
            &
            & \EMF{\D}(\op(\vec{\C_i(\As_i)}))
        \end{tikzcd}
        \label{eq:equaliser-functor}
    \end{equation}
    Observe that the mapping on objects $\vec{\alpha_i} \mapsto \lop(\vec{\alpha_i})$ extends to a functor. Given coalgebra morphisms $f_i \colon (A_i, \alpha_i) \to (B_i, \beta_i)$, for $i\in \{1,\dots,n\}$, obtain a diagram of the following form.
    \begin{equation}
        \begin{tikzcd}[column sep=2.5em]
            \lop(\vec{\alpha_i})
                \dar[dashed,swap]{\lop(f_i)}
                \rar{\iota_{\vec{\alpha_i}}}
            & \EMF{\D}(\op(\vec{\As_i}))
                \dar[swap]{\EMF{\D}(\op(\veci f))}
            & \\
            \lop(\vec{\beta_i})
                \rar[swap]{\iota_{\vec{\beta_i}}}
            & \EMF{\D}(\op(\vec{\Bs_i}))
                \rar[yshift=0.5em]{\EMF{\D}(\kappa)\circ \delta}
                \rar[swap,yshift=-0.5em]{\EMF{\D}(\op(\vec{\beta_i}))}
            & \EMF{\D}(\op(\vec{\C_i(\Bs_i)}))
        \end{tikzcd}
        \label{eq:hatOp-functorial}
    \end{equation}
    It follows from the fact that $f_1,\dots,f_n$ are coalgebra morphisms that $\EMF{\D}(\op(\veci f)) \circ \iota_{\vec{\alpha_i}}$ equalises the parallel morphisms at the bottom, i.e.\ that $\EMF{\D}(\kappa)\circ \delta \circ \EMF{\D}(\op(\veci f)) \circ \iota_{\vec{\alpha_i}} = \EMF{\D}(\op(\vec{\beta_i})) \circ \EMF{\D}(\op(\veci f)) \circ \iota_{\vec{\alpha_i}}$. Then, by the universal property there exists a unique morphism $\lop(\vec{\alpha_i}) \to \lop(\vec{\beta_i})$, which we denote by $\lop(f_i)$, and which makes the square above commute.

    To verify that \(\lop\) is a lifting of \(\op\), let \(A_i\in \CC_i\) be a collection of objects, for \(i\in I\), and consider the following diagram.
    \begin{equation}
        \begin{tikzcd}[column sep=5.8em]
            \EMF\D(\op(\veci A))
                \rar{\EMF\D(\kappa)\mm\circ \delta_H}
            & \EMF{\D}(\op(\vec{\C_i(A_i)}))
                \ar[yshift=0.5em]{r}{\EMF{\D}(\kappa_{\veci \C})\mm\circ \delta_{H\veci \C}}
                \ar[swap,yshift=-0.5em]{r}{\EMF{\D}(\op(\vec{\delta_{A_i}}))}
            & \EMF{\D}(\op(\vec{\C^2_i(\As_i)}))
        \end{tikzcd}
        \label{eq:equaliser-cofree}
    \end{equation}
    We show that this is the equaliser diagram \eqref{eq:equaliser-functor} for the coalgebras \(\{\EMF{\C_i}(A_i)\}_i\).

    First, we check that \(\EMF\D(\kappa)\circ \delta_H\) equalises \(\EMF\D(\kappa_{\veci \C})\circ \delta_{H\veci \C}\) and \(\EMF{\D}(\op(\vec{\delta_{A_i}}))\) by a short computation:
    \begin{align*}
        &\D(\kappa_{\veci \C}) \circ \delta_{H\veci \C} \circ \D(\kappa) \circ \delta_H         \\
        &= \D(\kappa_{\veci \C}) \circ \D^2(\kappa) \circ \delta_{\D \op} \circ \delta_H   &\text{naturality of \(\delta\)} \\
        &= \D(\kappa_{\veci \C}) \circ \D^2(\kappa) \circ \D(\delta_H) \circ \delta_H &\text{comonad law for \(\delta\)} \\
        &= \D\op(\vec{\delta_{A_i}}) \circ \D(\kappa) \circ \delta_H & \text{Kleisli law \eqref{eq:klei-nat}}
    \end{align*}

    Furthermore, assume \(f\colon (V,\nu) \to \EMF{\D}(\op(\vec{\C_i(A_i)}))\) also equalises \(\EMF\D(\kappa_{\veci \C})\mm\circ \delta_{H\veci \C}\) and \(\EMF{\D}(\op(\vec{\delta_{A_i}}))\), then \(\D\op(\vec{\counit_{A_i}}) \circ f\colon (V,\nu) \to \EMF\D(\op(\veci A))\) factors \(f\) via \(\EMF\D(\kappa)\circ \delta_H\) as:
    \begin{align*}
        & \D(\kappa) \circ \delta_H \circ \D\op(\vec{\counit}) \circ f \\
        &= \D(\kappa) \circ \D^2(\op(\vec{\counit})) \circ \delta_H \circ f & \text{naturality of \(\delta\)}\\
        &= \D(\op(\vec{\C_i(\counit_{A_i})})) \circ \D(\kappa_{\veci \C}) \circ \delta_H \circ f  & \text{naturality of \(\kappa\)} \\
        &= \D(\op(\vec{\C_i(\counit_{A_i})})) \circ \D(\op(\vec{\delta_{A_i}})) \circ f  & \text{\(f\) equalises \eqref{eq:equaliser-cofree}}\\
        &= f  &\text{comonad law for \(\delta\) and \(\counit\)}
    \end{align*}

    Uniqueness of this factorisation follows from the fact that \(\EMF\D(\kappa)\circ \delta_H\) is a monomorphism. In fact, it is a split monomorphism, since \(\D\op(\vec{\counit_{A_i}})) \circ \D(\kappa) \circ \delta_H = \D(\counit_{\op(\veci A)}) \circ \delta_H = \id\).
\end{prf}

\begin{rem}
    Observe that \eqref{eq:equaliser-functor} is a coreflexive equaliser, i.e.\ it is an equaliser of a parallel pair of morphisms with a common retraction (post-inverse).  The common retraction of the parallel pair in \eqref{eq:equaliser-functor} is given by \(\D H(\vec{\counit_{A_i}})\). Therefore, the theorem holds in greater generality with the category~\(\EM\D\) only required to have equalisers of coreflexive pairs.
\end{rem}

\begin{rem}
    Appendix A of \cite{balan2011coalgebras} contains a statement analogous to our Theorem~\ref{t:em-lifting} above. In fact the authors show that in case of a unary functor \(\op\colon \CC\to\CD\), the lifting \(\lop\colon \EM\C \to \EM\D\) is the right Kan extension of \(K_\D\circ \op'\) along \(K_\C\) where \(\op'\) is the Kleisli lifting \(\Klei\C \to \Klei\D\) from \eqref{eq:kl-lift} and \(K_\C\colon \Klei\C \to \EM\C\) is the comparison functor.
\end{rem}

\subsection{Open pathwise-embeddings and their preservation}
\label{s:ope-preservation}

In this section we define open pathwise-embeddings and give sufficient conditions for the lifted functors of Section~\ref{s:lifting-operations} to preserve them. In fact, our conditions are a combination of two other categorical properties: parametric adjoints and relative adjoints, which we discuss in Appendix~\ref{s:parametric-relative-adjoints}.

To emphasise the main ingredients of our arguments we restrict to a simplified setting of \df{path categories}. These are categories \(\CA\) equipped with a proper factorisation system \((\Qu, \Em)\) and a choice of \df{path} objects \(\Pa \sue \obj(\CA)\).

\begin{wrapfigure}[5]{r}{6.5em}
    \begin{tikzcd}
        P\dar[>->,swap]{e} \rar[>->]{g} & Q \dar[>->]{m} \ar[dashed,swap]{ld}{d} \\
        X \rar[swap]{f} & Y
    \end{tikzcd}
\end{wrapfigure}
In this setting, an embedding $e\colon P \emb X$ is a \df{path embedding} if \(P\) is a path (i.e.\ \(P\in \Pa\)). Then, \(f\colon X\to Y\) is \df{pathwise-embedding} if $f\circ e$ is an embedding for every path embedding $e\colon P \embed X$. Finally, $f$ is \df{open} if for every
commutative square of solid arrows as shown on the right,
where $e,m,g$ are path embeddings,
there exists a morphism $d\colon Q \to X$ such that $e = d \circ g$ and $m = f \circ d$.

\begin{exa}
    Recall from Remark~\ref{r:coalg-order} that coalgebras $(A, \alpha)$, for either of our example comonad \(\C \in \{\Ek, \Pk,\Mk\}\), come equipped with a forest order $\sqsubseteq_\alpha$ on the universe. In the following, we always assume that paths \(\Pa \sue \EM\C\) are the coalgebras $(A,\alpha)$ which are finite linear orders in $\sqsubseteq_\alpha$.

    Also, since \(\C\) typically preserves embeddings of \(\sg\)-structures, \(\EM\C\) admits a proper factorisation system \((\ol\Qu, \ol\Em)\) by lifting the (epi, strong mono) factorisation system of (pointed) \(\sg\)-structures, cf.\ Lemma~\ref{l:lift-fs}.
    Consequently, we view \(\EM\Ek\), \(\EM\Pk\), and \(\EM\Mk\) as path categories in the above sense.
    In fact, every \emph{arboreal category}, in sense of \cite{AbramskyR21}, is a path category. Note that the notion of path category in \cite{AbramskyR21} is different from ours.

    Furthermore, coalgebra morphisms \(f\colon (A, \alpha) \to (B,\beta)\) in \(\EM\C\), for any comonad~\(\C\) from \(\{\Ek, \Pk,\Mk\}\), are homomorphisms of the underlying structures \(A \to B\) which respect the forest orders. The latter means that \(f\) is a poset morphism (i.e.\ \(a \sqsubseteq_\alpha a'\) implies \(f(a) \sqsubseteq_\beta f(a')\)) which, furthermore, sends elements of height 1 (i.e.\ roots) to elements of height 1, elements of height 2 to elements of height 2, and so on. In case of \(\C = \Pk\) or \(\C = \Mk\) the additional tree decorations such as the pebble index or binary relations need to be preserved too.
    Now, \(f\) is a pathwise-embedding precisely whenever any restriction of \(f\) along a finite branch is an embedding. Finally, \(f\) is open whenever \(f\) is a bounded morphism (in sense of modal logic) with respect to the tree orders.
\end{exa}

\begin{rem}
With the notion of open pathwise-embedding for $\EM \C$ with paths and embeddings specified as above, we recall that
$A \sequiv \C B$ iff there exists a pair of open pathwise-embeddings $\EMF \C(A) \leftarrow Z \rightarrow \EMF \C(B)$, where $\EMF \C(A)$ is the cofree coalgebra on $A$.
\end{rem}

We now come back to our original goal. In order for an $n$-ary operation $\op \colon \prod_i \CC_i \to \CD$ to admit an FVM theorem, we wish to show that its lifting
$\lop\colon \prod\nolimits_i \EM{\C_i} \to \EM \D$
described in Theorem~\ref{t:em-lifting}, sends tuples of open pathwise-embeddings to open pathwise-embeddings.

In our simplified setting, we have a functor \(\lop\colon \prod\nolimits_i \CA_i \to \CB\) between path categories.
In the remainder of this section show that \(\lop\) preserves open pathwise-embeddings assuming the following two conditions:

\begin{axioms}
    \item[\textbf{\namedlabel{ax:s1}{(S1)}}]
    $\lop\colon \prod_i \CA_i \to \CB$ preserves embeddings.

    \item[\textbf{\namedlabel{ax:s2}{(S2)}}]
    Any path embedding $e\colon P \embed \lop(\veci A)$ has a \df{minimal decomposition} as $e_0\colon P \to \lop(\veci P)$ followed by $\lop(\veci e)\colon \lop(\veci P) \to \lop(\veci A)$,
         for some path embeddings $e_i\colon P_i \embed A_i$, for $1 \leq i \leq n$.
\end{axioms}
Minimality in \ref{ax:s2} expresses that for any decomposition of $e$ as $g_0 \colon P \to \lop(\veci Q)$ followed by $\lop(\veci g) \colon \lop(\veci Q) \to \lop(\veci A)$,
for some path embeddings $g_i\colon Q_i \embed A_i$, there exist necessarily unique morphisms $h_i\colon P_i \to Q_i$ such that $e_i = g_i \circ h_i$, for $i=1,\dots,n$.

We are going to need the following standard facts about embeddings of a proper factorisation system. The dual statement for quotients holds too but we do not need it here.

\newpage %
\begin{lem}[e.g.\ Section 2 in \cite{freyd1972categories}]
    \label{l:fs-basics}
    For a category with a proper factorisation system~\((\Qu,\Em)\):
\begin{enumerate}
    \item All equalisers are embeddings.
    \item Embeddings are closed under composition.
    \item If $g \circ f$ is an embedding then so is $f$.
\end{enumerate}
\end{lem}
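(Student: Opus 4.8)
The plan is to work entirely from the orthogonality characterisation of the factorisation system, namely $\Em = \{\, m \mid \forall e \in \Qu,\ e \pitchfork m \,\}$, and to exploit \emph{properness} at every key step. Concretely, to certify that a morphism lies in $\Em$ I would verify that it has the weak right lifting property against every $e \in \Qu$; and wherever a constructed diagonal needs to be pinned down, I would cancel an epimorphism, using that each $e \in \Qu$ is epi (so that in fact the diagonal fillers are unique and well-behaved). I would dispatch items (2) and (3) directly by diagram chasing, and then deduce (1) from the factorisation of the equaliser together with (2).

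For item (2), suppose $m_1 \colon A \emb B$ and $m_2 \colon B \emb C$ are embeddings and consider a lifting problem for $m_2 \circ m_1$ against some $e \colon X \quot Y$ in $\Qu$, with left leg $u \colon X \to A$ and right leg $v \colon Y \to C$. I would solve it in two stages: the outer square exhibits a lifting problem for $m_2$, so since $e \pitchfork m_2$ there is a diagonal $d' \colon Y \to B$ with $d' \circ e = m_1 \circ u$ and $m_2 \circ d' = v$; then the square with bottom $m_1$, right leg $d'$ admits a diagonal $d \colon Y \to A$ with $d \circ e = u$ and $m_1 \circ d = d'$, and this $d$ solves the original problem. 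For item (3), given $g \circ f \in \Em$ and a lifting problem for $f$ against $e \in \Qu$ with legs $u, v$, I would postcompose the bottom and right with $g$ to obtain a lifting problem for $g \circ f$, which has a diagonal $d$ satisfying $d \circ e = u$. It remains to check $f \circ d = v$: from $f \circ d \circ e = f \circ u = v \circ e$ and the fact that $e$ is epi one concludes $f \circ d = v$, so $d$ is the desired filler and $f \in \Em$.

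For item (1), let $m$ be the equaliser of a parallel pair $f, g$ and factor it as $m = m' \circ e'$ with $e' \in \Qu$ and $m' \in \Em$. Since $e'$ is epi, $f \circ m' = g \circ m'$ follows from $f \circ m = g \circ m$, so $m'$ also equalises the pair; the universal property of the equaliser then yields $h$ with $m \circ h = m'$. Using that $m$ and $m'$ are monomorphisms (properness, plus equalisers being monic) I would show $h \circ e' = \id$ and $e' \circ h = \id$, so $e'$ is an isomorphism. As isomorphisms lie in $\Em$, item (2) gives $m = m' \circ e' \in \Em$.

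The main obstacle, such as it is, is not computational but conceptual: all three statements fail for a merely weak factorisation system, and the arguments pivot precisely on properness. The epi half of properness powers the cancellation in item (3) and the identification of $m'$ as an equaliser in item (1), while the mono half is what forces $e'$ to be invertible rather than a bare split mono. Once this is recognised, every step reduces to a routine diagram chase against the lifting property.
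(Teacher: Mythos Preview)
Your argument is correct. The paper does not give its own proof of this lemma; it simply records the three items as standard facts with a pointer to Freyd--Kelly, so there is nothing to compare against beyond noting that your orthogonality-plus-properness route is exactly the classical one.
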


These basic properties of embeddings listed in Lemma~\ref{l:fs-basics} allow us to show the first half of our desired result, preservation of pathwise-embeddings.

\begin{restatable}{lem}{PEPreserved}
    If $f_1,\dots,f_n$ are pathwise-embeddings in $\CA_1$, \dots, $\CA_n$, respectively, then so is $\lop(f_1,\dots,f_n)$.
    \label{l:pe-preserved}
\end{restatable}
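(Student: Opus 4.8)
The plan is to verify the defining condition of a pathwise-embedding directly. Fix an arbitrary path embedding $e\colon P \embed \lop(\veci A)$; the goal is to show that the composite $\lop(\veci f)\circ e$ is an embedding. Since $P$ is a path sitting over the composite object $\lop(\veci A)$, the first move is to break $e$ apart into components coming from the individual factors, and for this I would invoke assumption~\ref{ax:s2}.

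By~\ref{ax:s2}, the path embedding $e$ admits a decomposition $e = \lop(\veci e)\circ e_0$, where $e_0\colon P \to \lop(\veci P)$ and each $e_i\colon P_i \embed A_i$ is a path embedding. (Only the \emph{existence} of such a decomposition is needed here; its minimality plays no role and is reserved for the later openness argument.) Using functoriality of $\lop$, with composition in the product category $\prod_i \CA_i$ computed componentwise, I then rewrite
\[
    \lop(\veci f)\circ e \;=\; \lop(\veci f)\circ\lop(\veci e)\circ e_0 \;=\; \lop(\vec{f_i\circ e_i})\circ e_0 .
\]
Each $f_i\circ e_i$ is an embedding, since $e_i$ is a path embedding and $f_i$ is a pathwise-embedding; hence by~\ref{ax:s1} the morphism $\lop(\vec{f_i\circ e_i})$ is again an embedding. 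Finally, since $e = \lop(\veci e)\circ e_0$ is itself an embedding, the cancellation property Lemma~\ref{l:fs-basics}(3) forces $e_0$ to be an embedding as well. The composite $\lop(\vec{f_i\circ e_i})\circ e_0$ is therefore a composite of two embeddings, and so is an embedding by Lemma~\ref{l:fs-basics}(2), which is exactly what we set out to prove.

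This argument is essentially a clean diagram chase, so I do not anticipate a genuine obstacle. The only two points requiring a little care are, first, peeling off $e_0$ as an embedding via the cancellation property Lemma~\ref{l:fs-basics}(3) rather than assuming it outright; and second, observing that the proof uses only the existence half of~\ref{ax:s2}, not its minimality clause. I would flag this explicitly, since minimality is the more delicate ingredient and is precisely what the subsequent preservation-of-openness result will draw on.
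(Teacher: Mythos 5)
Your proof is correct and follows essentially the same route as the paper's: decompose $e$ via \ref{ax:s2}, peel off $e_0$ as an embedding by Lemma~\ref{l:fs-basics}(3), apply \ref{ax:s1} to $\lop(\vec{f_i \circ e_i})$, and conclude by closure under composition (Lemma~\ref{l:fs-basics}(2)). Your observation that only the existence half of \ref{ax:s2} is used, not minimality, is also made explicitly in the paper when it derives the FVM theorem for existential fragments from the weaker axiom \ref{ax:s2pp}.
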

\begin{prf}
    Assume $e\colon P \embed \lop(\veci A)$ is a path embedding where, for each $i$, $A_i$ is the domain of~$f_i$. By~\ref{ax:s2}, $e$ decomposes as:
    \[
        \begin{tikzcd}[column sep=3.5em]
            P \rar{e_0}
            & \lop(\veci P)
            \rar{\lop(\veci e)}
            & \lop(\veci A)
        \end{tikzcd}
        \]
    Observe that, by Lemma~\ref{l:fs-basics}.3, $e_0$ is an embedding because $e$ is. Further, since $f_i$ is a pathwise-embedding, for $i=1,\dots,n$, the morphism $f_i \circ e_i$ is an embedding. Therefore, by \ref{ax:s1}, $\lop(\vec{f_i \circ e_i})$ is also an embedding. We obtain that the composite $\lop(\veci f) \circ e = \lop(\vec{f_i \circ e_i}) \circ e_0$ is an embedding because embeddings are closed under composition, cf. Lemma~\ref{l:fs-basics}.2.
\end{prf}

For the preservation of openness, we need the following technical lemma.

\begin{restatable}{lem}{LemmaIFour}%
    \label{l:I4}
    Let $f_i\colon A_i \to B_i$ in $\CA_i$ be pathwise-embeddings, for each $i$, and let $e$ and $g$ be path embeddings making the diagram on the left below commute.
    \[
    \begin{tikzcd}[ampersand replacement=\&, column sep=1.0em]
        \& P \ar[>->,swap]{dl}{e} \ar[>->]{dr}{g} \\
        \lop(\veci A) \ar{rr}{\lop(\veci f)} \& \& \lop(\veci B)
    \end{tikzcd}
    \quad
    \begin{tikzcd}[ampersand replacement=\&]
        \Ps_i \rar{f'_i}\dar[swap,>->]{e_i} \& \Qs_i\dar[>->]{g_i}\\
        \As_i \rar{f_i} \& \Bs_i
    \end{tikzcd}
    \]
    Then, for $i=1,\dots,n$, there exist morphisms
    \[ f'_i\colon P_i \to Q_i ,\]  such that
    the diagram on the right above commutes.
    Here, the $e_i$ and $g_i$ are the minimal embeddings given by \ref{ax:s2} such that $e$ and $g$ decompose through $\lop(\veci e)$ and $\lop(\veci g)$, respectively.
\end{restatable}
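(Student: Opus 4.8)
The plan is to extract the maps $f'_i$ from the universal property in the minimality clause of \ref{ax:s2}, which I expect to apply \emph{twice}: once to $g$ and once to $e$. First I would observe that each composite $f_i \circ e_i \colon P_i \to B_i$ is an embedding, since $e_i \colon P_i \embed A_i$ is a path embedding (so $P_i$ is a path) and $f_i$ is a pathwise-embedding. Substituting the minimal decomposition $e = \lop(\veci e) \circ e_0$ into the hypothesis $\lop(\veci f) \circ e = g$ and using functoriality then gives $g = \lop(\vec{f_i \circ e_i}) \circ e_0$, exhibiting $g$ as a decomposition through the path embeddings $f_i \circ e_i \colon P_i \embed B_i$ and the map $e_0 \colon P \to \lop(\veci P)$.

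Applying minimality of the decomposition $g = \lop(\veci g) \circ g_0$ (whose embeddings are the $g_i \colon Q_i \embed B_i$) to this alternative decomposition yields unique morphisms $h_i \colon Q_i \to P_i$ with $g_i = (f_i \circ e_i) \circ h_i$. By Lemma~\ref{l:fs-basics}.3 each $h_i$ is itself an embedding, as $g_i$ is.

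The crux of the argument is to reassemble these data into a decomposition of $e$, so that minimality can be invoked in the opposite direction. From $g_i = f_i \circ e_i \circ h_i$ and functoriality, $\lop(\veci g) = \lop(\vec{f_i \circ e_i}) \circ \lop(\veci h)$, so both $\lop(\vec{f_i \circ e_i}) \circ e_0$ and $\lop(\vec{f_i \circ e_i}) \circ \lop(\veci h) \circ g_0$ equal $g$. Here I would use \ref{ax:s1}: the $f_i \circ e_i$ are embeddings, so $\lop(\vec{f_i \circ e_i})$ is an embedding, hence a monomorphism, and cancelling it gives $\lop(\veci h) \circ g_0 = e_0$. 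Therefore $e = \lop(\veci e) \circ e_0 = \lop(\vec{e_i \circ h_i}) \circ g_0$ is a genuine decomposition of $e$ through the path embeddings $e_i \circ h_i \colon Q_i \embed A_i$ and the map $g_0$.

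Finally, applying minimality of the decomposition $e = \lop(\veci e) \circ e_0$ (with embeddings $e_i \colon P_i \embed A_i$) to this new decomposition produces morphisms $f'_i \colon P_i \to Q_i$ with $e_i = (e_i \circ h_i) \circ f'_i$. Cancelling the monomorphism $e_i$ gives $h_i \circ f'_i = \id_{P_i}$, whence
\[ g_i \circ f'_i = f_i \circ e_i \circ h_i \circ f'_i = f_i \circ e_i, \]
which is exactly the required commutativity of the square. I expect the main obstacle to be precisely the middle step: recognising that the comparison maps $h_i$ coming from the minimality of $g$ can, after cancelling the monomorphism $\lop(\vec{f_i \circ e_i})$ via \ref{ax:s1}, be turned into a fresh decomposition of $e$, thereby licensing a second, opposite-direction use of minimality.
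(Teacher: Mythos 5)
Your proposal is correct and follows essentially the same route as the paper's proof: both substitute the minimal decomposition of $e$ into $\lop(\veci f) \circ e = g$ to obtain an alternative decomposition of $g$, extract comparison maps $h_i$ (the paper's $l_i$) by minimality of $g$'s decomposition, cancel the monomorphism $\lop(\vec{f_i \circ e_i})$ via \ref{ax:s1} to get $e_0 = \lop(\veci h) \circ g_0$, and then invoke minimality of $e$'s decomposition on the resulting decomposition $\lop(\vec{e_i \circ h_i}) \circ g_0$ to produce the $f'_i$. The only cosmetic difference is your extra cancellation of the mono $e_i$ to conclude $h_i \circ f'_i = \id_{P_i}$, which the paper bypasses by computing $f_i \circ e_i = f_i \circ e_i \circ l_i \circ f'_i = g_i \circ f'_i$ directly.
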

\begin{prf}
    Let $e_0\colon P \embed \lop(\veci P)$ be the embedding such that
    $ \lop(\veci e) \circ e_0 $
    is the minimal decomposition of $e$ by \ref{ax:s2}. Since
    \[ \lop(\veci f) \circ e = \lop(\vec{f_i \circ e_i}) \circ e_0 \]
    is another decomposition of $g$, there exists morphisms $l_i\colon Q_i \embed P_i$ such that $g_i = f_i \circ e_i \circ l_i$, for $i=1,\dots,n$.

    Next, we observe that $e_0 = \lop(\veci l) \circ g_0$ where
    \[ g_0\colon P \embed \lop(\veci Q) \]
    is such that $\lop(\veci g) \circ g_0$ is the minimal decomposition of $g$. Observe that
    \begin{align*}
    \lop(\vec{f_i \circ e_i}) \circ e_0
    = g
    = \lop(\veci g) \circ g_0
    = \lop(\vec{f_i \circ e_i \circ l_i}) \circ g_0
    = \lop(\vec{f_i \circ e_i}) \circ \lop(\veci l) \circ g_0 .
    \end{align*}
    Therefore, since by \ref{ax:s1} the morphism $\lop(\vec{f_i \circ e_i})$ is an embedding and since embeddings are monomorphisms, we obtain  that $e_0 = \lop(\veci l) \circ g_0$.

    Consequently, we obtain another decomposition of $e$, given by $\lop(\vec{e_i \circ l_i}) \circ g_0$. By minimality $\veci e$, there exists
    \[ f'_i\colon P_i \to Q_i, \]
    for $i=1,\dots,n$, such that $e_i = e_i \circ l_i \circ f'_i$. Therefore, for $i=1,\dots,n$, we see that
    \[ f_i \circ e_i = f_i \circ e_i \circ l_i \circ f'_i = g_i \circ f'_i, \]
    which concludes the proof.
\end{prf}

We are now ready to proof the main theorem of this section.

\begin{restatable}{thm}{SmoothnessTheorem}
    \label{t:smoothness}
    If $\lop$ satisfies \ref{ax:s1} and \ref{ax:s2} and $f_1,\dots,f_n$ are open pathwise-embeddings, then so is $\lop(\veci f)$.
\end{restatable}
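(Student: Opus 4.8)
The plan is to observe first that $\lop(\veci f)$ is already a pathwise-embedding by Lemma~\ref{l:pe-preserved} (since each $f_i$ is one), so the entire content of the theorem is the preservation of \emph{openness}. I would therefore fix an arbitrary instance of the openness test for $\lop(\veci f)$: path embeddings $e\colon P \embed \lop(\veci A)$, $g\colon P \embed Q$ and $m\colon Q \embed \lop(\veci B)$ with $\lop(\veci f)\circ e = m\circ g$, and aim to construct a diagonal $d\colon Q \to \lop(\veci A)$ with $e = d\circ g$ and $m = \lop(\veci f)\circ d$.

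The first step is to decompose the data componentwise via \ref{ax:s2}. Take the minimal decomposition of $e$ as $\lop(\veci e)\circ e_0$ with path embeddings $e_i\colon P_i\embed A_i$, and the minimal decomposition of $m$ as $\lop(\veci m)\circ m_0$ with path embeddings $m_i\colon Q_i\embed B_i$. Now $m\circ g$ is again a path embedding $P\embed \lop(\veci B)$, so applying Lemma~\ref{l:I4} to the triangle formed by $e$ and $m\circ g$ (which commutes over $\lop(\veci f)$) yields, for each $i$, a morphism $f'_i$ into the components of the \emph{minimal} decomposition of $m\circ g$. Comparing that minimal decomposition with the non-minimal one $\lop(\veci m)\circ(m_0\circ g)$, minimality of the former supplies comparison maps $h_i$; composing, I obtain for each $i$ a morphism $\tilde{g}_i\colon P_i\to Q_i$ with $f_i\circ e_i = m_i\circ \tilde{g}_i$. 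Crucially, $\tilde{g}_i$ is in fact a \emph{path} embedding: since $f_i$ is a pathwise-embedding and $e_i$ a path embedding, $f_i\circ e_i = m_i\circ\tilde{g}_i$ is an embedding, whence $\tilde{g}_i$ is an embedding by Lemma~\ref{l:fs-basics}.3, and $P_i$ is a path.

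With the component squares $f_i\circ e_i = m_i\circ \tilde{g}_i$ in hand, all three edges $e_i$, $m_i$, $\tilde{g}_i$ are path embeddings, so I invoke openness of each $f_i$ to obtain diagonals $d_i\colon Q_i\to A_i$ satisfying $e_i = d_i\circ\tilde{g}_i$ and $m_i = f_i\circ d_i$. I then set $d := \lop(\veci d)\circ m_0$. The lower triangle is immediate from functoriality of $\lop$: $\lop(\veci f)\circ d = \lop(\vec{f_i\circ d_i})\circ m_0 = \lop(\veci m)\circ m_0 = m$. For the upper triangle I first establish $m_0\circ g = \lop(\vec{\tilde{g}_i})\circ e_0$ by cancelling the monomorphism $\lop(\veci m)$ (an embedding, hence mono, by~\ref{ax:s1}) from the chain $\lop(\veci m)\circ m_0\circ g = m\circ g = \lop(\veci f)\circ e = \lop(\vec{f_i\circ e_i})\circ e_0 = \lop(\veci m)\circ\lop(\vec{\tilde{g}_i})\circ e_0$. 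Then $d\circ g = \lop(\veci d)\circ m_0\circ g = \lop(\vec{d_i\circ\tilde{g}_i})\circ e_0 = \lop(\veci e)\circ e_0 = e$, as required.

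The main obstacle is the bookkeeping in the second step: extracting, for each $i$, a genuine path embedding $\tilde{g}_i\colon P_i\embed Q_i$ fitting a component square. The subtlety is that Lemma~\ref{l:I4} decomposes $m\circ g$ through its \emph{own} minimal components, which a priori differ from the components $Q_i$ arising from $m$; reconciling the two decompositions via the minimality comparison maps $h_i$, and then verifying that the resulting $\tilde{g}_i$ is an embedding rather than merely a morphism, is where the real work lies. Once these component squares are correctly assembled, the remaining verifications collapse to functoriality of $\lop$ together with a single cancellation of the monomorphism $\lop(\veci m)$, the latter being precisely the point at which hypothesis~\ref{ax:s1} is needed.
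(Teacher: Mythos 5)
Your proposal is correct and takes essentially the same route as the paper's own proof: the same minimal decompositions of $e$ and of the right-hand path embedding, the same combination of Lemma~\ref{l:I4} with the minimality comparison to produce the component squares $f_i \circ e_i = m_i \circ \tilde{g}_i$, openness of each $f_i$ to obtain the diagonals $d_i$, and the same final diagonal $d = \lop(\veci d)\circ m_0$, verified by cancelling the monomorphism $\lop(\veci m)$ supplied by \ref{ax:s1}. The only difference is cosmetic: you explicitly verify that each $\tilde{g}_i$ is a path embedding before invoking openness of $f_i$, a point the paper leaves implicit.
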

\begin{prf}
    Given for $i=1,\dots,n$, open pathwise-embeddings $f_i\colon A_i \to B_i$ in $\EM{\C_i}$, it is enough to check that $\lop(\veci f)$ is open by Lemma~\ref{l:pe-preserved}. Assume that the outer square of path embeddings in the diagram below commutes, with the left-most and right-most morphisms being their minimal decompositions by \ref{ax:s2}.
    \[
        \begin{tikzcd}[column sep=3.5em] %
            P
                \ar[>->]{rr}{h}
                \ar[>->,swap]{d}{e_0}
                \ar[>->]{dr}{e'_0}
            &
            & Q
                \ar[>->]{d}{g_0}
            \\
            \lop(\veci P)
                \ar[swap,>->]{d}{\lop(\veci e)}
                \ar[dashed,swap]{r}{\lop(\vec{f'_i})}
            & \lop(\veci{P'})
                \ar[sloped,>->,swap]{rd}{\lop(\vec{e'_i})}
                \rar[dashed]{\lop(\veci h)}
            & \lop(\veci Q)
                \ar[>->]{d}{\lop(\veci g)}
            \\
            \lop(\veci A)
                \ar{rr}{\lop(\veci f)}
            &
            & \lop(\veci B)
        \end{tikzcd}
    \]
    The path embedding $\lop(\vec{f_i\circ e_i}) \circ e_0$ has a minimal decomposition going via
    $\lop(\vec{e'_i})$ %
    as shown above.
    Then, by minimality of this decomposition and by Lemma~\ref{l:I4}, for $i=1,\dots,n$, there exist $f'_i\colon P_i \to P'_i$ and $h_i\colon P'_i \to Q_i$ such that $f_i \circ e_i = e'_i \circ f'_i$ and $e'_i = g_i \circ h_i$. Since $f_i$ is open and $f_i \circ e_i = g_i \circ h_i \circ f'_i$, there is a morphism $d_i\colon Q_i \to A_i$ such that $d_i \circ h_i \circ f'_i = e_i$ and $g_i = d_i \circ f_i$. Finally, because the outer rectangle and the bottom rectangle commute and $\lop(\veci g)$ is a monomorphism, %
    the top rectangle commute as well. Consequently, $\lop(\veci d)\circ g_0\colon Q \to \lop(\veci A)$ is the required diagonal filler of the outer square.
\end{prf}

\section{FVM theorems for the full logic}

Theorems~\ref{t:em-lifting} and \ref{t:smoothness} above can be used to state an abstract FVM theorem for full fragments.
However, in order to use such theorem in concrete instances we need to be able to check its assumptions easily.
To this end, we rephrase axioms \ref{ax:s1} and \ref{ax:s2} in more elementary terms.
The key ingredient is the notion of \bimorph{} which allows us to understand morphisms involved in axiom \ref{ax:s2} in terms of the original operation and underlying categories.

\subsection{\Bimorph{}s}
Recall from linear algebra that, for vector spaces \(U,V,W\), a~bilinear map (or bimorphism) is a function \(f\colon U\times V \to W\) such that, for any \(u\in U\) and \(v\in V\) the functions \(f(u,-)\) and \(f(-,v)\) are linear. The universal property of bimorphisms states that bilinear maps \(U\times V \to W\) are in a bijective correspondence with (ordinary) linear maps \(U \otimes V \to W\), where \(\otimes\) denotes the tensor product operation on vector spaces.
This is an instance of a general phenomenon from the theory of monads, that is, the duals of comonads. Concretely, if a monad \(T\) is \emph{commutative} (i.e.\ there is a Kleisli law \(T(A) \times T(B) \to T(A\times B)\) satisfying extra conditions~\cite{kock1971closed}) then the category of algebras for \(T\) has a tensor operation \(\otimes\), which is the lifting of the product operation \(\times\) (in dual sense to our Section~\ref{s:lifting-operations}). Moreover, algebra homomorphisms \(A\otimes B \to C\) correspond to certain bilinear morphisms \(A\times B \to C\)~\cite{kock1971bilinearity}.

In what follows, we often need to analyse morphisms of the form $(A,\alpha) \to \lop(\vec{(B_i,\beta_i)})$, for some coalgebras $(A,\alpha),(B_1,\beta_1),\dots,(B_n,\beta_n)$. To aid with this we restate a direct generalisation of the universal property of bilinear maps and~\cite{kock1971bilinearity} to our setting.
Intuitively, instead of algebra homomorphisms \(A\otimes B \to C\) we study coalgebra morphisms \((A,\alpha) \to \lop(\vec{(B_i,\beta_i)})\) and instead of bilinear morphisms \(A\times B \to C\) we describe the corresponding morphisms \(A \to \op(\vec{B_i})\), which we call \emph{\bimorph}\footnote{The name originates in~\cite{ellerman2006theory} where it was used to talk about morphisms between objects of different categories. These usually take the form \(A \to F(B)\) where \(F\) is a functor. In our case we further require the condition \eqref{eq:bimorph} enforced by the Kleisli law.}.

In this subsection we fix a Kleisli law \(\kappa : \D \circ \op \Rightarrow \op \circ \prod_i \C_i\) and its corresponding lifting $\lop : \prod_i \EM{\C_i} \rightarrow \EM{\D}$ given by Theorem~\ref{t:em-lifting}. %
Given a morphism of coalgebras \(f\colon (A,\alpha) \to \lop(\vec{(B_i,\beta_i)})\) we define a morphisms \(f^\#\colon A \to \op(\veci B)\) in \(\CD\) as
\[
    f^\#\colon  A \xrightarrow{\EMU\D(f)} \EMU\D(\lop(\vec{B_i,\beta_i})) \xrightarrow{\univ_{\veci \beta}} H(\veci B)
\]
where \(\univ_{\veci \beta}\) is the composition \(\counit_\op \circ \EMU\D(\iota_{\veci \beta})\) of the underlying morphism of
\(\iota_{\veci \beta} \colon \lop(\veci \beta) \to \EMF \D(\op(\veci B))\)
with the counit morphism
\( \counit_\op \colon \D(H(\veci B)) \to H(\veci B)\).

In the following we observe that \(f^\#\) is a \df{(Kleisli) \bimorph{}}, that is, a morphism \(g\colon A \to H(\veci B)\), from \((A,\alpha)\) to \(\vec{(B_i,\beta_i)}\), written as \(g : \alpha \to [\veci \beta]\), such that the following diagram commutes.
    \begin{equation}
        \begin{tikzcd}[->, ampersand replacement=\&,]
        \As \arrow[rr, "g"] \dar[swap]{\alpha} \& \& \op(\vec{\Bs_i}) \dar{\op(\vec{\beta_i})} \\
        \D(\As) \rar{\D(g)} \& \D\op(\vec{\Bs_i}) \rar{\kappa} \& \op(\vec{\C_i(\Bs_i)})
    \end{tikzcd}
    \label{eq:bimorph}
    \end{equation}

In fact, Proposition~\ref{p:bimorphisms} below establishes that the mapping $f \mapsto f^\#$ gives a bijection between coalgebra morphisms and \bimorph{}s.
First we observe that \bimorph{}s are closed under composition by coalgebra morphisms from both left and right.

\begin{lem}
    \label{l:bimorph-composition}
    Given a \bimorph{} $f\colon \alpha \to [\veci \beta]$, a morphism of $\D$-coalgebras $h\colon (A',\alpha') \to (A,\alpha)$ and $\C_i$-coalgebra morphisms $g_i\colon (B_i, \beta_i) \to (B'_i, \beta'_i)$, for $i\in \{1,\dots,n\}$, the composite $\op(\veci g) \circ f \circ h$ is a \bimorph{} $\alpha' \to [\veci{\beta'}]$.
    \qed
\end{lem}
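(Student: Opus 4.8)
The plan is to verify directly that the composite $g' := \op(\veci g) \circ f \circ h$ satisfies the \bimorph{} condition~\eqref{eq:bimorph} relative to $\alpha'$ and $\vec{\beta'_i}$; concretely, to establish the identity $\op(\vec{\beta'_i}) \circ g' = \kappa \circ \D(g') \circ \alpha'$. This is a diagram chase whose only ingredients are functoriality of $\op$ and $\D$, the coalgebra-morphism conditions for $h$ and for each $g_i$, the \bimorph{} condition for $f$, and naturality of $\kappa$.

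I would begin at the left-hand side $\op(\vec{\beta'_i}) \circ \op(\veci g) \circ f \circ h$ and rewrite towards the right. First, functoriality of $\op$ collapses the leading factors into $\op(\vec{\beta'_i \circ g_i})$, whereupon the assumption that each $g_i$ is a $\C_i$-coalgebra morphism lets me replace $\beta'_i \circ g_i$ by $\C_i(g_i) \circ \beta_i$; functoriality of $\op$ then splits this as $\op(\vec{\C_i(g_i)}) \circ \op(\vec{\beta_i})$. At this point the \bimorph{} condition for $f$ rewrites $\op(\vec{\beta_i}) \circ f$ as $\kappa \circ \D(f) \circ \alpha$, and since $h$ is a $\D$-coalgebra morphism I may replace $\alpha \circ h$ by $\D(h) \circ \alpha'$. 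After folding the two $\D$-images together by functoriality, the expression reads $\op(\vec{\C_i(g_i)}) \circ \kappa \circ \D(f \circ h) \circ \alpha'$.

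The one step that genuinely exploits the structure of $\kappa$ as a natural transformation $\D\circ\op \Rightarrow \op\circ\prod_i\C_i$, rather than merely the coalgebra and \bimorph{} equations, is the final application of naturality at the family $\veci g$, giving $\op(\vec{\C_i(g_i)}) \circ \kappa = \kappa \circ \D\op(\veci g)$. Substituting this and absorbing the remaining factors under $\D$ by functoriality yields $\kappa \circ \D(\op(\veci g) \circ f \circ h) \circ \alpha' = \kappa \circ \D(g') \circ \alpha'$, which is exactly~\eqref{eq:bimorph} for $g'$. I anticipate no real obstacle here: the argument is formal, and the only care required is in tracking which components of $\kappa$ appear at each stage and in handling the indexing in the product category $\prod_i \CC_i$.
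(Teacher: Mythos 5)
Your proof is correct: the paper states this lemma with the proof omitted (the \qed marks it as a routine verification), and your diagram chase—functoriality of $\op$ and $\D$, the coalgebra-morphism equations for $h$ and the $g_i$, the \bimorph{} condition~\eqref{eq:bimorph} for $f$, and naturality of $\kappa$ applied at the family $\veci g$—is exactly the intended argument, with every step valid.
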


Crucially we also need that $\univ_{\veci \beta}\colon \EMU\D(\lop(\vec{B_i,\beta_i})) \to \op(\veci B)$ from above is a universal \bimorph{}, in the following sense.
\begin{lem}
    The following holds for $\univ_{\veci \beta}$.
    \begin{enumerate}
        \item $\univ_{\veci \beta}$ is a \bimorph{} $\lop(\veci \beta) \to [\veci \beta]$.
        \item For any \bimorph{} $f \colon \alpha \to [\veci \beta ]$ there is a \emph{unique} $\D$-coalgebra morphism $f^+ \colon (A,\alpha) \to \lop(\veci \beta )$ such that
        \( f = \univ_{\veci \beta} \circ f^+ \)
        in the underlying category $\CD$.
        \item The collection of morphisms $\univ_{\veci \beta}$ is natural in $\veci \beta$. That is, for any tuple of coalgebra morphisms $f_i\colon \beta_i \to \beta_i'$, for $i\in \{1,\dots,n\}$,
        \[
           \begin{tikzcd}
               \EMU\D(\lop(\vec{B_i,\beta_i})) \dar[swap]{\EMU \D (\lop(\veci f))} \rar{\univ_{\veci \beta}} & \op(\veci B) \dar{\op(\veci f)} \\
               \EMU\D(\lop(\vec{B'_i,\beta'_i})) \rar{\univ_{\veci \beta'}} & \op(\veci B')
           \end{tikzcd}
        \]
    \end{enumerate}
    \label{l:bimorph-basics}
\end{lem}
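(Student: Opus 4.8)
The plan is to establish the three parts in order, using two structural tools throughout: the cofree adjunction $\EMU\D \dashv \EMF\D$, whose transpose sends a morphism $h\colon A \to X$ in $\CD$ to the $\D$-coalgebra morphism $\D(h)\circ\alpha\colon (A,\alpha)\to\EMF\D(X)$ and whose inverse post-composes with $\counit$; and the universal property of the equaliser $\iota_{\veci\beta}$ defining $\lop(\veci\beta)$ in~\eqref{eq:equaliser-functor}. Writing $\iota$ for the underlying morphism $\EMU\D(\iota_{\veci\beta})$ and $\theta$ for the coalgebra structure of $\lop(\veci\beta)$, I will repeatedly use three facts: that $\iota_{\veci\beta}$ is a coalgebra morphism, $\D(\iota)\circ\theta = \delta\circ\iota$; that at the underlying level $\iota$ equalises the two maps of~\eqref{eq:equaliser-functor}, that is $\D(\kappa)\circ\delta\circ\iota = \D(\op(\veci\beta))\circ\iota$; and that $\univ_{\veci\beta} = \counit\circ\iota$ by definition.

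For part~(1), I expand $\kappa\circ\D(\univ_{\veci\beta})\circ\theta$, rewrite $\D(\iota)\circ\theta$ as $\delta\circ\iota$ using that $\iota_{\veci\beta}$ is a coalgebra morphism, and collapse with the comonad counit law $\D(\counit_X)\circ\delta_X = \id$ to reach $\kappa\circ\iota$; this is the left leg of~\eqref{eq:bimorph} with $g = \univ_{\veci\beta}$. For the right leg I post-compose $\counit$ onto the equaliser identity and apply naturality of $\counit$ together with the second comonad law $\counit_{\D X}\circ\delta_X = \id$, obtaining $\kappa\circ\iota = \op(\veci\beta)\circ\counit\circ\iota = \op(\veci\beta)\circ\univ_{\veci\beta}$, which is exactly the heteromorphism condition.

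Part~(2) is the technical heart. Given a heteromorphism $f\colon A \to \op(\veci B)$, I transpose it to the coalgebra morphism $\wt f = \D(f)\circ\alpha\colon (A,\alpha)\to\EMF\D(\op(\veci B))$. The key computation shows $\wt f$ equalises the parallel pair of~\eqref{eq:equaliser-functor}: expanding $\D(\kappa)\circ\delta\circ\wt f$, naturality of $\delta$ and the coalgebra coassociativity law $\delta_A\circ\alpha = \D(\alpha)\circ\alpha$ let me absorb the heteromorphism identity $\kappa\circ\D(f)\circ\alpha = \op(\veci\beta)\circ f$ of~\eqref{eq:bimorph} inside an application of $\D$, yielding precisely $\D(\op(\veci\beta))\circ\wt f$. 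The equaliser then supplies a unique coalgebra morphism $f^+\colon(A,\alpha)\to\lop(\veci\beta)$ with $\iota_{\veci\beta}\circ f^+ = \wt f$, and post-composing with $\counit$ (using naturality of $\counit$ and the coalgebra counit law $\counit_A\circ\alpha=\id$) recovers $f = \univ_{\veci\beta}\circ f^+$. For uniqueness I note that the assignment $p \mapsto \univ_{\veci\beta}\circ\EMU\D(p)$ factors as the injection $p\mapsto\iota_{\veci\beta}\circ p$ (injective since the equaliser $\iota_{\veci\beta}$ is monic) followed by the adjunction transpose bijection, hence is itself injective.

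Part~(3) then follows formally from the functoriality of $\lop$ established inside the proof of Theorem~\ref{t:em-lifting}: diagram~\eqref{eq:hatOp-functorial} gives $\iota_{\veci{\beta'}}\circ\EMU\D(\lop(\veci f)) = \D(\op(\veci f))\circ\iota_{\veci\beta}$, so precomposing $\counit$ and using its naturality turns this into the naturality square for $\univ$. I expect the only genuine obstacle to be the equalising computation of part~(2)---recognising that the heteromorphism axiom~\eqref{eq:bimorph} is exactly what is needed for $\wt f$ to factor through the equaliser---together with care over the uniqueness argument; the remaining steps are routine diagram chases in the comonad and coalgebra laws. I would be careful to keep the two counit laws $\D(\counit_X)\circ\delta_X=\id$ and $\counit_{\D X}\circ\delta_X=\id$ distinct, as they are invoked separately in parts~(1) and~(2).
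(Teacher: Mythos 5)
Your proof is correct and follows essentially the same route as the paper's: part (1) by the same coalgebra-morphism/counit/equaliser computation, part (2) by transposing $f$ to $\D(f)\circ\alpha$ and factoring it through the equaliser, and part (3) from diagram~\eqref{eq:hatOp-functorial} and naturality of $\counit$. If anything, your uniqueness argument in part (2) --- factoring $p \mapsto \univ_{\veci \beta}\circ \EMU\D(p)$ through post-composition with the monic $\iota_{\veci \beta}$ followed by the cofree-adjunction bijection --- spells out a step the paper leaves implicit.
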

\begin{prf}
    To simplify the notation we denote the underlying morphism of \(\iota_{\veci \beta}\) by \(\iota_{\veci \beta}\) and by \((W,\omega)\) we denote the \(\D\)-coalgebra \(\lop(\veci \beta)\), i.e.\ \(W = \EMU\D(\lop(\veci \beta))\) and \(\omega \colon W \to \D(W)\) is such that \(\D(\iota_{\veci \beta})\circ \omega = \delta^\D_{\op(\veci B)} \circ \iota_{\veci \beta}\).
    Item 1 is a routine calculation:
    \begin{align*}
        & \kappa \circ \D(\counit_H) \circ \D(\iota_{\veci \beta}) \circ \omega \\
        &= \kappa \circ \D(\counit_H) \circ \delta_{\op(\veci B)} \circ \iota_{\veci \beta} & \text{\(\iota_{\veci \beta}\) is a coalgebra homomorphism} \\
        &= \kappa \circ \counit_{\D H} \circ \delta_{\op(\veci B)} \circ \iota_{\veci \beta} & \text{counit axioms} \\
        &= \counit_{H\C} \circ \D\kappa \circ \delta_{\op(\veci B)} \circ \iota_{\veci \beta} & \text{naturality of \(\counit\)} \\
        &= \counit_{H\C} \circ \D(\op(\veci \beta)) \circ \iota_{\veci \beta} & \text{\(\iota_{\veci \beta}\) coequalises \eqref{eq:equaliser-functor}} \\
        &= \op(\veci \beta) \circ \counit_{H} \circ \iota_{\veci \beta} & \text{naturality of \(\counit\)}
    \end{align*}
    Item 2 follows the equaliser universal property. First, recall that the underlying morphism $f \colon \alpha \to [\veci \beta]$ uniquely determines the coalgebra morphism \(\D(f) \circ \alpha\colon (A,\alpha) \to \EMF\D(\op{\veci B})\) such that \(f = \counit_\op \circ \D(f) \circ \alpha\).
    Furthermore, this coalgebra morphism equalises~\eqref{eq:equaliser-functor}.
    \begin{align*}
        & \D\kappa \circ \delta_{H(\veci B)} \circ \D(f) \circ \alpha \\
        &= \D\kappa \circ \D^2(f) \circ \delta_{A} \circ \alpha & \text{naturality of \(\delta\)} \\
        &= \D\kappa \circ \D^2(f) \circ \D(\alpha) \circ \alpha & \text{\(\alpha\) is a morphism \((A,\alpha) \to \EMF\D A\)} \\
        &= \D(\op(\veci \beta)) \circ \D(f) \circ \alpha & \text{\bimorph{} assumption \eqref{eq:bimorph}}
    \end{align*}
    From \(\iota_{\veci \beta}\) being equaliser of \eqref{eq:equaliser-functor}, there is a unique \(f^+\colon (A,\alpha) \to (W,\omega)\) such that \(\iota_{\veci \beta} \circ f^+ = \D(f) \circ \alpha\). Consequently, \(\univ_{\veci \beta} \circ f^+ = \counit_\op \circ \D(f) \circ \alpha = f \circ \counit \circ \alpha = f\).

    Item 3 follows from the fact that $\counit$ is a natural transformation and from the definition of $\lop(\veci f)$ in terms of the universal property of equalisers, cf.\ \eqref{eq:hatOp-functorial} above.
\end{prf}

We are now ready to prove the promised statement.
\begin{restatable}{prop}{Bimorphisms}
    \label{p:bimorphisms}
    The mapping $f \mapsto f^\#$ establishes a one-to-one correspondence between
    \begin{itemize}
        \item coalgebra morphisms \((A,\alpha) \to \lop(\vec{(B_i,\beta_i)})\) and
        \item \bimorph{}s \(\alpha \to [\veci \beta]\).
    \end{itemize}
\end{restatable}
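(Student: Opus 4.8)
The plan is to exhibit $f \mapsto f^\#$ as a bijection by producing an explicit two-sided inverse, namely the assignment $g \mapsto g^+$ supplied by Lemma~\ref{l:bimorph-basics}.2. Almost all of the substantive work has already been done in Lemmas~\ref{l:bimorph-composition} and~\ref{l:bimorph-basics}; what remains for the proposition is to check that $f^\#$ really is a \bimorph{} and that the two assignments cancel.

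First I would verify well-definedness. By construction $f^\# = \univ_{\veci\beta} \circ \EMU\D(f)$, where $f\colon (A,\alpha)\to\lop(\veci\beta)$ is a $\D$-coalgebra morphism and, by Lemma~\ref{l:bimorph-basics}.1, $\univ_{\veci\beta}$ is a \bimorph{} $\lop(\veci\beta)\to[\veci\beta]$. Lemma~\ref{l:bimorph-composition}, applied with $h = f$ and identities for the $g_i$, then tells us that precomposing the \bimorph{} $\univ_{\veci\beta}$ with the coalgebra morphism $f$ yields a \bimorph{} $\alpha \to [\veci\beta]$. Hence $f^\#$ lands where claimed, so the assignment $f \mapsto f^\#$ is a well-defined map from coalgebra morphisms to \bimorph{}s.

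Next I would establish bijectivity directly from Lemma~\ref{l:bimorph-basics}.2, whose uniqueness clause is precisely the statement that postcomposition with $\univ_{\veci\beta}$ is a bijection. For a \bimorph{} $g$, let $g^+$ be the unique $\D$-coalgebra morphism with $\univ_{\veci\beta}\circ g^+ = g$. Then $(g^+)^\# = \univ_{\veci\beta}\circ \EMU\D(g^+) = g$ by the defining property of $g^+$, which gives one round trip. For the other, observe that $f$ itself is a coalgebra morphism satisfying $\univ_{\veci\beta}\circ f = f^\#$, so the uniqueness part of Lemma~\ref{l:bimorph-basics}.2 forces $(f^\#)^+ = f$. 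Thus $g\mapsto g^+$ is a two-sided inverse of $f\mapsto f^\#$, and the correspondence is one-to-one.

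I do not expect any genuine obstacle at this stage: the mathematical content of the correspondence is carried entirely by Lemma~\ref{l:bimorph-basics}, which uses the equaliser defining $\lop(\veci\beta)$ to produce the universal factorisation through $\univ_{\veci\beta}$. The only point requiring care is bookkeeping --- reading the ``unique existence'' of Lemma~\ref{l:bimorph-basics}.2 as a bijection and confirming that both composites are identities, while keeping track of where $\EMU\D$ merely forgets coalgebra structure, so that $\EMU\D(f)$ and $f$ may be identified as morphisms in $\CD$ when composing with $\univ_{\veci\beta}$.
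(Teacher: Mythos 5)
Your proposal is correct and follows essentially the same route as the paper's own proof: well-definedness via Lemma~\ref{l:bimorph-basics}.1 together with Lemma~\ref{l:bimorph-composition}, and bijectivity by reading the unique factorisation of Lemma~\ref{l:bimorph-basics}.2 as a two-sided inverse $g \mapsto g^+$. Your slightly more explicit checking of the two round trips is just an expansion of what the paper states in one line.
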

\begin{prf}
    Let $f\colon (A,\alpha) \to \lop(\vec{(B_i, \beta_i)})$ be a coalgebra morphism.
    Since $\univ_{\veci \beta}$ is a \bimorph{} (by Lemma~\ref{l:bimorph-basics}.1), so is $f^\#$ by Lemma~\ref{l:bimorph-composition}. Furthermore, $(f^\#)^+ = f$ by the universal property of $\univ_{\veci \beta}$ (Lemma~\ref{l:bimorph-basics}.2). On the other hand, for a \bimorph{} $g \colon \alpha \to [\veci \beta]$ we clearly have $(g^+)^\# = g$, again by the universal property of $\univ_{\veci \beta}$.
\end{prf}

Lastly, we show that the mapping $f \mapsto f^\#$ commutes with compositions of coalgebra morphisms.
\begin{prop}
    \label{p:bimorph-dist}
    For a $\D$-coalgebra morphism $h\colon (A',\alpha') \to (A,\alpha)$ and $\C_i$-coalgebra morphisms $g_i\colon (B_i, \beta_i) \to (B'_i, \beta'_i)$:
    \[
        (\lop(\veci g) \circ f \circ h)^\# = \op(\veci g) \circ f^\# \circ h.
    \]
\end{prop}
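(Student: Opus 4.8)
The plan is to unfold the $(-)^\#$ operation on both sides and reduce the claim to naturality of the universal \bimorph{}~$\univ$. Recall that for any coalgebra morphism $f\colon (A,\alpha) \to \lop(\veci \beta)$ the associated \bimorph{} is the underlying-category composite $f^\# = \univ_{\veci \beta} \circ \EMU\D(f)$, where $\univ_{\veci \beta} = \counit_\op \circ \EMU\D(\iota_{\veci \beta})$. Since $\lop(\veci g) \circ f \circ h$ is again a coalgebra morphism (being a composite of coalgebra morphisms, using Lemma~\ref{l:bimorph-composition} at the level of the underlying \bimorph{}s), the expression $(\lop(\veci g) \circ f \circ h)^\#$ is well-defined, and by definition it equals $\univ_{\veci{\beta'}} \circ \EMU\D(\lop(\veci g) \circ f \circ h)$.

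First I would apply functoriality of the forgetful functor $\EMU\D$ to split this composite as
\[ \univ_{\veci{\beta'}} \circ \EMU\D(\lop(\veci g)) \circ \EMU\D(f) \circ \EMU\D(h). \]
Next I would invoke the naturality of $\univ$ from Lemma~\ref{l:bimorph-basics}.3, instantiated at the coalgebra morphisms $g_i$, which yields the identity $\univ_{\veci{\beta'}} \circ \EMU\D(\lop(\veci g)) = \op(\veci g) \circ \univ_{\veci \beta}$. Substituting this rewrites the composite as
\[ \op(\veci g) \circ \univ_{\veci \beta} \circ \EMU\D(f) \circ \EMU\D(h). \]
Finally I would recognise $\univ_{\veci \beta} \circ \EMU\D(f)$ as $f^\#$ by its definition, and use that $\EMU\D$ acts as the identity on the underlying morphisms, so that $\EMU\D(h) = h$. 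This leaves precisely $\op(\veci g) \circ f^\# \circ h$, which is the desired equality.

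This argument is essentially bookkeeping within the underlying category $\CD$, so I do not anticipate any genuine combinatorial obstacle. The only points requiring care are keeping a clean distinction between a coalgebra morphism and its image under $\EMU\D$, and orienting the naturality square of Lemma~\ref{l:bimorph-basics}.3 correctly so that $\op(\veci g)$ is transported from the outside of $\univ_{\veci{\beta'}}$ to the outside of $\univ_{\veci \beta}$; once these are handled, everything else follows from functoriality of $\EMU\D$.
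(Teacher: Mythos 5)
Your proof is correct and follows essentially the same route as the paper's: both unfold the definition $f^\# = \univ_{\veci \beta} \circ \EMU\D(f)$, use functoriality of $\EMU\D$, and transport $\op(\veci g)$ across $\univ$ via the naturality square of Lemma~\ref{l:bimorph-basics}.3. The only cosmetic difference is that the paper splits the computation into the two observations $(f \circ h)^\# = f^\# \circ h$ and $(\lop(\veci g) \circ f)^\# = \op(\veci g) \circ f^\#$ before combining them, whereas you carry out the same rewriting in a single chain.
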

\begin{prf}
    Let $f \colon (A,\alpha) \to \lop(\vec{(B_i,\beta_i)})$ and $h \colon (A',\alpha') \to (A,\alpha)$ be $\D$-coalgebra morphisms. Then,
    \[
        (f \circ h)^\# = \univ_{\veci \beta} \circ f \circ h = f^\# \circ h.
    \]
    Similarly, for coalgebra morphisms $g_i \colon (B_i, \beta_i) \to (B'_i, \beta'_i)$, for $i\in \{1,\dots,n\}$, we have
    \[
        (\lop(\veci g) \circ f)^\# = \univ_{\veci{\beta'}} \circ \EMU\D(\lop({\veci g})) \circ f = \op(\veci g) \circ \univ_{\veci \beta} \circ f = \op(\veci g) \circ f^\#
    \]
    where the equality in the middle follows from Lemma~\ref{l:bimorph-basics}.3. We see that $(\lop(\veci g) \circ f \circ h)^\# = \op(\veci g) \circ f^\# \circ h$ now follows from these two observations and Lemma~\ref{l:bimorph-composition}, which ensures that $f \circ h$ is a \bimorph{}.
\end{prf}

\subsection{Simplifying the axioms}
\label{s:simpler-axioms}

Axioms~\ref{ax:s1} and \ref{ax:s2} give us sufficient conditions for functors between arbitrary path categories to preserve open pathwise-embeddings.
In our case the path categories of interests are categories of coalgebras for our game comonads and the functor in question \(\lop\) is typically a lifted functor \(\op\) on the base categories. In this section we describe more elementary conditions, written in terms of \bimorph{}s, that imply axioms~\ref{ax:s1} and \ref{ax:s2}.
This way we can easily establish the validity of assumptions of our FVM theorems for full logic (Theorem~\ref{t:fvm-standard} below), without having to analyse the lifted functor $\lop$.

In this section we fix a functor between base categories \(\op\colon \prod_i \CC_i \to \CD\) and assume that it lifts to $\lop\colon \prod_i \EM{\C_i} \to \EM\D$ as in Theorem~\ref{t:em-lifting}.
To begin with, we observe that \ref{ax:s1} is implied by a similar property of $\op$.

\begin{restatable}{prop}{SOnePrime}
    \label{p:s1p}
    If $\op$ preserves embeddings then so does~$\lop$.
\end{restatable}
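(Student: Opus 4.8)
The plan is to reduce the claim to the characterisation of embeddings in $\EM\D$ furnished by Lemma~\ref{l:lift-fs}, combined with the three elementary closure properties of embeddings in Lemma~\ref{l:fs-basics}. By Lemma~\ref{l:lift-fs} a coalgebra morphism lies in $\ol\Em$ exactly when its underlying morphism is an embedding in $\CD$; so it suffices to show that the underlying morphism of $\lop(\veci f)$ is an embedding whenever each $f_i$ is an embedding in $\EM{\C_i}$, i.e.\ whenever each underlying $f_i\colon A_i \to B_i$ is an embedding in $\CC_i$.

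First I would invoke the functoriality square~\eqref{eq:hatOp-functorial}, which defines $\lop$ on morphisms and yields the identity
\[
    \iota_{\veci\beta} \circ \lop(\veci f) \;=\; \EMF\D(\op(\veci f)) \circ \iota_{\veci\alpha}
\]
in $\EM\D$. The idea is to show that the right-hand composite is an embedding and then cancel $\iota_{\veci\beta}$ on the left using the cancellation property Lemma~\ref{l:fs-basics}.3.

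The right-hand side factors through two embeddings. The morphism $\iota_{\veci\alpha}$ is the equaliser~\eqref{eq:equaliser-functor} defining $\lop(\veci\alpha)$, hence an embedding by Lemma~\ref{l:fs-basics}.1. For the other factor, since $\op$ preserves embeddings by hypothesis and each underlying $f_i$ is an embedding, $\op(\veci f)$ is an embedding in $\CD$; moreover the standing assumption that $\D$ preserves embeddings (the very condition that makes the lifted factorisation system on $\EM\D$ available through Lemma~\ref{l:lift-fs}) makes $\D(\op(\veci f))$ an embedding, so the cofree morphism $\EMF\D(\op(\veci f))$, whose underlying map is precisely $\D(\op(\veci f))$, is an embedding in $\EM\D$. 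As embeddings are closed under composition (Lemma~\ref{l:fs-basics}.2), the composite $\EMF\D(\op(\veci f)) \circ \iota_{\veci\alpha}$, and hence $\iota_{\veci\beta}\circ \lop(\veci f)$, is an embedding.

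Finally, applying Lemma~\ref{l:fs-basics}.3 to the embedding $\iota_{\veci\beta}\circ \lop(\veci f)$ shows that $\lop(\veci f)$ is itself an embedding, which is \ref{ax:s1}. I do not anticipate any real obstacle: the whole argument is a short diagram chase through~\eqref{eq:hatOp-functorial}, and the only point requiring care is observing that $\EMF\D$ sends embeddings to embeddings, which rests on the background hypothesis that the comonad $\D$ preserves embeddings.
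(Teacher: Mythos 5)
Your proposal is correct and follows essentially the same route as the paper's proof: both use the defining square~\eqref{eq:hatOp-functorial}, observe that $\iota_{\vec{\alpha_i}}$ is an embedding as an equaliser (Lemma~\ref{l:fs-basics}.1) and that $\EMF{\D}(\op(\veci f))$ is an embedding because $\D$ and $\op$ preserve embeddings, and then conclude via closure under composition and the cancellation property (Lemma~\ref{l:fs-basics}.2 and~3). Your write-up merely makes explicit two points the paper leaves implicit, namely the reduction to underlying morphisms via Lemma~\ref{l:lift-fs} and the identification of the underlying map of $\EMF\D(\op(\veci f))$ with $\D(\op(\veci f))$.
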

\begin{prf}
    Let
    \[e_1\colon (A_1,\alpha_1) \to (B_1,\beta_1), \ \dots, \ e_n \colon (A_n,\alpha_n) \to (B_n,\beta_n) \]
    be embeddings in $\EM{\C_1},\dots,\EM{\C_n}$, respectively. Recall that $\lop(e_1,\dots,e_n)$ is defined by the universal property of equalisers. In particular, it makes the following diagram commute.
    \[
        \begin{tikzcd}
            \lop(\vec{\Ac_i})\dar[swap]{\lop(\veci e)} \rar{\iota_{\vec{\Ac_i}}} & \EMF{\D}(\op(\vec{\As_i})) \dar{\EMF{\D}(\op(\vec{U(e_i)}))} \\
            \lop(\vec{\Ac'_i}) \rar{\iota_{\vec{\Ac'_i}}} & \EMF{\D}(\op(\vec{\As'_i}))
        \end{tikzcd}
    \]
    From $e_1,\dots,e_n$ being embeddings we know that $\EMF{\D}(\op(\vec{U(e_i)}))$ is an embedding as well because $\D$ and \(\op\) preserve embeddings. Furthermore, by Lemma~\ref{l:fs-basics}.1, $\iota_{\vec{\Ac_i}}$ is an embedding since it is an equaliser and, consequently, by Lemma~\ref{l:fs-basics}.3, $\lop(\veci e)$ is an embedding too.
\end{prf}

To proceed we need the following simple lemma about decompositions of \bimorph{}s.

\begin{lem}
    \label{l:multi-decomp}
    Assume that both $\D$ and $\op$ preserve embeddings.
    Then, given a \bimorph{} $f\colon \Ac \to [\vec{\Bc_i}]$ which decomposes as $f_0\colon \As \to \op(\vec{\As_i})$ followed by $\op\left(\veci e\right)$, for some coalgebra embeddings $e_i\colon \Ac_i \embed \Bc_i$, for $1 \leq i \leq n$, the morphism $f_0$ is a \bimorph{} $\Ac \to \left[\vec{\Ac_i}\right]$.
\end{lem}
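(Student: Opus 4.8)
The plan is to verify the defining \bimorph{} square \eqref{eq:bimorph} for $f_0$ directly, by transporting the square we already know holds for $f = \op(\veci e) \circ f_0$ and then cancelling a monomorphism on the left. Concretely, writing $\alpha$, $\alpha_i$, $\beta_i$ for the coalgebra structures on $A$, $A_i$, $B_i$, we must establish that
\[ \op(\vec{\alpha_i}) \circ f_0 = \kappa \circ \D(f_0) \circ \alpha \]
holds as morphisms $A \to \op(\vec{\C_i(A_i)})$, given the analogous identity for $f$.

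First I would unfold the hypothesis that $f$ is a \bimorph{} $\alpha \to [\vec{\beta_i}]$, namely $\op(\vec{\beta_i}) \circ \op(\veci e) \circ f_0 = \kappa \circ \D(\op(\veci e)) \circ \D(f_0) \circ \alpha$. On the left, functoriality of $\op$ together with the coalgebra-morphism law $\beta_i \circ e_i = \C_i(e_i) \circ \alpha_i$ for each $e_i$ rewrites $\op(\vec{\beta_i})\circ\op(\veci e)$ as $\op(\vec{\C_i(e_i)}) \circ \op(\vec{\alpha_i})$. On the right, naturality of the Kleisli law $\kappa\colon \D\circ\op \Rightarrow \op\circ\prod_i\C_i$ at the tuple $\veci e$ gives $\kappa \circ \D(\op(\veci e)) = \op(\vec{\C_i(e_i)}) \circ \kappa$. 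Substituting both, the identity for $f$ becomes
\[ \op(\vec{\C_i(e_i)}) \circ \big(\op(\vec{\alpha_i}) \circ f_0\big) = \op(\vec{\C_i(e_i)}) \circ \big(\kappa \circ \D(f_0) \circ \alpha\big), \]
where the two parenthesised morphisms are exactly the two legs of \eqref{eq:bimorph} for $f_0$. It therefore suffices to cancel $\op(\vec{\C_i(e_i)})$ on the left.

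The crux, and the only place the embedding hypotheses really bite, is showing $\op(\vec{\C_i(e_i)})$ is a monomorphism. Each $e_i$ is a coalgebra embedding, so its underlying morphism is an embedding in $\CC_i$; since each $\C_i$ preserves embeddings (part of the standing assumption making $\EM{\C_i}$ a path category, and true for all our game comonads), each $\C_i(e_i)$ is an embedding, and as $\op$ preserves embeddings by hypothesis $\op(\vec{\C_i(e_i)})$ is an embedding, hence a monomorphism by properness of the factorisation system. Cancelling it yields \eqref{eq:bimorph} for $f_0$. I expect this monomorphism step to be the main obstacle: it is essential to cancel the genuinely monic $\op(\vec{\C_i(e_i)})$ rather than, say, the counit retraction $\op(\vec{\counit_{A_i}})$, since post-composing \eqref{eq:bimorph} with the latter collapses both legs to $f_0$ via \ref{ax:kl-law-counit} and the counit laws, and so carries no information.
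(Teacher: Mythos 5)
Your proof is correct and is essentially the paper's own argument: the same unfolding of the \bimorph{} square for $f$, the same two rewrites (the coalgebra-morphism identity $\beta_i \circ e_i = \C_i(e_i) \circ \alpha_i$ and naturality of $\kappa$), and the same cancellation of $\op(\vec{\C_i(e_i)})$ as an embedding, hence a monomorphism by properness. The only divergence is in justifying that cancellation: you invoke the $\C_i$ preserving embeddings, whereas the paper's stated hypothesis and proof refer to $\D$ — but since $\D$ cannot even be applied to $e_i$ (a morphism of $\CC_i$), the paper's $\op(\vec{\D(e_i)})$ is evidently a slip for $\op(\vec{\C_i(e_i)})$, and your reading (requiring the $\C_i$, as indeed holds for the game comonads in play, together with $\op$ to preserve embeddings) is the accurate one.
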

\begin{prf}
    We show that the left oblong in the diagram below commutes.
    \[
    \begin{tikzcd}[column sep=3.5em]
        \As
            \rar{f_0}
            \ar{dd}{\alpha}
        & \opveci\As
            \rar{\opveci e}
            \ar{d}{\opveci\alpha}
        & \opveci\Bs
            \ar{d}{\opveci \beta}
        \\
        & \opvec{\D(\As_i)}
            \rar{\opvec{\D(e_i)}}
        & \opvec{\D(\Bs_i)}
        \\
        \D(P)
            \rar{\D(f_0)}
        & \D(\opveci \As)
            \uar{\kappa}
            \rar{\D(\opveci e)}
        & \D(\opveci \Bs)
            \uar{\kappa}
    \end{tikzcd}
    \]
    Since $f$ is a \bimorph{}, we know that the outer square commutes. Further, the two squares on the right commute by naturality of $\kappa$ and the assumption that $e_i$, for $1\leq i \leq n$, is a coalgebra morphism. A simple diagram chasing implies that
    \[ \opvec{\D(e_i)} \circ \opveci\alpha \circ f_0 = \opvec{\D(e_i)}\circ \kappa \circ \D(f_0)\circ \alpha .\]
    Since $\D$ and $H$ preserve embeddings, $\opvec{\D(e_i)}$ is an embedding and hence a mono, %
    proving that $f_0$ is a \bimorph{}.
\end{prf}

Next we observe that \ref{ax:s2} is implied by a simple condition which can be stated entirely in terms of \bimorph{}s, that is, typically, it is a condition in the base category of (pointed) relational structures.

\begin{restatable}{prop}{STwoPrime}
    \label{p:s2p}
    \label{p:smoothness-bimorphisms}
    The axiom \ref{ax:s2} is implied by the following axiom:
    \begin{axioms}
        \item[\em\textbf{\namedlabel{ax:s2p}{(S2')}}]
        For any path $(\Ps,\Pc)$ in $\EM \D$ and coalgebras $(\As_i,\Ac_i)$ in $\EM{\C_i}$ for $1 \leq i \leq n$,
        every \bimorph{} \(\Pc \to [\veci \Ac]\)
        has a minimal decomposition through
        \[ \op(\veci e)\colon \op(\vec{\Ps_i}) \to \op(\vec{\As_i}),\]
        for some path embeddings $e_i\colon (\Ps_i,\Pc_i) \embed (\As_i,\Ac_i)$.
    \end{axioms}
\end{restatable}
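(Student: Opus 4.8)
The plan is to transport the decomposition problem across the bijection $f \mapsto f^\#$ of Proposition~\ref{p:bimorphisms}, which identifies coalgebra morphisms $(P,\pi) \to \lop(\vec{\alpha_i})$ with \bimorph{}s $\pi \to [\vec{\alpha_i}]$. The crucial point is that, by Proposition~\ref{p:bimorph-dist}, this bijection sends a composite $\lop(\veci g) \circ g_0$ to $\op(\veci g) \circ (g_0)^\#$. Hence decompositions of a path embedding through some $\lop(\veci g)$ correspond exactly to decompositions of the associated \bimorph{} through $\op(\veci g)$, and a minimal decomposition supplied by \ref{ax:s2p} will translate into the minimal decomposition required by \ref{ax:s2}.

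Concretely, I would begin with a path embedding $e\colon (P,\pi) \embed \lop(\vec{\alpha_i})$ and form the \bimorph{} $e^\#\colon \pi \to [\vec{\alpha_i}]$. Applying \ref{ax:s2p} to $e^\#$ produces path embeddings $e_i\colon (P_i,\pi_i) \embed (A_i,\alpha_i)$ and a factorisation $e^\# = \op(\veci e) \circ f_0$ with $f_0\colon P \to \op(\vec{P_i})$. Since $\D$ and $\op$ preserve embeddings, Lemma~\ref{l:multi-decomp} guarantees that $f_0$ is itself a \bimorph{} $\pi \to [\vec{\pi_i}]$, so it transposes back to a coalgebra morphism $e_0 := (f_0)^+\colon (P,\pi) \to \lop(\vec{\pi_i})$. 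This $e_0$ is exactly the first component of a decomposition of the shape required by \ref{ax:s2}.

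To see that $e = \lop(\veci e) \circ e_0$, I would apply $\#$ once more: by Proposition~\ref{p:bimorph-dist} we have $(\lop(\veci e) \circ e_0)^\# = \op(\veci e) \circ (e_0)^\# = \op(\veci e) \circ f_0 = e^\#$, whence $\lop(\veci e) \circ e_0 = e$ by injectivity of $f \mapsto f^\#$ (Proposition~\ref{p:bimorphisms}). Thus $e$ decomposes through $\lop(\veci e)$ with the $e_i$ path embeddings, as \ref{ax:s2} demands.

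Finally, minimality transfers along the same correspondence. Given any competing decomposition $e = \lop(\veci g) \circ g_0$ with path embeddings $g_i\colon Q_i \embed A_i$, applying $\#$ and Proposition~\ref{p:bimorph-dist} yields $e^\# = \op(\veci g) \circ (g_0)^\#$, a decomposition of $e^\#$ through $\op(\veci g)$. The minimality of $\op(\veci e) \circ f_0$ granted by \ref{ax:s2p} then supplies the unique morphisms $h_i\colon P_i \to Q_i$ with $e_i = g_i \circ h_i$, which is precisely the minimality clause of \ref{ax:s2}. I expect the main obstacle to be the step identifying $f_0$ as a genuine \bimorph{}, so that $e_0$ exists as a coalgebra morphism --- this is where Lemma~\ref{l:multi-decomp} and the embedding-preservation hypotheses enter --- together with the careful bookkeeping needed to match the two minimality conditions across the correspondence.
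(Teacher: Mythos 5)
Your proposal is correct and follows essentially the same route as the paper's own proof: transpose the path embedding $e$ to the \bimorph{} $e^\#$ via Proposition~\ref{p:bimorphisms}, apply \ref{ax:s2p}, use Lemma~\ref{l:multi-decomp} to recognise $f_0$ as a \bimorph{} and transpose it back to $e_0$, then conclude $e = \lop(\veci e)\circ e_0$ and transfer minimality using Proposition~\ref{p:bimorph-dist} and the bijectivity of $(-)^\#$. The paper's argument matches yours step for step, including the identification of Lemma~\ref{l:multi-decomp} as the point where embedding-preservation of $\D$ and $\op$ enters.
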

\begin{prf}
    Let $e\colon (P,\pi) \embed \lop(\vec{(\As,\Ac_i)})$ be a path embedding and let $f = e^\#$ be be the corresponding homomorphism of (pointed) $\sg$-structures $P \to H(\vec{\As_i})$, obtained by Proposition~\ref{p:bimorphisms}. %

    By \ref{ax:s2p}, $f$ has a minimal decomposition as $\op(\veci e)\circ f_0$ for some $f_0\colon \Ps \to \op(\vec{\Ps_i})$ and path embeddings
    \[ e_i\colon (\Ps_i,\Pc_i) \embed (\As_i,\Ac_i), \]
    with $1 \leq i \leq n$. It follows by Lemma~\ref{l:multi-decomp} that $f_0$ is a \bimorph{}
    \[ \Pc \to \left[\vec{\Pc_i}\right]. \]
    Therefore, by Proposition~\ref{p:bimorphisms}, there exists a coalgebra morphism $e_0\colon \pi \to \lop\left(\vec{\pi_i}\right)$ such that $e_0^\# = f_0$ and, furthermore, by Proposition~\ref{p:bimorph-dist}
    \[
        e^\# = f = \op\left(\veci e\right)\circ f_0 = \op\left(\veci e\right)\circ e_0^\# = (\lop\left(\veci e\right) \circ e_0)^\#
    \]
    which proves that $e = \lop\left(\veci e\right) \circ e_0$, since $(-)^\#$ is a bijection between coalgebra morphisms $(P, \pi) \to \lop\left(\vec{\alpha_i}\right)$ and \bimorph{}s $\pi \to \left[\vec{\alpha_i}\right]$.

    Lastly, we show the minimality of this decomposition. If $e$ decomposes as $g_0\colon \Pc \to \lop\left(\vec{\Qc_i}\right)$ followed by $\lop\left(\veci g\right)$, with embeddings $g_i\colon (\Qs_i,\Qc_i) \embed (\As_i,\Ac_i)$, for $1 \leq i \leq n$, then by Proposition~\ref{p:bimorph-dist},
    \[
        f = e^\# = (\lop\left(\veci g\right) \circ g_0)^\# = \op\left(\veci g\right) \circ g_0^\#
    \]
    Therefore, by minimality of the decomposition $f$ as $\op\left(\veci e\right) \circ f_0$, there exist coalgebra morphisms $h_i \colon (\Ps_i, \Pc_i) \to (\Qs_i, \Qc_i)$ such that $e_i = g_i \circ h_i$, for every $i\in \{1,\dots,n\}$.
\end{prf}

\begin{rem}
In fact, \ref{ax:s2} is equivalent to the version of \ref{ax:s2p} where we only care about morphisms $\Ps \to \op(\vec{\As_i})$ which arise from embeddings, via the correspondence in Proposition~\ref{p:bimorphisms}.
\end{rem}

Finally, by putting together the results from the last two sections, we are ready to state and prove an FVM theorem for the full logic. This time we state all assumptions for ease of reference.

\begin{restatable}{thm}{FVMfulllogic}
    \label{t:fvm-standard}
    Let $\op\colon \prod\nolimits_i \CC_i \to \CD $ be a functor and let $\C_1,\dots,\C_n$ and $\D$ be comonads on categories $\CC_1,\dots,\CC_n$ and $\CD$, respectively, such that each of the arising categories of coalgebras are path categories and, furthermore, \(\EM\D\) has equalisers.

    If \(\op\) preserves embeddings and there exists a Kleisli law of type
    \(\D\circ H \Rightarrow H \circ \prod\nolimits_i\C_i \)
    satisfying \ref{ax:s2p}, then
    \[ A_1 \lequiv{\C_1} B_1, \;\ldots,\; A_n \lequiv{\C_n} B_n \]
    implies
    \[ H(A_1,\dots,A_n) \;\lequiv{\D}\; H(B_1,\dots,B_n). \]
\end{restatable}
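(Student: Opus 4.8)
The plan is to assemble the machinery developed in Sections~\ref{s:lifting-operations} and~\ref{s:ope-preservation}, so that the theorem becomes a direct consequence of the preservation result for the lifted functor. First I would note that the Kleisli law together with the hypothesis that $\EM\D$ has equalisers lets me invoke Theorem~\ref{t:em-lifting}, producing the lifting $\lop\colon \prod_i \EM{\C_i} \to \EM\D$ of $\op$. The remaining hypotheses are tailored to verify the two axioms needed by Theorem~\ref{t:smoothness}: Proposition~\ref{p:s1p} converts the assumption that $\op$ preserves embeddings into \ref{ax:s1} for $\lop$, and Proposition~\ref{p:s2p} shows that the \bimorph{} condition \ref{ax:s2p} entails \ref{ax:s2}. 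With both in hand, Theorem~\ref{t:smoothness} guarantees that $\lop$ sends tuples of open pathwise-embeddings to open pathwise-embeddings (here we also use that each $\EM{\C_i}$ and $\EM\D$ is a path category, as assumed).

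Next I would unpack the hypotheses $A_i \lequiv{\C_i} B_i$. By definition of the relation, each provides a span of open pathwise-embeddings
\[ \EMF{\C_i}(A_i) \xleftarrow{\,p_i\,} Z_i \xrightarrow{\,q_i\,} \EMF{\C_i}(B_i) \]
in $\EM{\C_i}$. The tuple $(p_1,\dots,p_n)$, respectively $(q_1,\dots,q_n)$, is a morphism in the product category $\prod_i \EM{\C_i}$, so applying the functor $\lop$ yields a single span in $\EM\D$ whose legs $\lop(\veci p)$ and $\lop(\veci q)$ are open pathwise-embeddings by the preservation result of the previous paragraph. Note that, unlike the directed relation $\parrow{\C}$ of Theorem~\ref{t:fvm-pe}, the relation $\lequiv{\C}$ is already symmetric, being witnessed by a span, so a single application of $\lop$ suffices.

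It then remains to rewrite the endpoints of this span in the form demanded by $\lequiv{\D}$. Here I use that $\lop$ is a lifting: diagram~\eqref{eq:lifting} supplies isomorphisms $\lop(\EMF{\C_1}(A_1),\dots,\EMF{\C_n}(A_n)) \cong \EMF\D(\op(\veci A))$ and likewise for the $B_i$. Precomposing the two legs with these isomorphisms produces a span
\[ \EMF\D(\op(\veci A)) \leftarrow \lop(\veci Z) \rightarrow \EMF\D(\op(\veci B)) , \]
and since isomorphisms are open pathwise-embeddings and these are stable under composition with isomorphisms, both legs remain open pathwise-embeddings. This is exactly a witness for $\op(\veci A) \lequiv{\D} \op(\veci B)$, which is the desired conclusion.

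I expect no genuine obstacle at this stage, since the conceptual work has already been discharged in Theorems~\ref{t:em-lifting} and~\ref{t:smoothness} and in the \bimorph{} analysis of Propositions~\ref{p:s1p} and~\ref{p:s2p}. The only points requiring care are bookkeeping: correctly identifying $\lop$ applied to cofree coalgebras with the cofree coalgebra on $\op(\veci A)$ through the isomorphisms of~\eqref{eq:lifting}, and the small routine verification that composing an open pathwise-embedding with an isomorphism again yields an open pathwise-embedding.
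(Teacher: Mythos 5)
Your proposal is correct and follows essentially the same route as the paper's own proof: invoke Theorem~\ref{t:em-lifting} for the lifting, Propositions~\ref{p:s1p} and~\ref{p:s2p} to verify \ref{ax:s1} and \ref{ax:s2}, apply Theorem~\ref{t:smoothness} to the spans witnessing $A_i \lequiv{\C_i} B_i$, and identify the endpoints via the lifting isomorphisms. Your explicit remark that the legs must be composed with the isomorphisms of~\eqref{eq:lifting} (and that this preserves open pathwise-embeddings) is a small bookkeeping point the paper leaves implicit, but it is not a different argument.
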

\begin{prf}
    Given a collection of open pathwise-embeddings
    \begin{align*}
        \EMF{\C_1}(A_1) \xleftarrow{f_1} Z_1 \xrightarrow{g_1} \EMF{\C_1}(B_1),
        \quad\dots\ , \ \
        \EMF{\C_n}(A_n) \xleftarrow{f_n} Z_n \xrightarrow{g_n} \EMF{\C_n}(B_n)
    \end{align*}
    in $\EM{\C_1},\,\dots,\,\EM{\C_n}$, respectively, we wish to construct a pair of open pathwise-embeddings
    \begin{align}
        \EMF{\D}(\op(A_1,\dots,A_n)) \leftarrow Z \rightarrow \EMF{\D}(\op(B_1,\dots,B_n))
        \label{eq:free-op-cospan}
    \end{align}
    in $\EM \D$.

    Since there is a Kleisli law $\kappa \colon \D\circ \op \Rightarrow \op \circ \prod_i\C_i$, we know by Theorem~\ref{t:em-lifting} that $H$ lifts to $\lop \colon \prod_i \EM{\C_i} \to \EM \D$.
    Also, since $\op$ preserves embeddings, we know that \ref{ax:s1} holds by Proposition~\ref{p:s1p}.
    Furthermore, \ref{ax:s2} holds too by Proposition~\ref{p:s2p} because we assumed that $\kappa$ satisfies \ref{ax:s2p}.
    Consequently, by Theorem~\ref{t:smoothness}, $\lop$ sends tuples of open pathwise-embeddings to open pathwise-embeddings.

    We obtain a pair of open pathwise-embeddings
    \[
    \begin{tikzcd}[column sep=3.5em]
        \lop(\vec{\EMF{\C_i}(A_i)})
        & \lop(\veci Z)
        \ar[swap]{l}{\lop(\veci f)}
        \ar{r}{\lop(\veci g)}
        &\lop(\vec{\EMF{\C_i}(B_i)})
    \end{tikzcd}
    \]
    as the morphisms of coalgebras $f_1,\dots,f_n$ and $g_1,\dots,g_n$ are open pathwise-embeddings too.

    Finally, we recall that, by Theorem~\ref{t:em-lifting},
    \[
        \EMF{\D}(\op(\veci A))
        \cong
        \lop(\vec{\EMF{\C_i}(A_i)}))
        \qtq{and}
        \EMF{\D}(\op(\veci B))
        \cong
        \lop(\vec{\EMF{\C_i}(B_i)})).
    \]
    This gives us that $\lop(f_1,\dots,f_n)$ and $\lop(g_1,\dots,g_n)$ are of the required type, as in \eqref{eq:free-op-cospan}.
\end{prf}

\begin{rem}
Recall that the requirement that \(\EM\D\) has equalisers in Theorem~\ref{t:fvm-standard} holds automatically in our examples, by Corollary~\ref{c:equalisers}.
    Furthermore, when we say that a Kleisli law \(\kappa\colon \D\circ H \Rightarrow H \circ \prod\nolimits_i\C_i \) satisfies \ref{ax:s2p}, although \(\kappa\) is never mentioned in the phrasing of the axiom \ref{ax:s2p}, it appears implicitly in the definition of \bimorph{}s.
\end{rem}

\begin{exa}
  \label{e:coproducts-full}
  We can apply Theorem~\ref{t:fvm-standard} to show that $\sequiv{\Pk}$ is preserved by taking disjoint unions of structures.
  As the functor~$\uplus$ preserves embeddings, we only have to check~\ref{ax:s2p} for the Kleisli law~$\kappa_{A_1,A_2}\colon \Pk(A_1 \uplus A_2) \to \Pk(A_1) \uplus \Pk(A_2)$ defined in Example~\ref{ex:coproducts-pe}.
  For a path $(\Ps,\Pc)$ in $\EM\Pk$, consider a morphism $f\colon \Ps \rightarrow A_1 \uplus A_2$ , where $(A_1,\Ac_1)$ and $(A_2,\Ac_2)$ are $\Pk$-coalgebras, and $f$ makes the following diagram of \ref{ax:s2p} commute.
    \[
      \begin{tikzcd}
          \Ps
            \ar{rr}{f}
            \ar[swap]{d}{\Pc}
          &
          &
          \As_1 \uplus \As_2
            \ar{d}{\alpha_1 \uplus \alpha_2}
          \\
          \Pk(\Ps)
            \ar{r}{\Pk(f)}
          &
          \Pk(\Ac_1 \uplus \Ac_2)
            \ar{r}{\kappa}
          &
          \Pk(\As_1) \uplus \Pk(\As_2)
      \end{tikzcd}
    \]
    Recall from Remark~\ref{r:coalg-order} that the coalgebra map $\Pc$ defines a forest order $\sqsubseteq_\Pc$ on the universe of $\Ps$, and similarly for coalgebra maps $\alpha_1,\alpha_2$. Furthermore, since $(\Ps,\Pc)$ is a path, $(P,\sqsubseteq_\Pc)$ is a finite linear order $x_1 \sqsubseteq_\Pc \dots \sqsubseteq_\Pc x_n$.

    We may assume without loss of generality that $f(x_n)$ is in $\As_1$. Then, $(\alpha_1 \uplus \alpha_2)(f(x_i))$ gives a word in $\Pk(\Ac_1)$, which forms a finite linear order in the $\sqsubseteq_{\alpha_i}$ order. By commutativity of the above diagram, this must be the same word as the down-to-right composition $\kappa(\Pk(f)(\Pc(x_n)))$. But, by definition, the latter is just the reduction of the word on elements $f(x_1), \dots, f(x_n)$ to the positions where $f(x_i) \in A_1$.

    The same reasoning applies for the largest $j$ such that $f(x_j)$ is in $\As_2$.
    Each of the two words obtained this way yields a path embedding $e_i\colon (P_i,\pi_i) \emb (A_i,\alpha_i)$ as the embedding of the induced substructure of $A_i$ on given word letters.
    Define a morphism $e_0\colon P \rightarrow P_1 \uplus P_2$ by sending an element to its image under $f$. By construction, $(e_1 \uplus e_2) \circ e_0$ forms a decomposition of $f$ and minimality of this decomposition is immediate, as any other decomposition of $f$ would contain $P_i$ as a subpath. We obtain the following FVM theorem for the $k$ variable fragment:
    \begin{align*}
    &A_1 \lequiv{\Pk} B_1 \ete{and}  A_2 \lequiv{\Pk} B_2 \\
    &\tq{implies}
    A_1 \uplus A_2 \lequiv{\Pk} B_1 \uplus B_2
    \end{align*}

The same reasoning goes through for $\Pk$ and $\Ek$ with coproducts over any index set $I$ and for $\Mk$ with $\merge{R}$ from Example~\ref{ex:coproducts-with-choice-pe}.
\end{exa}

\section{FVM theorems for existential fragments}
\label{s:existential}
In \cite{ALRXXexistential} it is observed that we can also express preservation of existential formulas naturally in the setting of game comonads. Recall that \df{existential formulas} are first-order formulas without universal quantifiers and with negation only in front of atoms. Define
\begin{itemize}
    \item $A \earrow{\C} B$\, if there exists a pathwise-embedding $\EMF\C(A) \to \EMF\C(B)$.
\end{itemize}
Then, it follows that \(A \earrow{\Ek} B\) iff, for every existential sentence $\varphi$ of quantifier depth at most $k$, \(A \models \varphi\) implies \(B \models \varphi\).
A similar statement holds for \(A \earrow{\Pk} B\) and the \(k\)-variable restriction of the existential fragment \cite{abramskyreggio2023arboreal}.

It turns out that, by Lemma~\ref{l:pe-preserved}, under the same conditions as in Theorem~\ref{t:fvm-standard} we have a preservation theorem for \(\earrow{\C}\). In fact, a weaker version of axiom \ref{ax:s2p} stated in terms of \bimorph{}s suffices. This weaker axiom no longer requires the decomposition to be minimal, that is, we define:
\begin{axioms}
        \item[\em\textbf{\namedlabel{ax:s2pp}{(S2'')}}]
        For any path $(\Ps,\Pc)$ in $\EM \D$ and coalgebras $(\As_i,\Ac_i)$ in $\EM{\C_i}$ for $1 \leq i \leq n$,
        every \bimorph{} $\Pc \to [\veci \Ac]$
        has \emph{some} decomposition through
        \[ \op(\veci e)\colon \op(\vec{\Ps_i}) \to \op(\vec{\As_i}),\]
        for some path embeddings $e_i\colon (\Ps_i,\Pc_i) \embed (\As_i,\Ac_i)$.
\end{axioms}
The corresponding FVM theorem for the existential fragments reads as follows.

\begin{thm}
    Let $\op\colon \prod\nolimits_i \CC_i \to \CD $, $\C_1,\dots,\C_n$ and $\D$ be exactly as in Theorem~\ref{t:fvm-standard}, except that instead of \ref{ax:s2p} we only require \ref{ax:s2pp} for the Kleisli law \(\D\circ H \Rightarrow H \circ \prod\nolimits_i\C_i \). Then
    \[ A_1 \earrow{\C_1} B_1, \;\ldots,\; A_n \earrow{\C_n} B_n \]
    implies
    \[ H(A_1,\dots,A_n) \;\earrow{\D}\; H(B_1,\dots,B_n). \]
\end{thm}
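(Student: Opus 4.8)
The plan is to re-run the proof of Theorem~\ref{t:fvm-standard} with every occurrence of ``open pathwise-embedding'' weakened to ``pathwise-embedding'', and with the span of open pathwise-embeddings that witnesses full-logic equivalence collapsed to the single morphism that witnesses $\earrow{\D}$. The decisive observation is that preservation of \emph{pathwise}-embeddings is already supplied by Lemma~\ref{l:pe-preserved}, and that the proof of that lemma consults only the \emph{existence} clause of \ref{ax:s2}, never its minimality: minimality was spent entirely in Lemma~\ref{l:I4} and Theorem~\ref{t:smoothness} to secure \emph{openness}, which is exactly the ingredient we now discard. Hence weakening \ref{ax:s2p} to \ref{ax:s2pp} should cost nothing.

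First I would establish a non-minimal analogue of Proposition~\ref{p:s2p}, namely that \ref{ax:s2pp} delivers the existence half of \ref{ax:s2}. This is simply the opening stretch of the proof of Proposition~\ref{p:s2p}, halted before the minimality paragraph. Given a path embedding $e\colon (\Ps,\Pc)\embed \lop(\veci \Ac)$, transpose it to the \bimorph{} $f = e^{\#}\colon \Pc \to [\veci \Ac]$ via Proposition~\ref{p:bimorphisms}. By \ref{ax:s2pp}, $f$ decomposes as $\op(\veci e)\circ f_0$ through $\op(\vec{\Ps_i})$ for some path embeddings $e_i\colon (\Ps_i,\Pc_i)\embed (\As_i,\Ac_i)$. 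Lemma~\ref{l:multi-decomp} then shows that $f_0$ is itself a \bimorph{} $\Pc \to [\vec{\Pc_i}]$, so by Proposition~\ref{p:bimorphisms} it transposes to a coalgebra morphism $e_0\colon (\Ps,\Pc)\to \lop(\vec{\Pc_i})$, and Proposition~\ref{p:bimorph-dist} gives $e = \lop(\veci e)\circ e_0$. This is precisely the decomposition consumed by Lemma~\ref{l:pe-preserved}, obtained without any appeal to minimality.

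With \ref{ax:s1} furnished by Proposition~\ref{p:s1p} (since $\op$ preserves embeddings), and with the decomposition above standing in for the existence clause of \ref{ax:s2}, the argument of Lemma~\ref{l:pe-preserved} goes through verbatim, so $\lop$ carries tuples of pathwise-embeddings to pathwise-embeddings. Now if $f_i\colon \EMF{\C_i}(A_i)\to \EMF{\C_i}(B_i)$ are the pathwise-embeddings witnessing $A_i \earrow{\C_i} B_i$, then $\lop(\veci f)\colon \lop(\vec{\EMF{\C_i}(A_i)})\to \lop(\vec{\EMF{\C_i}(B_i)})$ is again a pathwise-embedding. Transporting it along the isomorphisms $\EMF{\D}(\op(\veci A))\cong \lop(\vec{\EMF{\C_i}(A_i)})$ and $\EMF{\D}(\op(\veci B))\cong \lop(\vec{\EMF{\C_i}(B_i)})$ of Theorem~\ref{t:em-lifting} yields a pathwise-embedding $\EMF{\D}(\op(\veci A))\to \EMF{\D}(\op(\veci B))$, which is exactly a witness of $H(A_1,\dots,A_n)\earrow{\D} H(B_1,\dots,B_n)$.

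The only real obstacle I anticipate is bookkeeping rather than a new idea: one must check carefully that the mere existence of the decomposition $e = \lop(\veci e)\circ e_0$ is all that Lemma~\ref{l:pe-preserved} ever uses, so that passing from \ref{ax:s2p} to the weaker \ref{ax:s2pp} genuinely suffices for pathwise-embedding preservation. Once that audit of the earlier proof is complete, the theorem assembles from the pieces above with no further computation.
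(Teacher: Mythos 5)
Your proposal is correct and follows essentially the same route as the paper's own proof: it rests on the twin observations that Lemma~\ref{l:pe-preserved} never consults minimality, and that \ref{ax:s2pp} yields the non-minimal variant of \ref{ax:s2} by exactly the existence portion of the proof of Proposition~\ref{p:s2p}, after which the lifted morphism $\lop(\veci f)$ is transported along the identification $\lop(\vec{\EMF{\C_i}(A_i)}) \cong \EMF{\D}(\op(\veci A))$ from Theorem~\ref{t:em-lifting}. The paper states these steps more tersely, but the argument is the same.
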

\begin{proof}
    Observe that in the proof of Lemma~\ref{l:pe-preserved} the minimality of the decomposition was not needed. Further observe that the variant of \ref{ax:s2} where minimality is not required is obtained from \ref{ax:s2pp} by the same proof as for Proposition~\ref{p:s2p}.
    Consequently, we see that the lifting $\lop\colon \prod\nolimits_i \EM{\C_i} \to \EM \D$ of \(\op\) preserves pathwise-embeddings.

    Then, from \(A_1 \earrow{\C_1} B_1, \;\ldots,\; A_n \earrow{\C_n} B_n\), witnessed by pathwise-embeddings
    \[
        f_1\colon \EMF{\C_1}(A_1) \to \EMF{\C_2}(B_2),
        \;\ldots,\;
        f_n\colon \EMF{\C_n}(A_n) \to \EMF{\C_n}(B_n)
    \]
    we obtain a pathwise-embedding
    \[\lop(f_1,\ldots,f_n)\colon \lop(\EMF{\C_1}(A_1), \ldots, \EMF{\C_n}(A_n)) \to \lop(\EMF{\C_1}(B_1), \ldots, \EMF{\C_n}(B_n)).\]
    Finally, since \(\lop\) commutes with the cofree functors, i.e.\ \(\lop(\EMF{\C_1}(A_1), \ldots, \EMF{\C_n}(A_n)) = \EMF\D(H(A_1,\dots,A_n))\), we have the required relation \(H(A_1,\dots,A_n) \earrow{\D} H(B_1,\dots,B_n)\).
\end{proof}

\begin{rem}
    The recent papers of Abramsky, Laure and Reggio \cite{abramskyreggio2024invitation,abramskylaurereggio2025existential+positive} detail the categorical description of the relation \(\larrow{{}^+\C}\) corresponding to the \emph{positive fragments}, i.e.\ fragments without negation.
    For two structures \(A,B\), we write \(A\larrow{{}^+\C} B\) if there exists a \emph{positive bisimulation} from \(F^\C(A)\) to \(F^\C(B)\), that is, there is a diagram of the form
    \[
        \begin{tikzcd}
            Z_1 \dar[swap]{f}\rar{h} & Z_2 \dar{g}
            \\
            F^\C(A) & F^\C(B)
        \end{tikzcd}
    \]
    where \(f,g\) are open pathwise-embeddings and \(h\) is a \emph{bijection}.
    Importantly, the definition of bijections requires \(\EM\C\) to be arboreal.

    Consequently, in order to describe FVM theorems for the \(\larrow{{}^+\C}\) relation, we need to describe conditions under which functors between arboreal categories of coalgebras preserve bijections.
    We leave this as an open problem for future investigation.
\end{rem}

\section{Abstract FVM theorems for products}
\label{sec:product-theorems}
In this section we show that FVM theorems for the operation of the categorical product of two structures are automatic, regardless of the chosen comonad. This shows the power of the categorical approach since the theorem applies to any situation where logical equivalence admits a comonadic characterisation.

Recall that the (categorical) product $\prod_i \As_i$
in the category \(\Rel\),
indexed by a set $I$, is the \(\sg\)-structure with the universe being the product of the underlying universes $\prod_i A_i$. Further, for a relation $R$ in $\sg$, define its realisation in the product by
\[
    R^{\prod_i A_i}(\ol a_1,\dots, \ol a_n) \iff \ (\forall i) \ \ R^{A_i}(a_{1,i},\, \dots, \, a_{n,i})
\]
This operation obeys the usual universal property of categorical products.
Namely, for any object $C$ and morphisms $h_i\colon C\to A_i$ for each \(i \in I\), there is a \emph{unique} morphism $\overline h\colon C\to \prod_i \As_i$ such that,
for every \(i\in I\),
\[
    \begin{tikzcd}
        C \rar{\ol h} \ar[swap]{rd}{h_i} & \prod_i A_i \dar{\pi_i} \\
        & A_i
    \end{tikzcd}
\]
where $\pi_i\colon \prod_i \As_i \to A_i$ is the $i$th projection. In case of products of relational structures, $\overline h$ sends $c$ to the tuple~$(h_i(c))_{i\in I}$.
Products in $\Rels$ work similarly, with the distinguished element of $\prod_i (\As_i,a_i)$ being the tuple~$(a_i)_i$.

\subsection{The positive existential and counting fragments}
We fix an arbitrary comonad~$\C$ on a category~\(\CC\) with products.
Observe that, for a family of objects $\{A_i\}_i$ of $\CC$, indexed by a set \(I\), we have morphisms
\(
    \C(\pi_i) \colon \C(\prod\nolimits_i A_i) \to \C(A_i)
\)
for each $i \in I$ and, by the universal property of products, also the morphism
\[\kappa_{\veci A} \colon \C(\prod\nolimits_i A_i) \to \prod\nolimits_i \C(A_i). \]

Applying Theorem~\ref{t:fvm-pe} immediately yields the following FVM theorem for positive existential fragments.
\begin{thm}
    \label{t:product-fvm-pe-categorical}
    Given collections of $\sg$-structures $\{A_i\}_{i\in I}$ and $\{B_i\}_{i\in I}$, indexed by a common set $I$, and a comonad $\C$ on a category $\CC$ with products. Then,
    \[
        \textstyle
        \left( \forall i\in I.\; A_i \parrow{\C_i} B_i \right) \qtq{implies} \prod_i A_i \parrow{\C_i} \prod_i B_i.
    \]
\end{thm}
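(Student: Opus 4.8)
The plan is to obtain this as an immediate instance of the abstract positive existential FVM theorem, Theorem~\ref{t:fvm-pe}. Concretely, I would take the operation $H$ to be the $I$-indexed product functor $\prod_i\colon \CC^I \to \CC$, and I would take every source comonad as well as the target comonad $\D$ to be the single fixed comonad $\C$. Because Theorem~\ref{t:fvm-pe} was stated only for finitely many arguments, the first step is to appeal to Remark~\ref{r:inf-ops}, which records that the very same statement holds verbatim for operations of arbitrary arity; this is precisely what licenses an arbitrary index set $I$ in place of a finite tuple.

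The single hypothesis that Theorem~\ref{t:fvm-pe} asks for is, uniformly in the family $\{A_i\}_{i\in I}$, a morphism $\C(\prod_i A_i) \to \prod_i \C(A_i)$. This is exactly the morphism $\kappa_{\veci A}$ built immediately before the statement: applying the functor $\C$ to each product projection $\pi_i\colon \prod_i A_i \to A_i$ gives $\C(\pi_i)\colon \C(\prod_i A_i) \to \C(A_i)$, and the universal property of $\prod_i \C(A_i)$ then produces the required mediating morphism $\kappa_{\veci A}$. Since this construction uses only the product structure and nothing special about the $A_i$, it is manifestly uniform in the family, so the hypothesis of Theorem~\ref{t:fvm-pe} holds for every choice of objects.

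With $\kappa_{\veci A}$ supplied, I would simply invoke Theorem~\ref{t:fvm-pe}: from $A_i \parrow{\C} B_i$ for all $i\in I$ it delivers $\prod_i A_i \parrow{\C} \prod_i B_i$, which is the claim. I do not anticipate any genuine obstacle here, as all the work is already discharged by Theorem~\ref{t:fvm-pe} together with the universal property of products. The only points meriting a second glance are confirming that the infinitary form of Theorem~\ref{t:fvm-pe} is available (via Remark~\ref{r:inf-ops}) and that $\kappa_{\veci A}$ is defined uniformly across all families, both of which are immediate.
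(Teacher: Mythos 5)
Your proposal is correct and takes essentially the same route as the paper: the paper constructs exactly the same mediating morphism $\kappa_{\veci A}\colon \C(\prod_i A_i) \to \prod_i \C(A_i)$ from the morphisms $\C(\pi_i)$ via the universal property of products, and then concludes by applying Theorem~\ref{t:fvm-pe} (implicitly in its infinitary form, as licensed by Remark~\ref{r:inf-ops}). Your explicit attention to the infinitary index set and to uniformity of $\kappa_{\veci A}$ in the family simply makes precise what the paper leaves tacit.
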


A direct calculation reveals that $\kappa$ is natural in the choice of $\{A_i\}_i$ and, furthermore, is a Kleisli law.

\begin{lem}
    \label{l:kappa-products}
    $\kappa$ is a Kleisli law.
\end{lem}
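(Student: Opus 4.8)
The plan is to exploit the defining universal property of the product. By construction, $\kappa_{\veci A}\colon \C(\prod_i A_i) \to \prod_i \C(A_i)$ is the unique morphism satisfying $\pi_j \circ \kappa_{\veci A} = \C(\pi_j)$ for every $j \in I$, where $\pi_j\colon \prod_i A_i \to A_j$ is the $j$-th projection. Since two morphisms into a product agree as soon as they agree after post-composition with each projection, I would reduce naturality and both Kleisli-law axioms to componentwise identities, whose only ingredients are functoriality of $\C$ together with naturality of $\counit$ and $\delta$.

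For naturality, given a family $f_i\colon A_i \to B_i$, post-composing either leg of the naturality square with a projection $\pi_j$ yields $\C(f_j) \circ \C(\pi_j)$ on both sides: on one side via $\pi_j \circ \prod_i(\C f_i) = \C(f_j) \circ \pi_j$ together with $\pi_j \circ \kappa_{\veci A} = \C(\pi_j)$, and on the other via functoriality of $\C$ and $\pi_j \circ \prod_i f_i = f_j \circ \pi_j$. Hence the square commutes.

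For axiom~\ref{ax:kl-law-counit}, the two legs are $\op(\prod_i \counit) \circ \kappa_{\veci A}$ and $\counit_{\prod_i A_i}$. Post-composing the first with $\pi_j$ gives $\counit_{A_j} \circ \pi_j \circ \kappa_{\veci A} = \counit_{A_j} \circ \C(\pi_j)$, which by naturality of $\counit$ equals $\pi_j \circ \counit_{\prod_i A_i}$, i.e.\ the second leg composed with $\pi_j$. For axiom~\ref{ax:kl-law-comultiplication}, I would compare $\op(\prod_i \delta) \circ \kappa_{\veci A}$ with the composite $\kappa_{(\C A_i)} \circ \C(\kappa_{\veci A}) \circ \delta_{\prod_i A_i}$, where the outer $\kappa$ now lives at the shifted family $(\C A_i)$ and so has $j$-th component $\C(\pi_j')$ for $\pi_j'\colon \prod_i \C A_i \to \C A_j$. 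Post-composing the first composite with $\pi_j$ produces $\delta_{A_j} \circ \C(\pi_j)$; post-composing the second gives $\C(\pi_j') \circ \C(\kappa_{\veci A}) \circ \delta_{\prod_i A_i} = \C^2(\pi_j) \circ \delta_{\prod_i A_i}$ (using $\pi_j' \circ \kappa_{\veci A} = \C(\pi_j)$ and functoriality of $\C$), which naturality of $\delta$ rewrites as $\delta_{A_j} \circ \C(\pi_j)$. The two agree for every $j$, so~\ref{ax:kl-law-comultiplication} holds.

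The only point requiring care --- and the nearest thing to an obstacle --- is the bookkeeping inside axiom~\ref{ax:kl-law-comultiplication}: one must correctly identify the component of $\kappa$ at the shifted family $(\C A_i)$ and keep the two distinct projections $\pi_j\colon \prod_i A_i \to A_j$ and $\pi_j'\colon \prod_i \C A_i \to \C A_j$ apart. Once this is done, every identity collapses to $\delta_{A_j} \circ \C(\pi_j)$ (respectively its counit analogue) by naturality, so there is no substantive difficulty.
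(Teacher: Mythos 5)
Your proposal is correct and takes essentially the same route as the paper: both proofs reduce everything to componentwise identities via the universal property of the product, using the defining equation $\pi_j \circ \kappa = \C(\pi_j)$ together with naturality of $\counit$ and $\delta$ (your ``meet in the middle at $\delta_{A_j}\circ\C(\pi_j)$'' is just a reorganization of the paper's one-sided chain of equalities for \ref{ax:kl-law-comultiplication}). The only difference is cosmetic: you spell out the naturality of $\kappa$ explicitly, whereas the paper dispatches it as a ``direct calculation'' in the text preceding the lemma.
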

\begin{prf}
    First, observe that by the definition of products and how $\kappa$ is defined, we have that
    \begin{equation}
    \label{eq:prod-prop}
    \C(\pi_i) = \pi_i \circ \kappa
    \end{equation}
    for every $i$. From this we show commutativity of the following diagram.
    \[
        \begin{tikzcd}
            \C\left(\prod_i A_i\right) \ar{rr}{\kappa} \ar[swap]{d}{\counit} & & \prod_i \C(A_i) \ar{d}{\prod_i \counit} \\
            \prod_i A_i \ar[swap]{dr}{\pi_i} & & \prod_i A_i \ar{dl}{\pi_i} \\
            & A_i
        \end{tikzcd}
    \]
    By universality of the products, this implies the counit axiom of Kleisli laws that is \ref{ax:kl-law-counit} for the operation $\op(\veci A) = \prod_i A_i$. To show commutativity of the pentagon above, compute:
    \begin{align*}
        &\pi_i \circ \left(\prod\nolimits_i \counit\right) \circ \kappa \\
        &= \counit \circ \pi_i \circ \kappa & \text{products} \\
        &= \counit \circ \C(\pi_i) & \eqref{eq:prod-prop} \\
        &= \pi_i \circ \counit & \text{naturality}
    \end{align*}
    Next, to show \ref{ax:kl-law-comultiplication} for the operation $\op(\veci A) = \prod_i A_i$, we employ the same strategy. We wish to show commutativity of the following diagram.
    \[
        \begin{tikzcd}[column sep=5em]
            \C(\prod_i A_i) \ar{r}{\kappa} \ar[swap]{d}{\delta} & \prod_i \C(A_i) \ar{dd}{\prod_i \delta} \\
            \C^2(\prod_i A_i) \ar[swap]{d}{\C(\kappa)} \\
            \C(\prod_i \C(A_i)) \ar[swap]{d}{\kappa} & \prod_i \C^2(A_i) \ar{d}{\pi_i} \\
            \prod_i \C^2(A_i) \ar{r}{\pi_i} & \C^2(A_i)
        \end{tikzcd}
    \]
    To do so, we compute:
    \begin{align*}
        &\pi_i \circ \left(\prod\nolimits_i \delta\right) \circ \kappa  \\
        &= \delta \circ \pi_i \circ \kappa & \text{products} \\
        &= \delta \circ \C(\pi_i) & \eqref{eq:prod-prop} \\
        &= \C^2(\pi_i) \circ \delta & \text{naturality} \\
        &= \C(\pi_i) \circ \C(\kappa) \circ \delta  & \eqref{eq:prod-prop} \\
        &= \pi_i \circ \kappa \circ \C(\kappa) \circ \delta  & \eqref{eq:prod-prop} & \qedhere
    \end{align*}
\end{prf}

As a corollary of Lemma~\ref{l:kappa-products} and Theorem~\ref{t:fvm-counting}, we obtain an abstract FVM for counting fragments.
\begin{thm}
    Given the same assumptions as in Theorem~\ref{t:product-fvm-pe-categorical},
    \[
        \textstyle
        \left( \forall i\in I.\; A_i \cequiv{\C} B_i \right)
        \qtq{implies}
        \prod_i A_i \cequiv{\C} \prod_i B_i.
    \]
\end{thm}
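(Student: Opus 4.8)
The plan is to recognise this statement as a direct instance of Theorem~\ref{t:fvm-counting}, so that essentially no new work is required beyond invoking Lemma~\ref{l:kappa-products}. Concretely, I would instantiate the abstract counting FVM theorem by taking the operation $H$ to be the $I$-ary product functor on $\CC$, and by setting each component comonad $\C_i$ together with the target comonad $\D$ all equal to the single fixed comonad $\C$. With this choice, the role of $\prod_i \C_i$ in Theorem~\ref{t:fvm-counting} is played by the product comonad on $\prod_{i\in I}\CC$ built from copies of $\C$.

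Under this instantiation, the Kleisli law demanded by Theorem~\ref{t:fvm-counting} is exactly a natural transformation of the shape of the $\kappa$ constructed immediately before Theorem~\ref{t:product-fvm-pe-categorical}, namely $\kappa_{\veci A}\colon \C(\prod_i A_i) \to \prod_i \C(A_i)$. Lemma~\ref{l:kappa-products} establishes that this $\kappa$ is indeed a Kleisli law, so the single hypothesis of Theorem~\ref{t:fvm-counting} is met. Feeding in the assumption that $A_i \cequiv{\C} B_i$ for every $i\in I$ then produces the conclusion $\prod_i A_i \cequiv{\C} \prod_i B_i$, which is exactly the desired statement.

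The only subtlety, and it is not a genuine obstacle, is that the product is an $I$-ary operation for a possibly infinite index set $I$, whereas Theorem~\ref{t:fvm-counting} is phrased for finite arities. As flagged in Remark~\ref{r:inf-ops}, that theorem and its proof hold verbatim in the infinitary case, so the corollary follows at once with no further argument. In short, all the conceptual effort has already been absorbed into verifying the Kleisli-law axioms in Lemma~\ref{l:kappa-products}, and the present theorem is obtained by a single application of the counting FVM machinery.
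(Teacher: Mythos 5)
Your proposal is correct and matches the paper's own argument exactly: the paper states this theorem as a direct corollary of Lemma~\ref{l:kappa-products} (which shows the product $\kappa$ is a Kleisli law) and Theorem~\ref{t:fvm-counting}, with the infinitary arity covered by Remark~\ref{r:inf-ops}, precisely as you describe.
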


\subsection{The full fragments}
We also have an FVM theorem for products and the equivalence~$\lequiv \C$.
We prove this by checking the assumptions of Theorem~\ref{t:fvm-standard}.
We first check that products preserve embeddings.
\begin{lem}
    \label{l:prod-s1p}
    Let \(\CC\) be a category with products and a weak factorisation system \((\Qu,\Em)\). Then, for any collection of embeddings \(\{m_i\colon C_i \emb D_i\}_i\) the product \(\prod_i m_i\colon \prod_i C_i \to \prod_i D_i\) is an embedding too.
\end{lem}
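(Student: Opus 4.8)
The plan is to characterise embeddings by their right lifting property and then build the required diagonal filler through the product one coordinate at a time. By the definition of the weak factorisation system, $\prod_i m_i$ lies in $\Em$ precisely when $e \pitchfork \prod_i m_i$ holds for every quotient $e \in \Qu$. So I would fix an arbitrary commutative square with a quotient $e\colon X \quot Y$ on the left, $\prod_i m_i$ on the right, top edge $u\colon X \to \prod_i C_i$, and bottom edge $v\colon Y \to \prod_i D_i$; the goal is then to produce a diagonal morphism $d\colon Y \to \prod_i C_i$ with $d \circ e = u$ and $(\prod_i m_i) \circ d = v$.

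First I would project the square into each coordinate. Post-composing the outer square with the projection $\pi_i$ and using the defining identity $\pi_i \circ \prod_i m_i = m_i \circ \pi_i$ yields, for each $i$, a commutative square with $e$ on the left, $m_i$ on the right, top $\pi_i \circ u$, and bottom $\pi_i \circ v$. Since $m_i \in \Em$ and $e \in \Qu$, weak orthogonality $e \pitchfork m_i$ supplies a diagonal filler $d_i\colon Y \to C_i$ satisfying $d_i \circ e = \pi_i \circ u$ and $m_i \circ d_i = \pi_i \circ v$.

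Next I would assemble these coordinatewise fillers. The universal property of the product turns the family $\{d_i\}_i$ into a single morphism $d\colon Y \to \prod_i C_i$ with $\pi_i \circ d = d_i$ for all $i$. It remains to verify that $d$ fills the original square, and both equations follow by post-composing with each $\pi_i$ and invoking uniqueness in the product: for $d \circ e = u$ one computes $\pi_i \circ d \circ e = d_i \circ e = \pi_i \circ u$, and for $(\prod_i m_i) \circ d = v$ one computes $\pi_i \circ (\prod_i m_i) \circ d = m_i \circ d_i = \pi_i \circ v$. As these hold for every $i$, uniqueness of the induced map into the product gives $d \circ e = u$ and $(\prod_i m_i) \circ d = v$, so $d$ is the required filler and $\prod_i m_i \in \Em$.

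The argument is essentially routine; the only point deserving care is that weak orthogonality furnishes the coordinatewise fillers $d_i$ \emph{non-uniquely}, so the assembled $d$ need not be canonical. This causes no difficulty, however, since any choice of the $d_i$ yields a valid diagonal, which is exactly what weak orthogonality $e \pitchfork \prod_i m_i$ demands. I expect no substantial obstacle beyond keeping the product identity $\pi_i \circ \prod_i m_i = m_i \circ \pi_i$ straight throughout.
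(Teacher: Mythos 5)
Your proof is correct and follows essentially the same route as the paper: reduce membership in $\Em$ to the lifting property against all quotients, project the square coordinatewise to obtain fillers $d_i$ from $e \pitchfork m_i$, assemble them via the universal property of the product, and verify the two triangle identities by post-composing with projections.
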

\begin{proof}
    From the definition of weak factorisation systems it is enough to show that \(\prod_i m_i\) is weakly orthogonal to any quotient \(e\in \Qu\), i.e.\ \(e \pitchfork \prod_i m_i\). To this end, assume we have the following a commutative square.
    \[
        \begin{tikzcd}[column sep=3.5em]
            A \rar[->>]{e}\dar[swap]{f} & B \dar{g}\\
            \prod_i C_i \rar{\prod_i m_i} & \prod_i D_i
        \end{tikzcd}
    \]
    Upon fixing \(i\), we can compose this square with the naturality square for product projections.
    \[
        \begin{tikzcd}[column sep=3.5em]
            \prod_i C_i \dar[swap]{\pi_i}\rar{\prod_i m_i} & \prod_i D_i \dar{\pi_i} \\
            C_i \rar[>->]{m_i} & D_i
        \end{tikzcd}
    \]
    Then, since \(e \pitchfork m_i\), there is some \(d_i\colon B\to C_i\) such that \(d_i \circ e = \pi_i \circ f\) and \(m_i \circ d = \pi_i \circ g\).

    Set \(\langle\veci d\rangle\) to be the unique morphism \(B\to \prod_i C_i\) obtained from the cone \(\{d_i\colon B\to C_i\}_i\) and the universal property of products. Observe that
    \[
        \pi_i \circ \langle\veci d\rangle \circ e = d_i \circ q = \pi_i \circ f
    \]
    which gives \(\pi_i \circ \langle\veci d\rangle = f\) and, similarly,
    \[
        \pi_i \,\circ\, \prod\nolimits_i m_i \,\circ\, \langle\veci d\rangle
        = m_i \circ \pi_i \circ \langle\veci d\rangle
        = m_i \circ d_i
        = \pi_i \circ g
    \]
    which gives the second required equation, i.e.\ \(\prod_i m_i \circ \langle\veci d\rangle = g\).
\end{proof}

Recall that in order to specify $\lequiv \C$, we need \(\EM\C\) to be a path category, that is, it comes equipped with a proper factorisation system \((\ol \Qu, \ol \Em)\) and a choice of path objects $\Pa \sue\EM \C$.
As before, we consider the setting where \((\ol \Qu, \ol \Em)\) is obtained as the lifting of a proper factorisation system \((\Qu,\Em)\) on the base category \(\CC\), as in Lemma~\ref{l:lift-fs}.
In this case, a morphism of coalgebras $(A,\alpha) \to (B,\beta)$ is a quotient (i.e.\ is in \(\ol\Qu\)) if the underlying morphism $A \to B$ is a quotient (i.e.\ is in \(\Qu\)).

The crucial requirement of our abstract FVM theorems for products is that \df{paths are closed under quotients}, that is, for any quotient
\[ X \quot Y \]
if $X$ is a path then so is $Y$. The following lemma shows that this is a sufficient requirement for our path category \(\EM\C\) to ensure the second assumption of Theorem~\ref{t:fvm-standard}.

\begin{lem}
    \label{l:prod-s2p}
    If $\C$ preserves embeddings and paths are closed under quotients in \(\EM\C\),
    then $\op(\veci A) = \prod_i A_i$ satisfies axiom \ref{ax:s2p}.
\end{lem}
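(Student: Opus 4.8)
The plan is to recognise a \bimorph{} into a categorical product as nothing more than a cone of coalgebra morphisms, and then to build the required minimal decomposition componentwise using the factorisation system that $\C$ induces on $\EM\C$.

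First I would unwind the defining condition \eqref{eq:bimorph} of a \bimorph{} $f\colon \Pc \to [\veci \Ac]$ into the product $\op(\veci A) = \prod_i A_i$. Writing $f\colon P \to \prod_i A_i$ for the underlying morphism and post-composing the square with the $i$th projection $\pi_i$, the identity $\C(\pi_i) = \pi_i \circ \kappa$ from \eqref{eq:prod-prop} collapses the condition to $\alpha_i \circ (\pi_i \circ f) = \C(\pi_i \circ f) \circ \Pc$ for every $i$. This is exactly the assertion that each component $\pi_i \circ f\colon (P,\Pc) \to (A_i,\alpha_i)$ is a morphism of coalgebras; so a \bimorph{} into the product is precisely a family of coalgebra morphisms out of $(P,\Pc)$.

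Next, since $\C$ preserves embeddings, Lemma~\ref{l:lift-fs} equips $\EM\C$ with a proper factorisation system $(\ol\Qu,\ol\Em)$. For each $i$ I would factor the coalgebra morphism $\pi_i \circ f$ as a quotient $q_i\colon (P,\Pc)\quot(P_i,\Pc_i)$ followed by an embedding $e_i\colon (P_i,\Pc_i)\embed(A_i,\alpha_i)$. As $(P,\Pc)$ is a path and paths are closed under quotients by hypothesis, each $(P_i,\Pc_i)$ is again a path, so every $e_i$ is a path embedding. The universal property of products in the base category then yields the unique $f_0 = \langle q_i\rangle_i\colon P \to \prod_i P_i$, and projecting shows $\op(\veci e)\circ f_0 = f$, since $\pi_i \circ \op(\veci e)\circ f_0 = e_i \circ q_i = \pi_i \circ f$ for all $i$. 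This furnishes the decomposition demanded by \ref{ax:s2p}.

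It remains to verify minimality. Given a competing decomposition $f = \op(\veci g)\circ g_0$ with path embeddings $g_i\colon (Q_i,\Qc_i)\embed(A_i,\alpha_i)$, projecting yields a commutative square with top edge $q_i$, bottom edge $g_i$, and sides $\pi_i \circ g_0$ and $e_i$ (it commutes because both composites equal $\pi_i \circ f$). Weak orthogonality $q_i \pitchfork g_i$ then supplies a diagonal $h_i\colon P_i \to Q_i$ with $e_i = g_i \circ h_i$, which is unique since $g_i$ is monic. The one point genuinely requiring the hypotheses — and the step I expect to be the main obstacle — is checking that each $h_i$ is a coalgebra morphism rather than merely a morphism of underlying structures. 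This follows by cancelling the monomorphism $\C(g_i)$ (a monomorphism precisely because $\C$ preserves embeddings) in the coalgebra-morphism identity satisfied by $e_i = g_i \circ h_i$; alternatively, one can note that $g_0$ is itself a \bimorph{} by Lemma~\ref{l:multi-decomp}, so the square lives in $\EM\C$ and the fill-in takes place there directly. Either route completes the verification of \ref{ax:s2p}.
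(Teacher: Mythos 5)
Your proposal is correct and follows essentially the same route as the paper's proof: projecting the \bimorph{} square to recognise each component $\pi_i \circ f$ as a coalgebra morphism, factorising it via the lifted factorisation system of Lemma~\ref{l:lift-fs}, invoking closure of paths under quotients to get path embeddings, assembling the decomposition by the universal property of products, and establishing minimality by a diagonal fill-in whose coalgebra-morphism property is verified by cancelling the monomorphism $\C(g_i)$ --- exactly the paper's argument. Your alternative fill-in, performed directly in $\EM{\C}$ after noting via Lemma~\ref{l:multi-decomp} that $g_0$ is itself a \bimorph{}, is also valid and slightly streamlines that last verification.
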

\begin{prf}
    We assume that $(Q,\rho)$ is a path in $\EM \C$ and $(A_i,\alpha_i)$ are arbitrary $\C$-coalgebras. Further, assume that we are given a morphism $f\colon \Ps \to \prod_i A_i$ which makes the following diagram commute.
    \[
    \begin{tikzcd}[->]
        \Qs \arrow[rr, "f"] \dar[swap]{\Qc} & & \prod_i\As_i \dar{\prod_i \Ac_i} \\
        \C(\Qs) \rar{\C(f)} & \C(\prod_i \As_i) \rar{\kappa} & \prod_i \C(\As_i)
    \end{tikzcd}
    \]
    It is an easy observation that the composition $\pi_i \circ f$ with the \(i\)th projection \(\pi_i \colon \prod_i A_i \to A_i\) is a coalgebra morphism $(Q,\rho) \to (A_i, \alpha_i)$. Since \(\C\) preserves embeddings, Lemma~\ref{l:lift-fs} applies and we can take the factorisation of this morphism in \(\EM\C\) as
    \[
        \begin{tikzcd}
            (Q,\rho) \rar[->>]{q_i} & (Q_i, \rho_i) \rar[>->]{e_i} & (A_i, \alpha_i)
        \end{tikzcd}
    \]
    where \(q_i\) is a quotient and $e_i$ is an embedding. Since $(Q,\rho)$ is a path and $q_i$ is a quotient, we see that~$(Q_i, \rho_i)$ is a path by our assumptions. Consequently, we have a decomposition of $f$ into
    \begin{equation}
        \label{eq:f-decomp-prod-1}
        \begin{tikzcd}[column sep=3.0em]
            Q \rar{q} & \prod_i Q_i \rar{\prod_i e_i} & \prod_i A_i
        \end{tikzcd}
    \end{equation}
    where $q$ is obtained from the universal property of products, from the collection of morphisms~$\{q_i\}_i$. %

    To show the minimality of this decomposition, let $f$ also decompose as
    \begin{equation}
        \label{eq:f-decomp-prod-2}
        \begin{tikzcd}[column sep=3.0em]
            Q \rar{r} & \prod_i S_i \rar{\prod_i m_i} & \prod_i A_i
        \end{tikzcd}
    \end{equation}
    where $m_i$ are path embeddings $(S_i, \omega_i) \emb (A_i, \alpha_i)$ in $\EM \C$. By post-composing both homomorphisms in \eqref{eq:f-decomp-prod-1} and \eqref{eq:f-decomp-prod-2} with the projection $\pi_i\colon \prod_i A_i \to A_i$ we obtain a commutative square:
    \[
        \begin{tikzcd}
            Q \ar{r}{q_i} \ar[swap]{d}{\pi_i \circ r} & Q_i \ar{d}{e_i} \\
            S_i \ar{r}{m_i} & A_i
        \end{tikzcd}
    \]
    Now, observe that $q_i$ is a quotient and $m_i$ is an embedding and so there is a diagonal morphism $d_i\colon Q_i \to S_i$ such that $e_i = m_i \circ d_i$. It remains to check that $d_i$ is a morphism of coalgebras $(Q_i, \rho_i) \to (S_i, \omega_i)$. This is a consequence of the fact that $\C(m_i)$ is an embedding and that the right square and the surrounding rectangle of the following diagram commute.
    \[
        \begin{tikzcd}
            Q_i \ar{r}{d_i} \ar[swap]{d}{\rho_i} & S_i \ar{r}{m_i} \ar{d}{\omega_i} & A_i \ar{d}{\alpha_i} \\
            \C(Q_i) \ar{r}{\C(d_i)} & \C(S_i) \ar{r}{\C(m_i)} & \C(A_i)
        \end{tikzcd}
    \]
    Indeed, let $x\in Q_i$. By chasing the above diagram we obtain
    \begin{align*}
        \C(m_i)\circ \omega_i \circ d_i
        = \alpha_i \circ m_i \circ d_i
        = \C(m_i) \circ \C(d_i) \circ \rho_i
    \end{align*}
    and since $\C(m_i)$ is a monomorphism $\omega_i \circ d_i = \C(d_i)\circ \rho_i$, as required.
\end{prf}

To summarise, we have shown that any category \(\CC\) with products admits a Kleisli law for any comonad \(\C\) on \(\CC\) (Lemma~\ref{l:kappa-products}). If, furthermore, \(\EM\C\) has equalisers then \(\prod_i\) automatically satisfies assumptions of Theorem~\ref{t:fvm-standard} by Lemmas~\ref{l:prod-s1p} and \ref{l:prod-s2p}.

As a corollary we obtain a general FVM theorem for products, with assumptions that hold true for all the game comonads studied so far in the literature (cf.\ Example~\ref{ex:arbo-fvm-product} below).

\begin{thm}
    Let \(\C\) be a comonad on a category \(\CC\) with products, such that \(\EM\C\) is a path category. If \(\EM\C\) has equalisers and paths are closed under quotients then
    \[
        \textstyle
        \left(\forall i\in I.\; A_i \lequiv{\C} B_i \right)
        \qtq{implies}
        \prod_i A_i \,\lequiv{\C}\, \prod_i B_i.
    \]
\end{thm}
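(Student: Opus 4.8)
The plan is to derive this theorem as an immediate corollary of Theorem~\ref{t:fvm-standard}, instantiated at the product operation $H = \prod_i$ with every base category $\CC_i$ equal to $\CC$ and every comonad $\C_i$ together with $\D$ equal to $\C$. Since the substantive work has already been carried out in Lemmas~\ref{l:kappa-products}, \ref{l:prod-s1p} and~\ref{l:prod-s2p}, the argument reduces to checking that each hypothesis of Theorem~\ref{t:fvm-standard} holds in this instance.

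First I would fix the ambient setup: as throughout this section, the proper factorisation system on $\EM{\C}$ is the one lifted from a proper factorisation system $(\Qu,\Em)$ on $\CC$ via Lemma~\ref{l:lift-fs}, so that in particular $\C$ preserves embeddings. This preservation property is precisely what Lemmas~\ref{l:prod-s1p} and~\ref{l:prod-s2p} take as input. The hypothesis that $\EM{\C}$ is a path category supplies the path-category condition on the coalgebra categories appearing in Theorem~\ref{t:fvm-standard}, and the hypothesis that $\EM{\C}$ has equalisers supplies the equaliser requirement on $\EM{\D} = \EM{\C}$.

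Next I would assemble the three remaining ingredients. By Lemma~\ref{l:kappa-products}, the comparison morphisms $\kappa_{\veci A}\colon \C(\prod_i A_i) \to \prod_i \C(A_i)$, built from the universal property of products, constitute a Kleisli law for $H = \prod_i$. By Lemma~\ref{l:prod-s1p}, the product functor preserves embeddings, discharging the embedding-preservation hypothesis on $H$. Finally, by Lemma~\ref{l:prod-s2p}, using that $\C$ preserves embeddings and that paths are closed under quotients, this Kleisli law satisfies axiom~\ref{ax:s2p}.

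With every hypothesis verified, Theorem~\ref{t:fvm-standard} yields that $A_i \lequiv{\C} B_i$ for all $i \in I$ implies $\prod_i A_i \lequiv{\C} \prod_i B_i$; for an infinite index set $I$ this uses the infinitary version of Theorem~\ref{t:fvm-standard} recorded in Remark~\ref{r:inf-ops}. I expect no genuine obstacle here, as the theorem is a packaging of the preceding lemmas. The only point needing care is that $\C$ preserves embeddings must be available as an input to the two product lemmas, and this is guaranteed precisely because $\EM{\C}$ is presented as a path category via the lifted factorisation system.
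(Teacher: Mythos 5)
Your proposal is correct and follows exactly the paper's own route: the theorem is obtained by instantiating Theorem~\ref{t:fvm-standard} at $H = \prod_i$, with the Kleisli law from Lemma~\ref{l:kappa-products}, embedding preservation from Lemma~\ref{l:prod-s1p}, and axiom~\ref{ax:s2p} from Lemma~\ref{l:prod-s2p}, under the section's standing convention that the factorisation system on $\EM{\C}$ is lifted from the base category. Your closing remark about $\C$ preserving embeddings being implicit in this lifted presentation matches the paper's setup as well.
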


As a consequence of Corollary~\ref{c:equalisers} we obtain the following specialised version of the above theorem.

\begin{cor}
    \label{cor:abstract-fvm-products-full}
    If a comonad $\C$ on \(\Rel\) or \(\Rels\) preserves embeddings and if \(\EM\C\) is a path category with paths closed under quotients, then
    \[
        \textstyle
        \left(\forall i\in I.\; A_i \lequiv{\C} B_i \right)
        \qtq{implies}
        \prod_i A_i \,\lequiv{\C}\, \prod_i B_i.
    \]
\end{cor}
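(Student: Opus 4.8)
The plan is to obtain this corollary as a direct specialisation of the general product FVM theorem stated immediately above, discharging its one nontrivial side-condition using Corollary~\ref{c:equalisers}. That theorem requires four things: that the base category \(\CC\) have products, that \(\EM\C\) be a path category, that paths be closed under quotients in \(\EM\C\), and that \(\EM\C\) have equalisers. The corollary hypothesises the second and third of these directly, so the only work is to verify the products condition on the base and the equaliser condition on the coalgebras.

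For the products hypothesis, I would recall the discussion opening Section~\ref{sec:product-theorems}: both \(\Rel\) and \(\Rels\) have categorical products, given by the product of underlying universes with the conjunctive interpretation of each relation (and the tuple of distinguished points in the pointed case). Hence \(\CC\) has products whenever \(\CC\) is one of the two base categories named in the corollary, so the restriction to \(\CC \in \{\Rel, \Rels\}\) silently supplies this premise.

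For the equaliser hypothesis, the key step is to invoke Corollary~\ref{c:equalisers}: since \(\C\) is a comonad on \(\Rel\) or \(\Rels\) that preserves embeddings, that corollary yields directly that \(\EM\C\) has equalisers. With all four hypotheses of the general theorem now in place — products in the base, \(\EM\C\) a path category, paths closed under quotients, and equalisers in \(\EM\C\) — the conclusion that \(\left(\forall i \in I.\, A_i \lequiv\C B_i\right)\) implies \(\prod_i A_i \lequiv\C \prod_i B_i\) follows immediately by applying the general theorem.

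I expect no real obstacle here: the corollary is a bookkeeping specialisation, and all the mathematical content lives in the general product theorem and the lemmas feeding it (Lemma~\ref{l:kappa-products} supplying the Kleisli law from the product's universal property, Lemma~\ref{l:prod-s1p} showing products preserve embeddings, and Lemma~\ref{l:prod-s2p} verifying axiom~\ref{ax:s2p}), together with the Linton-style Corollary~\ref{c:equalisers}. The only point meriting a moment's care is making explicit that the ``has products'' premise of the general theorem is met by \(\Rel\) and \(\Rels\), since the corollary assumes it only implicitly by fixing those base categories.
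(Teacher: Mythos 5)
Your proposal is correct and matches the paper's own (one-line) proof exactly: the paper derives this corollary from the general product theorem stated immediately above it by invoking Corollary~\ref{c:equalisers} to supply the equaliser hypothesis, just as you do. Your extra remark that \(\Rel\) and \(\Rels\) have products is a fine explicit touch, but it is the same bookkeeping specialisation the paper intends.
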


\begin{exa}
    \label{ex:ek-pk-mk-products}
In the case of $\Ek$,
it is immediate that $\Ek$ preserves embeddings. Furthermore, for a surjective coalgebra morphism $(\Ps,\Pc) \to (\As,\Ac)$ in $\EM \Ek$ such that $(\Ps,\Pc)$ is a path, we also have that $(\As,\Ac)$ is a path since, in view of Remark~\ref{r:coalg-order}, we map a finite linear order onto a forest.

The same is true for our example comonads $\Pk$, $\Mk$, yielding product FVM theorems for the bounded quantifier rank, bounded variable and bounded modal depth fragments and their positive existential and counting variants.
\end{exa}

\begin{exa}
    \label{ex:arbo-fvm-product}
More generally, the assumptions of Theorem~\ref{cor:abstract-fvm-products-full} hold for any comonad $\C$ which preserves embeddings and such that the category $\EM \C$ is an \df{arboreal category}, in sense of \cite{AbramskyR21}, by Lemma 3.5 therein.

In particular, all FVM theorems for products from this section hold for \df{hybrid logic} because of the hybrid comonad of~\cite{AbramskyMarsden2022}, \df{guarded logics} captured by guarded comonads in~\cite{AbramskyM21}, the \df{bounded conjunction} and \df{bounded quantifier rank} fragments of the $k$-variable first-order logic captured in \cite{montacute2021pebble} and \cite{Paine20}, respectively.
\end{exa}

\begin{rem}
Since monadic second-order logic (MSO) does not have a FVM theorem for products (see e.g.~\cite{makowsky2004algorithmic}), we can conclude that there does not exist a comonad for MSO over \(\Rel\) which satisfies the conditions of Corollary~\ref{cor:abstract-fvm-products-full}. 
This limitation can be overcome by encoding the standard semantics of MSO  using the technique of enrichment, described in Section~\ref{s:enrichments}, applied to a modified $\Ek$ comonad on two-sorted structures. 
The full details of this construction appear in the follow up paper~\cite{jaklmarsdenshah2022fvm}.
\end{rem}

\section{Adding equality and other enrichments}
\label{s:enrichments}

Recall that the relations $\sequiv{\Ek}$ and $\sequiv{\Pk}$ express logical equivalence with respect to the logic \emph{without} equality.
The standard way to add equality to the language is to extend the signature $\sg$ with an additional binary relation symbol $I(\cdot,\cdot)$, and provide a \df{translation}
$\trI \colon \Rel \to \R(\sg\cup \{I\})$
which interprets this new relational symbol as equality, i.e. $\trI(A)$ is the $\sg\cup \{I\}$-structure extension of $A$ by setting $I(a,b)$ if and only if $a = b$. Since our game comonads are defined uniformly for any signature, they are also defined over $\R(\sg\cup \{I\})$, giving us:

\begin{itemize}
    \item $\trI(A) \sequiv \Ek \trI(B)$ iff $A$ and $B$ are logically equivalent w.r.t.\ the fragment of first-order logic consisting of formulas of quantifier depth ${\leq}\, k$ with equality, and
    \item $\trI(A) \sequiv \Pk \trI(B)$ iff $A$ and $B$ are logically equivalent w.r.t.\ the fragment of first-order logic consisting of formulas in $k$-variables with equality.
\end{itemize}

Moreover, an operation which commutes with the translation $\trI$ lifts FVM theorems for fragments without equality to fragments with equality. For example, disjoint unions satisfy that $\trI(A \uplus B) = \trI(A) \uplus \trI(B)$ and, therefore, if $\trI(A_1) \sequiv \Ek \trI(B_1)$, and $\trI(A_2) \sequiv \Ek \trI(B_2)$, then
\begin{align*}
    \trI(A_1 \uplus A_2)  &\enspace=\enspace\mkern8mu \trI(A_1) \uplus \trI(A_2) \\
    &\enspace\sequiv{\Ek}
    \trI(B_1) \uplus \trI(B_2) = \trI(B_1 \uplus B_2)
\end{align*}
which shows that $A_1 \uplus A_2$ and $B_1 \uplus B_2$ are logically equivalent w.r.t.\ formulas of quantifier depth ${\leq}\, k$ with equality.

In general, similar translations can be used for other logic extensions.
For example, in \cite{BednarczykUrbanczyk2022comonadicdescriptionlogics} a translation is used to obtain description logics using $\Mk$.
Nevertheless, the above technique applies verbatim to varying translations, operations and logics.
We arrive at the following simple but important fact, that greatly increases the applicability of the results in the previous sections.
\begin{restatable}{thm}{CommutativityWithTranslations}
\label{thm:translations}
    For $1 \leq i \leq n+1$, let $\Lo_i$ and $\Lo'_i$ be logics and $\tr_i$ a translation such that ${A \lequiv{\Lo_i} B}$ \,iff\, ${\tr_i(A) \lequiv{\Lo'_i} \tr_i(B)}$. Further, let $\op'$ be an $n$-ary operation satisfying the \emph{FVM theorem from $\Lo'_1,\dots,\Lo'_n$ to $\Lo'_{n+1}$}, with respect to translated structures. That is:
    \[ \tr_1(A_1) \lequiv{\Lo'_1} \tr_1(B_1),\;\ldots,\;\tr_n(A_n) \lequiv{\Lo'_n} \tr_n(B_n)  \]
    implies
    \[ \op'(\vec{\tr_i(A_i)}) \lequiv{\Lo'_{n+1}} \op'(\vec{\tr_i(B_i)}) \]
    If $\op$ commutes with the translations in the sense that
    $\op'(\vec{\tr_i(A_i)}) \cong \tr_{n+1}(\op(\veci A))$,
    then $\op$ satisfies the FVM theorem from $\Lo_1,\dots,\Lo_n$ to $\Lo_{n+1}$. That is:
    \[
        A_1 \lequiv{\Lo_1} B_1, \;\ldots,\; A_n \lequiv{\Lo_n} B_n
        \qtq{implies}
        \op(\veci A) \lequiv{\Lo_{n+1}} \op(\veci B)
    \]
\end{restatable}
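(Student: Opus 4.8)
The plan is to chain the three hypotheses together by a short transport along the translations. Assume $A_1 \lequiv{\Lo_1} B_1,\;\ldots,\;A_n \lequiv{\Lo_n} B_n$. First I would apply, in each coordinate $i \leq n$, the translation characterisation $A_i \lequiv{\Lo_i} B_i$ iff $\tr_i(A_i) \lequiv{\Lo'_i} \tr_i(B_i)$, so that the assumption immediately yields
\[
    \tr_1(A_1) \lequiv{\Lo'_1} \tr_1(B_1),\;\ldots,\;\tr_n(A_n) \lequiv{\Lo'_n} \tr_n(B_n).
\]

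Feeding these equivalences into the assumed FVM theorem for $\op'$ gives
\[
    \op'(\vec{\tr_i(A_i)}) \lequiv{\Lo'_{n+1}} \op'(\vec{\tr_i(B_i)}).
\]
Next I would transport this equivalence along the commutation isomorphisms $\op'(\vec{\tr_i(A_i)}) \cong \tr_{n+1}(\op(\veci A))$ and $\op'(\vec{\tr_i(B_i)}) \cong \tr_{n+1}(\op(\veci B))$, obtaining $\tr_{n+1}(\op(\veci A)) \lequiv{\Lo'_{n+1}} \tr_{n+1}(\op(\veci B))$. Finally, applying the translation characterisation once more at coordinate $n+1$, this time reading the equivalence from $\Lo'_{n+1}$ back to $\Lo_{n+1}$, delivers the desired conclusion $\op(\veci A) \lequiv{\Lo_{n+1}} \op(\veci B)$.

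The argument is pure diagram chasing, so I anticipate no genuine obstacle. The only step requiring care is the transport across the commutation isomorphisms: it relies on the implicit assumption that $\lequiv{\Lo'_{n+1}}$ does not distinguish isomorphic structures. This invariance under isomorphism holds for every logic relevant to the paper, so I would simply record that we work with such logics, which legitimises the transport step and completes the proof.
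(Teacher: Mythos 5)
Your proof is correct and follows essentially the same route as the paper's: translate the hypotheses coordinatewise, apply the FVM theorem for $\op'$, transport across the commutation isomorphisms, and translate back at coordinate $n+1$. Your explicit remark that $\lequiv{\Lo'_{n+1}}$ must be isomorphism-invariant makes precise a step the paper performs silently in its chain of $\cong$ and $\equiv_{\Lo'_{n+1}}$, and is a reasonable addition rather than a deviation.
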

\begin{prf}
    Assume that $A_i \equiv_{\Lo_i} B_i$, for every $i=1,\dots,n$.
    By our assumptions, we obtain that $\tr_i(A_i) \equiv_{\Lo'_i} \tr_i(B_i)$ for every $i=1,\dots,n$ and, therefore,
    \[
        \op'(\tr_1(A_1),\dots,\tr_n(A_n)) \lequiv{\Lo'_{n+1}} \op'(\tr_1(B_1), \dots, \tr_n(B_n) )
    \]
    Then, by a straightforward calculation
    \begin{align*}
        \tr_{n+1}(\op(\veci A))
        \cong \op'(\vec{\tr_i(A_i)})
        \equiv_{\Lo'_{n+1}} \op'(\vec{\tr_i(B_i)})
        \cong \tr_{n+1}(\op(\veci B))
    \end{align*}
    which, by our assumptions, implies that
    \[ \op(\veci A) \equiv_{\Lo_{n+1}} \op(\veci B) .\qedhere\]
\end{prf}

\begin{exa}
Let $\trCon\colon \Rel \to \R(\sgCon)$ be a translation of $\sg$-structures into the extension $\sgCon$ of $\sg\cup\{I\}$ with an extra binary relation $\Con(\cdot,\cdot)$. The interpretation of $\Con$ in $\trCon(A)$ consists of all pairs~$(a,b)$ appearing in the same component of the Gaifman graph of $A$, that is, pairs~$(a,b)$ such that there is a path $a \leftrightsquigarrow x_1 \leftrightsquigarrow \dots \leftrightsquigarrow x_n \leftrightsquigarrow b$ where $x \leftrightsquigarrow y$ holds whenever $x$ and $y$ are in a common tuple in an $R^A$ of some~$R\in \sg$.

It is easy to see that $\trCon(A) \sequiv \Ek \trCon(B)$ expresses logical equivalence in the bounded quantifier depth fragment extended with the connectivity predicate, akin to~\cite{bojanczyk2021separatorlogic,schirsiebvigny2022connectivity}. Since $\trCon(A \uplus B) \cong \trCon(A) \uplus \trCon(B)$, we obtain, by Theorem~\ref{thm:translations} and Example~\ref{e:coproducts-full}, an FVM theorem for disjoint unions and the quantifier rank $k$ fragment of said logic.
While still being polynomial-time computable, the connectivity relation is not definable in bounded quantifier depth fragments of first-order logic,
and so this is a proper extension.
\end{exa}

\begin{exa}
    Continuing from Example~\ref{ex:ek-pk-mk-products}, consider $\tr\colon \Rels \to \Rs(\sg^G)$ where $\sg^G$ is the extension of the modal signature $\sg$ with the \df{global relation} $G(a,b)$, true in $\tr(A,a)$ for any pair of elements. Then, it is immediate that $\tr((A,a) \times (B,b)) \cong \tr(A,a) \times \tr(B,b)$, giving us by Theorem~\ref{thm:translations} an FVM theorem for products and modal logic with global modalities.
\end{exa}

\begin{exa}

As an example of Theorem~\ref{thm:translations} where the two operations differ, we consider the notion of \textit{weak (bi)simulation} where transitions along a distinguished relation $S \in \sg$ are considered to occur silently.
We can model weak bisimulation by considering bisimulation between translated structures where $\tr\colon \Rels \to \Rels$ replaces all relations $R$ with their closure under sequences of prior and subsequent $S$ transitions.

For pointed structures $\As_1 = (A_1,a_1)$ and $\As_2 = (A_2,a_2)$, there is an operation~$\As_1 \vee \As_2$ which adds a new initial point~$\star$ to $A_1 \uplus A_2$.  We extend the transitions with $R(\star, (i,x))$, for $R\in \sg$, whenever $R^{A_i}(a_i,x)$ for some $i\in \{1,2\}$ and $x \in A_i$.
This construction satisfies
$ \tr(\As_1 \merge{S} \As_2) \cong \tr(\As_1) \vee \tr(\As_2)$
by design, and there is a Kleisli law
\[\kappa_{\As_1,\As_2}\colon \Mk(\As_1 \vee \As_2)\rightarrow \Mk(\As_1) \vee \Mk(\As_2),\]
via a similar construction to that used in Example~\ref{ex:coproducts-with-choice-pe}. This  yields an FVM theorem:
\newcommand{\wksim}{\Rrightarrow_{W_k}}
\begin{align*}
&\As_1 \wksim \Bs_1 \ete{and} \As_2 \wksim \Bs_2 \\
&\tq{implies}
\As_1 \merge{S} \As_2 \wksim \Bs_1 \merge{S} \Bs_2
\end{align*}
where $\As \wksim \Bs$ indicates that $\As$ weakly simulates $\Bs$ up to depth~$k$.
\end{exa}

\section{Conclusion}

We presented a categorical approach to the composition methods, and specifically Feferman--Vaught--Mostowski theorems.
We exploit game comonads to encapsulate the logics and their model comparison games.
Surprising connections to classical constructions in category theory, and especially the monad theory of bilinear maps emerged from this approach, cf.\ Section~\ref{s:lifting-operations}. %

For finite model theorists, our work provides a novel high-level account of many FVM theorems, abstracting away from individual logics and constructions.
Furthermore, concrete instances of these theorems are verified purely semantically, by finding a suitable collection homomorphisms forming a Kleisli law satisfying \ref{ax:s2p}, instead of the usual delicate verification that strategies of model comparison games compose.
For game comonads, we provide a much needed tool that enables us to handle logical relationships between structures as they are transformed or viewed in terms of different logics.

The FVM results in this paper, combined with judicious use of the techniques described in Section~\ref{s:enrichments} can be pushed significantly further.
FVM theorems and the compositional method are of particular significance in the setting of monadic second-order (MSO) logic~\cite{gurevich1985monadic}. Exploiting the results we have presented for the composition methods, the follow up paper \cite{jaklmarsdenshah2022fvm} gives a comonadic semantics for MSO, and develops a semantic account of Courcelle's algorithmic meta-theorems~\cite{courcelle2012graph}.

The original theorem of Feferman--Vaught~\cite{feferman1959first} is very flexible, including incorporating structure on the indices of families of models to be combined by an operation, as well as the models themselves. This aspect is currently outside the scope of our comonadic methods.
An ad-hoc adaptation of our approach also gives a proof of the FVM theorem for free amalgamations, but this does not follow directly from our theorems. Developing these extensions is left to future work.

The present work bears some resemblance to Turi and Plotkin's bialgebraic semantics~\cite{turi1997towards}, a categorical model of structural operational semantics (SOS)~\cite{plotkin1981}.
Turi and Plotkin noticed that the SOS rules assigning behaviour to syntax could be abstracted as a certain distributive law $\lambda$.
An algebra encodes the composition operations of the syntax, and a coalgebra the behaviour.
If this pair is suitably compatible with~$\lambda$, forming a so-called $\lambda$-bialgebra, then crucially bisimulation is a congruence with respect to the composition operations.
Both bialgebraic semantics and our work presented in this paper give
a categorical account of well-behaved composition operations, with the interaction between composition and observable behaviour mediated by some form of distributive law. There are also essential differences: bialgebraic semantics typically focuses on assigning behaviour to syntax,
whereas FVM theorems encompass operations on models.
Our FVM theorems are parametric in a choice of logic or observable behaviour, while the notion of bisimulation in bialgebraic semantics is fixed by the coalgebra signature functor.
The natural presence of positive existential and counting quantifier variants of FVM theorems, and the incorporation of resource parameters, do not seem to have a direct analogue in bialgebraic semantics. The similarities are intriguing, and exploring the relationship between bialgebraic semantics and our approach to FVM theorems is left to future work.

\subsubsection*{Acknowledgements}
We would like to thank to the anonymous reviewers for their helpful and detailed comments, to Clemens Berger for suggesting the connection with parametric adjoints and to Nathanael Arkor for his suggestions on terminology.
We are also grateful to the members of the EPSRC project ``Resources and co-resources'' for their feedback.

\appendix

\counterwithin{thm}{section}
\counterwithin{prop}{section}
\counterwithin{lem}{section}
\counterwithin{defi}{section}

\section{Parametric relative right adjoints}
\label{s:parametric-relative-adjoints}

In this section we show how the axioms \ref{ax:s1} and~\ref{ax:s2} from Section~\ref{s:ope-preservation} relate to mathematical concepts that have previously appeared in the literature, specifically parametric relative adjoints.
Parametric adjoints, also sometimes called local adjoints, originate in the work of Street~\cite{street2000petit}.
We use their equivalent definition due to Weber~\cite{weber2004generic, weber2007familial,berger2012monads}.
Relative adjoints are a common tool in the semantics of programming languages, see for example \cite{altenkirch2015monads}, and are connected to relative comonads which can be used to encapsulate the translations discussed in Section~\ref{s:enrichments}~\cite{abramsky2021relating}. We will be interested in a combination of these two notions.

We first review the required terminology. By $\CT \slice A$ we denote the
\df{slice category}
consisting of pairs $(X,f)$, where $f\colon X\to A$ is a morphism in $\CT$. A morphism $(X,f) \to (Y,g)$ between objects in $\CT \slice A$ is a morphism $h\colon X\to Y$ such that $f = g \circ h$. A functor ${F\colon \CT \to \CS}$ is said to be a \df{parametric right adjoint} if, for each object $A$ of $\CC$, the functor
\[
    F_A : \CT \slice A \to \CS \slice F(A)
\]
sending $f\colon X\to A$ to $F(f)\colon F(X) \to F(A)$, is a right adjoint.
Finally, a functor $R\colon \CT \to \CU$ is a \df{relative right adjoint} between a category $\CT$ and a functor $I\colon \CS \to \CU$ if there is a functor $L\colon \CS \to \CT$ and a natural bijection between sets of morphisms
\begin{prooftree}
    \AxiomC{$L(S) \to T$ in $\CT$}
    \doubleLine
    \UnaryInfC{$I(S) \to R(T)$ in $\CU$}
\end{prooftree}

These definitions motivate our definition of parametric relative adjoints for categories of coalgebras with a selected class of path objects and a class of embeddings.
For simplicity we consider unary $\lop : \EM{\C} \to \EM{\D}$.
We write $\Pa_{\C}$ and $\Pa_{\D}$ for the class of paths in $\EM{\C}$ and $\EM{\D}$ respectively.

For $D\in \EM{\D}$, let $I_D$ be the inclusion functor
\[I_D\colon \Pa_{\D} \emcm D \to \EM{\D} \emcm D\]
where $\EM{\D} \emcm D$ and ${\Pa_{\D} \emcm D}$ denote the full subcategories of the comma category $\EM{\D} \slice D$ consisting of pairs~$(X,e)$ where $e$ is an embedding or a path embedding, respectively.

$\lop$ is a \df{parametric relative right adjoint} if, for every $C \in \EM{\C}$, the functor
    \[
        \lop_{C}\colon \Pa_{\C} \emcm C \ee\longrightarrow \EM{\D} \emcm \lop(C),
    \]
which sends a path embedding $e$, with $e\colon X \embed C$, to the embedding $\lop(e)$ of type  $\lop(X) \embed \lop(C)$, is a relative right adjoint between $\Pa_{\C} \emcm C$ and $I_{\lop(C)}$, which we will denote simply by~\(I\). Following the above terminology, the last condition assumes the existence of a functor $L\colon \Pa_{\D} \emcm \lop(C) \to \Pa_{\C} \emcm C$
such that there is a natural bijection between morphisms:
\begin{prooftree}
    \AxiomC{$L(e) \to f$ in $\Pa_{\C} \emcm C$}
    \doubleLine
    \UnaryInfC{$I(e) \to \lop_{C}(f)$ in $\EM{\D} \emcm \lop(C)$}
\end{prooftree}

By unwrapping the definitions, we see that the natural bijection sends an embedding \(m\colon P' \to Q\) which makes the diagram on the left below commute to an embedding \mbox{\(m^\dagger\colon P \to \lop(Q)\)} which makes the diagram on the right below commute.
\[
    \begin{tikzcd}[column sep=1.3em]
        P'
            \ar[>->]{rr}{m}
            \ar[swap,>->]{rd}{L(e)}
        && Q
            \ar[>->]{ld}{f}
        \\
        & C
    \end{tikzcd}
    \qquad
    \qquad
    \begin{tikzcd}[column sep=1.3em]
        P
            \ar[>->]{rr}{m^\dagger}
            \ar[swap,>->]{rd}{e}
        && \lop(Q)
            \ar[>->]{ld}{\lop(f)}
        \\
        & \lop(C)
    \end{tikzcd}
\]
Furthermore, the assignment \(m \mapsto m^\dagger\) has an inverse \(t \mapsto t_\dagger\) and the naturality can be rephrased by the following two rules:
\begin{itemize}
    \item For any \(g\colon Q \emb Q'\) and \(f' \colon Q' \emb C\) such that \(f = f' \circ g\):\quad \((g \circ m)^\dagger = \lop(g) \circ m^\dagger\).
    \item For any \(h\colon S \emb P\) and \(e' \colon S \to \lop(C)\) such that \(e' = e \circ h\):\quad \((t \circ h)_\dagger = t_\dagger \circ L(h)\).
\end{itemize}

In Proposition~\ref{p:prr-adj} below we show that under the assumption that \(\lop\) satisfies \ref{ax:s1}, the axiom \ref{ax:s2} is equivalent to \(\lop\) being a parametric relative right adjoint.
For the proof of the equivalence with parametric relative right adjoints we adapt a standard characterisation of left adjoints in terms of mapping on objects satisfying the expected universal property to the setting of relative adjoints. This is what we do next.

Recall that a functor $R\colon \CT \to \CS$ has a left adjoint if and only if, for every object $b\in \CS$, there is an object $L(b)$ in $\CT$ and a morphism $\eta_b\colon b \to R(L(b))$ such that, for any other $f\colon b \to R(a)$ there is a unique $f^*\colon L(b) \to a$ such that
\[
    \begin{tikzcd}
        b \rar{\eta_b}\ar[swap]{dr}{f} & R(L(b)) \dar{R(f^*)} \\
        & R(a)
    \end{tikzcd}
\]

A similar statement holds for relative adjoints too, with a proof that follows essentially the same steps.

\begin{lem}
    \label{l:rel-adj-univ}
    Let $R\colon \CT \to \CU$ and $I\colon \CS \to \CU$ be functors and further assume that there is
    \begin{itemize}
        \item a mapping on objects $L\colon \obj(\CS) \to \obj(\CT)$ and
        \item a morphism $\eta_a\colon I(a) \to RL(a)$, for every $a\in \CS$, such that for any $f\colon I(a) \to R(b)$ there is a unique ${\ol f\colon L(a) \to b}$ such that the following diagram commutes.
        \[
        \begin{tikzcd}
            I(a) \ar[swap]{rd}{f}\rar{\eta_a} & RL(a) \dar{R(\,\ol f\,)} & & L(a) \dar[dashed]{\ol f}\\
            & R(b) & & b
        \end{tikzcd}
        \]
    \end{itemize}
    Then, the mapping $L$ extends to a functor $L\colon \CS \to \CT$ and there is a natural bijection
        \begin{equation*}
            \label{eq:nat-bij}
            \rho : \hom(L(A),B) \xrightarrow{\ee\cong} \hom(I(A), R(B)).
        \end{equation*}
    witnessing that $R$ is a relative right adjoint.
\end{lem}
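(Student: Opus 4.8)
The statement is the relative-adjoint analogue of the familiar characterisation of an adjunction by its universal arrows, so the plan is to follow that classical argument essentially verbatim, replacing the identity functor on the domain by $I$ throughout. The content is entirely bookkeeping; the one point requiring care is that every functoriality and naturality claim should be discharged by invoking the \emph{uniqueness} half of the assumed universal property, rather than by re-solving any factorisation.

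First I would define the candidate bijection. For $g\colon L(A)\to B$ in $\CT$, set $\rho(g) = R(g)\circ \eta_A\colon I(A)\to R(B)$. The hypothesis says precisely that every $f\colon I(A)\to R(B)$ factors uniquely as $f = R(\ol f)\circ \eta_A$, which is exactly the assertion that $\rho$ is a bijection with inverse $f\mapsto \ol f$. So bijectivity requires no work beyond restating the assumption.

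Next I would upgrade $L$ from an object assignment to a functor. Given $u\colon A\to A'$ in $\CS$, the morphism $\eta_{A'}\circ I(u)\colon I(A)\to RL(A')$ has, by the universal property at $A$, a unique factorisation through $\eta_A$; I define $L(u)$ to be that factorisation, i.e.\ the unique map satisfying $R(L(u))\circ \eta_A = \eta_{A'}\circ I(u)$. (Equivalently, this equation says that $\eta$ is natural in $A$.) Functoriality of $L$ then follows from uniqueness: the identity $\id_{L(A)}$ satisfies the defining equation for $L(\id_A)$, forcing $L(\id_A) = \id_{L(A)}$; and for $u\colon A\to A'$, $v\colon A'\to A''$ the composite $L(v)\circ L(u)$ satisfies $R(L(v)\circ L(u))\circ \eta_A = R(L(v))\circ \eta_{A'}\circ I(u) = \eta_{A''}\circ I(v)\circ I(u) = \eta_{A''}\circ I(v\circ u)$, which is the defining equation for $L(v\circ u)$, forcing equality.

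Finally I would verify naturality of $\rho$ in each variable by direct computation. Naturality in $B$ is immediate from functoriality of $R$: for $k\colon B\to B'$ one has $\rho(k\circ g) = R(k\circ g)\circ \eta_A = R(k)\circ \rho(g)$. Naturality in $A$ uses the defining equation of $L$ on morphisms: for $u\colon A\to A'$ and $g\colon L(A')\to B$, $\rho(g\circ L(u)) = R(g)\circ R(L(u))\circ \eta_A = R(g)\circ \eta_{A'}\circ I(u) = \rho(g)\circ I(u)$. This exhibits $R$ as a relative right adjoint to $I$ with left adjoint $L$, completing the argument. I do not expect any genuine obstacle; the only thing to watch is bracketing each uniqueness invocation to the correct universal property.
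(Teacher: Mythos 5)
Your proof is correct and is precisely the argument the paper intends: the paper states Lemma~\ref{l:rel-adj-univ} without an explicit proof, remarking only that it "follows essentially the same steps" as the classical universal-arrow characterisation of adjoints, which is exactly what you have carried out. All three steps (bijectivity of $\rho$ as a restatement of the hypothesis, functoriality of $L$ via uniqueness of factorisations, and naturality in both variables by direct computation) are complete and correctly discharged.
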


With this we can now prove the promised statement.

\begin{restatable}{prop}{ParametricRelativeAdjoint}
    \label{p:prr-adj}
    Assuming \ref{ax:s1}, $\lop$ satisfies \ref{ax:s2} if and only if $\lop$ is a parametric relative right adjoint.
\end{restatable}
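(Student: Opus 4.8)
The plan is to route the whole argument through Lemma~\ref{l:rel-adj-univ}, which lets me replace the functor $L$ and the natural bijection in the definition of a relative right adjoint by the more concrete data of an object assignment together with universal morphisms. Throughout I work in the unary setting $\lop\colon \EM{\C} \to \EM{\D}$ and keep \ref{ax:s1} as a standing hypothesis, so that $\lop$ sends embeddings to embeddings; this is exactly what makes the comparison functor $\lop_C\colon \Pa_{\C} \emcm C \to \EM{\D} \emcm \lop(C)$ well defined in the first place, and it also ensures (via Lemma~\ref{l:fs-basics}.3) that every morphism of $\EM{\D} \emcm \lop(C)$ is automatically an embedding. The heart of the matter will be the observation that a decomposition of a path embedding $e$ through some $\lop(f)$ is literally the same thing as a morphism $I(e)\to \lop_C(f)$ in the larger comma category.

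For the direction \ref{ax:s2} $\Rightarrow$ parametric relative right adjoint I would fix $C\in\EM{\C}$ and a path embedding $e\colon P\embed\lop(C)$, viewed as an object of $\Pa_{\D}\emcm\lop(C)$. Using \ref{ax:s2} I choose a minimal decomposition $e=\lop(e_1)\circ e_0$ with $e_1\colon P_1\embed C$ a path embedding and $e_0\colon P\to\lop(P_1)$ (itself an embedding by Lemma~\ref{l:fs-basics}.3), and I set $L(e)\coloneqq (P_1,e_1)$ and $\eta_e\coloneqq e_0$. To check the universal property of Lemma~\ref{l:rel-adj-univ}, I take any $(Q,f)\in\Pa_{\C}\emcm C$ and any morphism $t\colon I(e)\to\lop_C(Q,f)$, that is an embedding $t\colon P\to\lop(Q)$ with $\lop(f)\circ t=e$. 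This is a second decomposition of $e$, so minimality yields $h\colon P_1\to Q$ with $f\circ h=e_1$, i.e.\ a morphism $\bar t=h\colon L(e)\to(Q,f)$; the compatibility $\lop(h)\circ e_0=t$ then follows from $\lop(f)\circ\lop(h)\circ e_0=\lop(e_1)\circ e_0=e=\lop(f)\circ t$ together with $\lop(f)$ being monic by \ref{ax:s1}, and uniqueness of $h$ is immediate since $f$ is monic. Lemma~\ref{l:rel-adj-univ} then promotes $L$ to a relative right adjoint for each $C$, which is the definition of $\lop$ being a parametric relative right adjoint.

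For the converse I would assume $\lop$ is a parametric relative right adjoint and again start from a path embedding $e\colon P\embed\lop(C)$, an object of $\Pa_{\D}\emcm\lop(C)$. The adjoint at $C$ supplies an object $L(e)=(P_1,e_1)\in\Pa_{\C}\emcm C$ with $e_1\colon P_1\embed C$ a path embedding, and a unit $\eta_e$, that is an embedding $e_0\colon P\to\lop(P_1)$ with $\lop(e_1)\circ e_0=e$; this is the decomposition demanded by \ref{ax:s2}. For minimality, given any further decomposition $e=\lop(g_1)\circ g_0$ with $g_1\colon Q_1\embed C$ a path embedding, I read $g_0$ as a morphism $I(e)\to\lop_C(Q_1,g_1)$, whence the universal property of $\eta_e$ returns a unique $h\colon P_1\to Q_1$ with $g_1\circ h=e_1$; this is exactly the minimality datum, its uniqueness again following from $g_1$ being monic.

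The step I expect to be the main obstacle is the careful bookkeeping that keeps the two comma categories $\Pa_{\D}\emcm\lop(C)$ and $\EM{\D}\emcm\lop(C)$, and the inclusion $I$ between them, properly distinguished, so that the minimality clause of \ref{ax:s2} lines up precisely with the uniqueness half of the universal property in Lemma~\ref{l:rel-adj-univ}. The only verification that goes beyond bare minimality is the compatibility $\lop(h)\circ e_0=t$ in the forward direction, and this is exactly the place where \ref{ax:s1} is genuinely used, through the monomorphism property of $\lop(f)$.
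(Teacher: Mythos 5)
Your proposal is correct and takes essentially the same route as the paper's own proof: both directions are routed through Lemma~\ref{l:rel-adj-univ}, with $L_C(e)$ and $\eta_e$ extracted from the minimal decomposition (respectively, from the unit of the relative adjunction), the identification of decompositions of $e$ through $\lop(f)$ with morphisms $I(e) \to \lop_C(f)$ doing the main work. Your handling of the compatibility $\lop(h)\circ e_0 = t$ via the monomorphism property of $\lop(f)$ (from \ref{ax:s1}), and of uniqueness via $f$ being monic, matches the paper's argument exactly.
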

\begin{prf}
    $(\Rightarrow)$ Assuming \ref{ax:s2} and fixing $C\in \EM \C$, we define the relative left adjoint $L_C\colon \Pa_\D \emcm \lop(C) \to \Pa_\C \emcm C$ of~$\lop_C$. Following Lemma~\ref{l:rel-adj-univ}, it is enough to define the appropriate mapping on objects
    \[ L_C\colon \obj(\Pa_\D \emcm \lop(C)) \to \obj(\Pa_\C \emcm C)\]
    and a collection of morphisms $\eta_e\colon I(e) \to \lop_C L_C(e)$, for every object $e$ of $\Pa_\C \emcm \lop(C)$ i.e. an embedding $e\colon \Pc \embed \lop(C)$ in $\EM \C$.

    Given $e\colon \Pc \embed \lop(C)$, we denote by
    \[ \Pc \xrightarrow{\ee{g_e}} \lop(\Pc_e) \xrightarrow{\lop(h_e)} \lop(C) \]
    its minimal decomposition. Define $L_C(e)$ as $h_e$ and $\eta_e$ as $g_e$. The fact that $g_e$ with $\lop(h_e)$ is a decomposition of $e$ expresses the fact that $\eta_e$ is indeed a morphism $I(e) \to \lop_C(L_C(e))$ in $\EM \D \emcm \lop(C)$, that is, we have the following.
    \begin{equation}
    \begin{tikzcd}
        \Pc \ar{rr}{\eta_e}\ar[swap]{dr}{e} & & \lop(\Pc_e) \ar{dl}{\lop(L_C(e))} \\
        & \lop(C)
    \end{tikzcd}
    \label{eq:min-decomp-f}
    \end{equation}

    Next, we check the universal property of $\eta_e$. Any morphism $z\colon I(e) \to \lop_C(g)$ in~\mbox{$\Pa_\C \emcm \lop(C)$}, by definition, corresponds to having a commutative triangle in $\EM \C$
    \[
    \begin{tikzcd}
        \Pc \ar{rr}{z}\ar[swap]{dr}{e} & & \lop(\Qc) \ar{dl}{\lop(g)} \\
        & \lop(C)
    \end{tikzcd}
    \]
    for a path embedding $g\colon \Qc \embed C$. Consequently, we have a commutative square:
    \[
    \begin{tikzcd}
        \Pc \dar[swap]{z}\rar{g_e} & \lop(\Pc_e) \dar{\lop(h_e)} \\
        \lop(\Qc) \rar{\lop(g)} & \lop(C)
    \end{tikzcd}
    \]
    By minimality of the decomposition \eqref{eq:min-decomp-f}, there is a morphism $z^*\colon \Pc_e\to \Qc$ such that
    \[
    \begin{tikzcd}
        \Pc_e \ar{rr}{z^*}\ar[swap]{dr}{h_e} & & \Qc \ar{dl}{g} \\
        & C
    \end{tikzcd}
    \]
    i.e.\ $z^*$ is a morphism $\Pc_e \to \Qc$ in $\Pa_\C \emcm C$. Furthermore, by (S1), $\lop$ preserves embeddings. Hence, $\lop(g)$ is a monomorphism and the previous two diagrams give us that $z = \lop(z^*) \circ g_e$. In other words, the following diagram commutes in~$\EM \D \emcm \lop(C)$.
    \[
        \begin{tikzcd}[column sep=1.3em]
        I(e) \ar{rr}{\eta_e}\ar[swap]{dr}{z} & & \lop_C(L_C(e)) \ar{dl}{\lop_C(z^*)} \\
        & \lop_C(g)
    \end{tikzcd}
    \]
    Unicity of $z^*$ follows from the fact that $g$ is a monomorphism.

    \medskip
    $(\Leftarrow)$ Assume $L_C$ is the relative left adjoint of $\lop_C$ from $\Pa_\C \emcm C$ to the embedding functor $I\colon \Pa_\D \emcm \lop(C) \hookrightarrow \EM \D \emcm \lop(C)$, for a fixed $C\in \EM \C$, and let $\tau$ be the corresponding natural bijection
    \[
    \tau\colon \hom(L_C(e),f) \xrightarrow{\ee\cong} \hom(I(e), \lop_C(f)).
    \]

    Next, given a path embedding $e\colon \Pc \embed \lop(C)$, observe that it is an object of~\mbox{$\Pa_\C \emcm \lop(C)$}. Set
    \[ e'\colon \Pc' \embed C  \qtq{and} e_0\colon \Pc \to \lop(\Pc')\]
    to be $L_C(e)$ and the underlying morphism of the unit $\eta_e\colon I(e) \to \lop_C L_C(e)$ of the relative adjunction, respectively, i.e.\ $e_0 = \tau(\id\colon L_C(e) \to L_C(e))$. Observe that, the fact that $\eta_e$ is a morphism in $\Pa_\C \emcm \lop(C)$ says precisely that $e$ is equal to the composition
    \[ \Pc \xrightarrow{\ee{e_0}} \lop(\Pc') \xrightarrow{\lop(L_C(e))} \lop(C). \]
    For minimality of this decomposition, consider an alternative decomposition of~$e$
    \[ \Pc \xrightarrow{\ee{g_0}} \lop(\Qc) \xrightarrow{\lop(g_1)} \lop(C) \]
    where $g_1$ is a path embedding $g_1\colon \Qc \embed C$ in $\EM \C$. The fact that it is a decomposition of $e$ expresses the fact that $g_0$ is a morphism $I(e) \to \lop_C(g_1)$ in~$\EM \D \emcm \lop(C)$. Therefore, the morphism $\tau^{-1}(g_0)$ gives $L_C(e) \to g_1$ in $\Pa_\C \emcm C$. In other words, the underlying morphism of coalgebras of $\tau^{-1}(g_0)$ is of type $\Pc' \to \Qc$ and $e'$ is equal to $g_1 \circ \tau^{-1}(g_0)$ in~$\EM \C$.
\end{prf}

\begin{rem}
    For a functor \(\lop\) to be a parametric relative right adjoint means, by Lemma~\ref{l:rel-adj-univ} specialised to \(\lop_C\), that every path embedding \(P \emb \lop(C)\) has a decomposition as \mbox{\(e_0\colon P \to \lop(P')\)} followed by \(\lop(e_1)\) for some \(e_1\colon P' \emb A\) which is minimal in the following stronger sense. For any other such decomposition \(\lop(g_1)\circ g_0\) of \(e\), not only there is a unique \(h\) such that \(e_1 = g_1 \circ h\), as in \ref{ax:s2}, but moreover we have that \(g_0 = \lop(h) \circ e_0\).
    Consequently, Proposition~\ref{p:prr-adj} shows that minimal decompositions given by \ref{ax:s2} enjoy this stronger property.
\end{rem}

\bibliographystyle{alphaurl}
\bibliography{fvmc}

\end{document}